\pgfplotsset{compat=1.7}
\theoremstyle{plain}
\newtheorem{thm}{\protect\theoremname}
\theoremstyle{plain}
\newtheorem{claim}[thm]{\protect\claimname}
\theoremstyle{plain}
\theoremstyle{plain}
\newtheorem{lem}[thm]{\protect\lemmaname}
\theoremstyle{plain}
\theoremstyle{definition}
\newtheorem{defn}[thm]{\protect\definitionname}
\theoremstyle{definition}
\theoremstyle{definition}
\newtheorem{rem}{\protect\remarkname}
\theoremstyle{plain}
\providecommand{\claimname}{Claim}
\providecommand{\lemmaname}{Lemma}
\providecommand{\propositionname}{Proposition}
\providecommand{\theoremname}{Theorem}
\providecommand{\corollaryname}{Corollary}
\providecommand{\definitionname}{Definition}
\providecommand{\assumptionname}{Assumption}
\providecommand{\remarkname}{Remark}
\global\long\def\RR{\mathbb{R}}
\global\long\def\R{{\cal R}}
\global\long\def\N{{\cal N}}
\global\long\def\e{{\mathbf{e}}}
\global\long\def\q{{\mathbf{q}}}
\global\long\def\p{{\mathbf{p}}}
\global\long\def\poly#1{\mathrm{poly}\left(#1\right)}
\global\def\eqdef{\equiv}
\newcommand{\bv}[1]{\mathbf{#1}}
\newcommand{\Rbb}{\mathbb{R}}
\newcommand{\ignore}[1]{}
\title{A Quasi-Monte Carlo Data Structure for Smooth Kernel Evaluations}
\author{Moses Charikar\thanks{Supported by a Simons Investigator award.}\\Stanford University \and Michael Kapralov\thanks{Supported in part by ERC Starting Grant 759471.}\\EPFL \and Erik Waingarten\thanks{Part of this work was done as a postdoc at Stanford University, supported by an NSF postdoctoral fellowship and Moses Charikar's Simons Investigator award.}\\University of Pennsylvania}
	\gdef\xxxmark{%
		\expandafter\ifx\csname @mpargs\endcsname\relax 
		\expandafter\ifx\csname @captype\endcsname\relax 
		\marginpar{xxx}
		\else
		xxx 
		\fi
		\else
		xxx 
		\fi}
	\gdef\xxx{\@ifnextchar[\xxx@lab\xxx@nolab}
	\long\gdef\xxx@lab[#1]#2{{\bf [\xxxmark #2 ---{\sc #1}]}}
	\long\gdef\xxx@nolab#1{{\bf [\xxxmark #1]}}
\newcommand{\E}{\mathbb{E}}
\renewcommand{\e}{\epsilon}
\renewcommand{\p}{\mathbf{p}}
\renewcommand{\q}{\mathbf{q}}
\begin{document}

\maketitle

\begin{abstract}
In the kernel density estimation (KDE) problem one is given a kernel $K(x, y)$ and a dataset $P$ of points in a high dimensional Euclidean space, and must prepare a small space data structure that can quickly answer density queries: given a point $q$, output a $(1+\e)$-approximation to $\mu:=\frac1{|P|}\sum_{p\in P} K(p, q)$. The classical approach to KDE (and the more general problem of matrix vector multiplication for kernel matrices)  is the celebrated fast multipole method of Greengard and Rokhlin [1983]. The fast multipole method combines a basic space partitioning approach with a multidimensional Taylor expansion, which yields a $\approx \log^d (n/\e)$ query time (exponential in the dimension $d$).  A recent line of work initiated by Charikar and Siminelakis [2017] achieved polynomial dependence on $d$ via a combination of random sampling and randomized space partitioning, with Backurs et al. [2018]  giving an efficient data structure with query time $\approx \poly{\log(1/\mu)}/\e^2$ for smooth kernels. 

Quadratic dependence on $\e$, inherent to the sampling (i.e., Monte Carlo) methods above, is prohibitively expensive for small $\e$. This is a classical issue addressed by quasi-Monte Carlo methods in numerical analysis. The high level idea in quasi-Monte Carlo methods is to replace random sampling with a discrepancy based approach -- an idea recently applied to coresets for KDE by Phillips and Tai [2020]. The work of Phillips and Tai gives a space efficient data structure with query complexity $\approx 1/(\e \mu)$. This is polynomially better in $1/\e$, but exponentially worse in $1/\mu$. In this work we show how to get the best of both worlds: we give a data structure with $\approx \poly{\log(1/\mu)}/\e$ query time for smooth kernel KDE. Our main insight is a new way to combine discrepancy theory with randomized space partitioning inspired by, but significantly more efficient than, that of the fast multipole methods. We hope that our techniques will find further applications to linear algebra for kernel matrices. 
\end{abstract}

\if 0
In the kernel density estimation (KDE) problem one is given a kernel $K(x, y)$ and a dataset $P$ of points in a high dimensional Euclidean space, and must prepare a small space data structure that can quickly answer density queries: given a point $q$, output a $(1+\e)$-approximation to $\mu:=\frac1{|P|}\sum_{p\in P} K(p, q)$. The classical approach to KDE (and the more general problem of matrix vector multiplication for kernel matrices)  is the celebrated fast multipole method of Greengard and Rokhlin [1983]. The fast multipole method combines a basic space partitioning approach with a multidimensional Taylor expansion, which yields a $\approx \log^d (n/\e)$ query time (exponential in the dimension).  A recent line of work initiated by Charikar and Siminelakis'17 showed how to make the query time polynomial in the dimension.  For {\em smooth} kernels Backurs et al.'18 achieve query time $\approx 1/\e^2$, with only a polylogarithmic dependence on $1/\mu$. 

Quadratic dependence on $\e$ inherent to the above sampling (i.e., Monte Carlo) methods is prohibitively expensive for small $\e$, a problem that is often addressed by quasi-Monte Carlo methods in numerical analysis. In this work we design a quasi-Monte Carlo data structure that gives a $(1+\e)$-approximation to KDE with a  $\approx 1/\e$ query time for smooth kernels.  

The high level idea in quasi-Monte Carlo methods is to replace random sampling with a discrepancy based algorithm -- an approach recently used by Phillips and Tai'20 to obtain coresets for KDE. Applying discrepancy methods to achieve a $(1+\e)$-approximation to KDE requires a new blend of randomized space partitioning with discrepancy-based techniques. Our approach consists of two main steps. First, we introduce a new factorization for smooth kernel matrices to ensure a sufficiently strong discrepancy upper bound under a well-separatedness assumption on the dataset akin to that used in the fast multipole method of Greengard and Rokhlin. We then give a reduction to the well-separated case through randomized space partitioning. Crucially, in sharp contrast to fast multipole methods, our reduction incurs only a polynomial loss in the dimension of the dataset.  We hope that this property will lead to more applications of our approach in linear algebra for kernel matrices. 

\fi

\setcounter{page}{0}

\newpage 

\tableofcontents

\setcounter{page}{0}

\newpage

\newcommand{\sfK}{\mathsf{K}}
\newcommand{\Nbb}{\mathbb{N}}
\newcommand{\calQ}{\mathcal{Q}}
\newcommand{\rout}{r_{\mathrm{out}}}
\newcommand{\rin}{r_{\mathrm{in}}}
\newcommand{\rmin}{r_{\min}}

\section{Introduction}

In the \emph{kernel evaluation} problem, an algorithm receives as input a dataset $P$ of $n$ points $p_1, \dots, p_n \in \RR^d$ and must preprocess it into a small-space data structure that allows one to quickly approximate, given a query $q\in \RR^d$, the quantity
\begin{align}\label{eq:kde-intro}
\sfK(P,q):=\sum_{p\in P}\sfK(p,q).
\end{align}
where $\sfK(p,q)$ is the kernel function. In this paper, we will study positive definite (p.d) radial kernels $\sfK$. In particular, we consider kernel functions $\sfK \colon \Rbb^d \times \Rbb^d \to \Rbb_{\geq 0}$ given by a decreasing function of the distance, i.e., there exists a function $G \colon \Rbb_{\geq 0} \to \Rbb_{\geq 0}$ such that $\sfK(p, q) = G(\|p-q\|_2^2)$. The restriction that $\sfK$ is positive definite means that for any collection of points $x_1,\dots, x_m \in \Rbb^d$, the $m\times m$ kernel matrix whose $(i,j)$-entry is $\sfK(x_i, x_j)$ is positive definite. Some example of prominent p.d radial kernels include\footnote{The Gaussian kernel $\sfK(p,q) = \exp(-\|p-q\|_2^2/(2\sigma^2)$) is a widely-used p.d kernel, but not ``smooth'' (as per the definition in~\cite{BCIS18}).} 
the Cauchy kernel (or, more generally, the rational quadratic kernel) for any $\beta \geq 1$, and the Student-$t$ kernel, respectively,
\begin{align}\label{eq:smooth-ks}
\sfK(p,q):=\left(\frac{1}{1+\|p-q\|_2^2}\right)^{\beta} \qquad \text{and}\qquad \sfK(p, q) := \dfrac{1}{1 + \|p-q\|_2^t}.
\end{align}
A variety of kernels are used in applications~\cite{shawe2004kernel,RasmussenW06}, and kernel methods are a fundamental approach with numerous applications in machine learning, statistics and data analysis~\cite{fan1996local,scholkopf2001learning,joshi2011comparing,schubert2014generalized,genovese2014nonparametric,arias2015estimation,gan2017scalable}.

For example, {kernel density estimation} is a classic tool in non-parametric statistics, where a kernel function is applied to extrapolate a function specified on a discrete set of points to the entire space.  This is used in algorithms for {mode estimation}~\cite{GSM03},
{outlier detection}~\cite{schubert2014generalized}, {density based clustering}~\cite{rinaldo2010generalized} and other problems. 
Kernel methods (e.g., regression~\cite{szabo2016learning}) have also been applied to objects specified by point-clouds or distributions~\cite{gartner2002multi}, 
requiring summing up the kernel function between all pairs of points across two sets.
Another application is kernel mean estimation  using an empirical average (see ~\cite{muandet2017kernel}, section 3.4.2).

We are interested in fast and space efficient approximations algorithms for kernel evaluations. An estimator for $\hat{\sfK}(P, q)$ is a $(1+\e)$-approximation to $\sfK(P, q)$ if 
\begin{align*}
(1-\e) \cdot \sfK(P, q)\leq \hat{\sfK}(P, q)\leq (1+\e) \cdot \sfK(P, q).
\end{align*}

The kernel density estimation problem has received a lot of attention over the years, with a number of powerful algorithmic ideas leading to different precision/space/query time tradeoffs. The focus of this work is algorithms for the so-called ``high-dimensional'' regime: we will study algorithms whose complexity will depend at most polynomially, as opposed to exponentially in the underlying dimension. 

\paragraph{Prior Work: Fast Multipole Methods.} The celebrated fast multipole method \cite{fast-multipole}  can be used to obtain efficient data structure for kernel evaluation (see also the Barnes-Hut algorithm~\cite{barnes-hut}). However, this approach suffers from an exponential dependence on the dimensionality of the input data points: it provides a $(1+\e)$-approximation to kernel evaluation value using space $\approx n \log^{d} (n/\e)$  and query time $\approx \log^d (n/\e)$. 

In particular, the exponential dependence on the dimensionality is due to a (deterministic) space partitioning procedure (essentially building a quadtree) which is central to the fast-multipole method. More generally, this deficiency is shared by other tree-based methods for kernel density estimation~\cite{gray2001n,gray2003nonparametric,yang2003improved,lee2006dual,ram2009linear}. Methods based on polynomial approximations have recently been used to obtain fast algorithms for kernel graphs~\cite{AlmanCS020} as well as attention mechanisms in transformers~\cite{DBLP:journals/corr/abs-2302-13214} -- all these approaches are only efficient in relatively low dimensions. 

\paragraph{Prior Work: Sampling-based approaches (Monte-Carlo methods).} A recent line of work~\cite{charikar2017hashing,charikar2019multi-resolution,backurs2018efficient,backurs2019space,DBLP:conf/focs/CharikarKNS20} sought to design sublinear query-time algorithms for kernel density estimation while avoiding exponential dependencies in the dimension (thereby allowing these methods to be scaled to high-dimensional spaces).  These works parametrize the query time of the data structure in terms of the value $\mu = \sfK(P, q)/n$, and the goal is to achieve query times which are significantly faster than $O(d/(\e^2 \mu))$, which is what one achieves via uniform random sampling. Surprisingly,~\cite{CS17} showed how, for several kernels, one can reduce the query time to $O(d / (\e^2\sqrt{\mu}))$ by using the Locality-Sensitive Hashing framework of Indyk and Motwani~\cite{IndykM98}. Furthermore, \cite{backurs2018efficient} (and also ~\cite{DBLP:conf/focs/CharikarKNS20}) developed the approach further, and showed that for ``smooth'' kernels, i.e. kernels with polynomial decay, a multiplicative approximation can be obtained with just a {\em polylogarithmic dependence on $1/\mu$}. Specifically, they achieve a $(1+\e)$-approximation using space $\approx n \cdot \poly{\log (1/\mu)}/\e^2$ space and 
\begin{equation}\label{eq:query-sampling}
\text{query time}\approx \poly{\log (1/\mu)}/\e^2.
\end{equation}

\paragraph{Prior Work: Quasi-Monte Carlo (i.e., discrepancy-based) methods.} In the approaches above, the focus has always been on the dependence on $\mu$, and the additional factor of $1/\e^2$ a consequence of the sampling-based approach. More broadly, this quadratic dependence on $1/\e$ is unsurprising. It is common, in approaches via random sampling, to design an estimator and utilize Chebyshev's inequality to upper bound the probability that the estimator deviates from its expectation. On the other hand, in applications where the $1/\e^2$ dependence (coming from random sampling) is prohibitively expensive, a common approach in numerical analysis is to use quasi-Monte Carlo methods.

At a high level, these techniques are based on discrepancy theory. They seek to find a ``random-like'' set of samples (such that the estimator will approximate the expectation), and at the same time, minimize the ``random deviations'' that one expects from random sampling.
 In the context of kernel density estimation, this idea was used in the beautiful work of Phillips and Tai~\cite{PT20}, who used discrepancy theory to build a small coreset for kernel density estimation. As we detail in Section~\ref{sec:tech-overview}, standard subsampling approaches correspond to chosing the colors uniformly at random, but \cite{PT20} show that a coloring with much lower discrepancy can be constructed using Banaszcyk's theorem. Repeatedly `subsampling' using these low-discrepancy colorings, they give a data structure with space $\approx \poly{d}/(\e \mu)$ and
\begin{equation}\label{eq:query-coresets}
\text{query time} \approx \poly{d}/(\e \mu).
\end{equation}
This gives yet another way to improve on the $O(d/(\e^2\mu))$ query time that one achieves via uniform random sampling.\footnote{One should interpret the dimensionality $d$ as being $\omega(\log n) \leq d \leq n^{o(1)}$. This means that exponential dependencies on $d$ should be avoided (as they incur super-polynomial factors in $n$), but arbitrary polynomial dependencies are allowed.} In the work of~\cite{PT20}, the focus is on the dependence on $\e$ and the dependence on $1/\mu$ remains linear.\footnote{In fact, the data structure of~\cite{PT20} has a Las Vegas guarantee; it uses randomness to build the data structure, but then guarantees that all queries are correct.} 

\paragraph{Our contribution.} In this work, we show how to achieve the best of both~\eqref{eq:query-sampling} and~\eqref{eq:query-coresets} quantitatively. Namely, we show how one may use randomized partitioning techniques (like those in~\cite{BCIS18, DBLP:conf/focs/CharikarKNS20}) to obtain a poly-logarithmic dependence on $1/\mu$ for smooth kernels, and at the same time, a linear dependence on $1/\e$. The resulting data structure will obtain space complexity $\approx n \cdot \poly{d \log(1/\mu)} / \e$ and 
\[ \text{query time } \approx \poly{d\log(1/\mu)} / \e.  \]
At a qualitative level, our work gives new structural results and algorithmic techniques for both the sampling-based and quasi-Monte Carlo-based approaches, and we are hopeful that these techniques will prove useful for fast algorithms in related tasks for kernels in high-dimensional spaces. 

From the perspective of the sampling-based methods, our work shows that the quadratic dependence on $1/\e$ is not intrinsic to the randomized approaches for high-dimensions, and that quasi-Monte Carlo (discrepancy-based) techniques can be used to design kernel density estimators in high-dimensions. From the perspective of the quasi-Monte Carlo methods, our work shows that, if one allows randomized data structures, then randomized space partitioning can give exponential improvements on the $\mu$-dependence of discrepancy for smooth kernels; from $\poly{d}/(\e \mu)$ to roughly $\poly{ d\log(1/\mu)}/\e$. In what follows, we will formally state our results and some open problems, and give an outline of our techniques.


\paragraph{Our results.} Our focus is on developing fast data structures for ``smooth'' p.d radial kernels. We reproduce the definition of ``smooth'' kernel from \cite{BCIS18} below, and then we state our main result. As we will soon see, the smoothness condition will become important in the technical details. After discussing the main result, we will give a few open problems and highlight a few concrete challenges involved in obtaining improved $\e$-dependencies for non-smooth kernels (like the Gaussian kernel).

\begin{defn}[Smooth Function \cite{BCIS18}]\label{dfn:smooth}
A kernel $\sfK \colon \Rbb^d \times \Rbb^d \to \Rbb$ is $(L, t)$-smooth if for any three points $p_1, p_2, q \in \RR^d$ with $p_1 \neq q \neq p_2$, 
\begin{align*}
\max \left\{ \dfrac{\sfK(p_1, q)}{\sfK(p_2, q)},  \dfrac{\sfK(p_2, q)}{\sfK(p_1, q)} \right\} \leq L \cdot \left( \max\left\{\dfrac{\|p_1 - q\|_2}{\|p_2-q\|_2}, \dfrac{\|p_2-q\|_2}{\|p_1 - q\|_2} \right\}\right)^{t}.
\end{align*}
\end{defn}

\begin{rem}
We remark that the definition above encompasses kernels with a polynomial decay, such as, for example, the rational quadratic kernel and its variants with polynomial decay~\eqref{eq:smooth-ks}. While certainly less popular than the Gaussian kernel, the rational quadratic kernel is commonly listed among the standard covariance functions (i.e. kernels) in the literature on Gaussian processes. See, e.g., Section 4.2.1 of~\cite{RasmussenW06}, as well as Section 5.4 of the same book, which in particular presents settings where the rational quadratic covariance assumption leads to improvements of the Gaussian. 
\end{rem}

\ignore{Examples of smooth kernels include polynomial kernels of the form
$$
K(\p, \q)=\left(\frac1{1+\|\p-\q\|^2}\right)^t
$$
for some integer $t\geq 1$.

Our main result is}
\begin{thm}[Main result (Informal -- see Theorem~\ref{thm:main-thm})]\label{thm:main}
For every $\e\in (0, 1)$ and $\mu>0$ there exists a randomized data structure for kernel evaluation of $(L, t)$-smooth p.d radial kernels $\sfK$ with polynomial preprocessing time, 
\[ \text{space complexity: } n \cdot L \cdot (d\log(n\Phi/(\e\mu)))^{O(t)} /\e \qquad \text{and}\qquad\text{query time: } L \cdot (d \log(n\Phi/(\e\mu)))^{O(t)} / \e \]
which outputs a multiplicative $(1\pm \e)$-approximation to kernel evaluations whenever the kernel evaluation is at least $\mu n$, where $\Phi$ is the aspect ratio of the points.\footnote{The dependence on $\Phi$ seems necessary when making no assumptions on the smooth kernels. This dependence can be removed under a mild assumption on the decay of the kernel (which holds for the Cauchy and Student-$t$ kernels listed as examples).}
\end{thm}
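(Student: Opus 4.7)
The plan is to build a multi-scale data structure that reduces the problem to a ``well-separated'' case at each scale, and uses a discrepancy-based coreset (rather than uniform sampling) inside each well-separated cluster. First I would handle the \emph{well-separated case}: suppose all dataset points $p\in P$ lie in a ball of radius $\rin$ around some center $c$, while the query $q$ is at distance at least $\rout \gg \rin$ from $c$. Then the $(L,t)$-smoothness condition guarantees that $\sfK(p,q)$ depends on $p$ only through a ``soft'' interaction whose variation, as $p$ moves inside the cluster, is polynomially bounded by $(\rin/\rout)$. I would express $\sfK(p,q)$ as an inner product against a feature map $\phi(p)\in\Rbb^D$ of dimension $D=(d\log(n\Phi/(\e\mu)))^{O(t)}$ (for example, via a bounded-degree polynomial expansion whose error on $\|p-c\|\le \rin$ is $\e\cdot \sfK(p,q)$ thanks to smoothness), and then apply a Banaszczyk-type discrepancy bound to the vectors $\{\phi(p)\}_{p\in P}$ to produce a signed re-weighting (coreset) of size $\approx D/\e$ that estimates $\sum_{p\in P}\langle \phi(p),\psi(q)\rangle$ to additive error $\e\cdot |P|\cdot\max_p\sfK(p,q)$. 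This is the place where the discrepancy (quasi-Monte Carlo) gain of $1/\e$ over $1/\e^2$ shows up.

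Next I would reduce the general problem to the well-separated case by \emph{randomized space partitioning} at all distance scales $r$ in a geometric grid between $\rmin$ and $\rmin\cdot\Phi$, giving only $O(\log(\Phi/\rmin))$ relevant scales. At scale $r$, I impose a randomized partition (e.g., a random shifted grid or an LSH-style bucketing at radius $r$) so that each bucket is contained in a ball of radius $O(r)$, while the query $q$ is well-separated from most buckets that contribute meaningfully at that scale. Inside each bucket I store the discrepancy coreset from the previous paragraph, computed with respect to the local expansion around the bucket's center. At query time I walk over the scales, identify the ``inner'' buckets (at distance $\lesssim r$ from $q$, handled recursively or by brute force since they are few) and the ``outer'' well-separated buckets (for which I read off the stored coreset), and sum up the estimates.

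The key analytic step is the \emph{error aggregation}: I have to show that the additive coreset errors summed across all buckets and all scales add up to at most $\e\cdot\sfK(P,q)$. Here I use smoothness once again: at a scale $r$, the per-point kernel contribution to a far bucket is at most a polynomial factor larger than to the query, so the discrepancy error per bucket is controlled by the true kernel contribution of that bucket. Summing over at most $\poly{d\log(n\Phi/(\e\mu))}$ buckets per scale and $O(\log(n\Phi/(\e\mu)))$ scales, and using the assumption $\sfK(P,q)\ge \mu n$ to truncate negligible scales, yields a total relative error of $\e$. The space and query-time bounds follow by multiplying the per-scale coreset size $\poly{d\log(1/\mu)}/\e$ (with factor $L$ absorbed via smoothness constants) by the number of scales, both of which are polylogarithmic.

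The main obstacle I anticipate is in gluing the two ingredients together while keeping the discrepancy argument valid against an \emph{arbitrary} query $q$. Banaszczyk-type bounds control discrepancy against a fixed set of test vectors, but we need a guarantee uniform over a continuum of queries $\psi(q)$. The standard remedy is a net argument over queries combined with Lipschitz properties of the feature maps coming from smoothness, but the net size must be kept to $\poly{d\log(n\Phi/(\e\mu))}$, which forces us to carefully truncate the polynomial/feature expansion and to argue that for each scale only queries in a bounded region matter. A second subtlety is that the randomized partitioning at a given scale may split a ``heavy'' cluster across buckets, inflating the aggregate error; handling this likely requires a two-level scheme (coarse partition for far interactions, fine partition inside heavy regions) together with a union bound that is decoupled from $1/\e$, so the quasi-Monte Carlo improvement is preserved.
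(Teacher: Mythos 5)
Your high-level architecture (well-separated case handled by a discrepancy coreset, plus a multi-scale randomized partition, plus smoothness for error aggregation) matches the paper's, but the two load-bearing steps are not established, and the first one as written would fail. In the well-separated case you get the coreset from a \emph{finite-dimensional} truncated polynomial feature map of dimension $D=(d\log(n\Phi/(\e\mu)))^{O(t)}$ and claim a coreset of size $\approx D/\e$. This is the wrong accounting: Banaszczyk-type bounds do not charge the embedding dimension, they charge the product of the feature-vector norms (the $\gamma_2$-norm of the kernel matrix), and the number of halvings you can afford is (norm product $\times$ log factors)$/(\e\cdot\min_{p}\sfK(p,q))$. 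Moreover, with the only separation the partition can guarantee in high dimensions --- a relative gap of $\Omega(1/\sqrt d)$ --- a polynomial expansion accurate to additive $\xi$ needs degree roughly $t_0\cdot r_{\max}^2$ with $t_0\approx \ln(1/\xi)/(\rout-\rin)^2$, i.e.\ degree $\Theta(d\ln(1/\xi))$, so its tensor dimension is exponential in $d$; this is exactly the fast-multipole bottleneck the paper is trying to avoid. The paper instead keeps the infinite-dimensional Schoenberg/Hausdorff--Bernstein--Widder embedding, truncates the Laplace integral at $t_0$, and reweights the $k$-th tensor level by $\rho^{k}$ on data and $\rho^{-k}$ on queries; this leaves inner products (hence the kernel, up to $\xi$) unchanged but shrinks $\|\phi'(p)\|_2\|\phi'(q)\|_2$, giving a $\gamma_2$ bound of order $G\bigl((\rout-\rin)^2\rmin^2/(\rin^2\ln(1/\xi))\bigr)$, which is then compared to $G(4\rmax^2)$ via $(L,t)$-smoothness --- that comparison, not a dimension count, is where the $L\,(d\log(\cdot))^{O(t)}/\e$ coreset size comes from. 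Also, the additive error target you state, $\e\,|P|\max_p\sfK(p,q)$, exceeds the allowed $\e\sum_p\sfK(p,q)$; one needs the lower bound $\sfK(p,q)\ge G(4\rmax^2)$ inside each separated piece.

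On the reduction, a ``random shifted grid or LSH-style bucketing'' at each scale does not deliver the separation you need without exponential-in-$d$ loss (grid cells only separate non-adjacent cells, which is the $d^{\Omega(d)}$ issue the paper recalls). The paper's partition is ball carving with a Gaussian center and a \emph{random radius}: a fixed far pair is separated by a shell of width $\Theta(\alpha R/\sqrt d)$ only with probability $\Omega(1/\sqrt d)$, so the scheme must be repeated $O(\sqrt d\log n)$ times inside a recursive tree, points falling in the shell are replicated into both children (so $\alpha$ must be tuned to keep the tree size near-linear and the depth $O(\sqrt d\,\mathrm{polylog})$), and queries far from the current ball are handled separately by geometric annuli plus the observation that beyond distance $O(R\ln(1/\xi)/\e)$ the whole cluster can be replaced by its center. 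None of this bookkeeping (replication, depth, far-query cutoff) appears in your sketch, and your error aggregation leans on an unproved claim about far-bucket contributions. Finally, the obstacle you flag about uniformity over queries is not handled by a net in the paper: since the data structure is Monte Carlo per query, the self-balancing-walk guarantee (discrepancy $\lesssim\log(n/\delta)$ times the norm product against any \emph{fixed} test vector, with probability $1-\delta$) is invoked per query and union-bounded over the polynomially many coreset calls, so no discretization of the query space is needed.
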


The main conceptual contribution behind the result is a framework for combining discrepancy techniques with randomized space partitioning. To the best of our knowledge, ours is the first work that succeeds in combining these lines of work. At a technical level the two main innovations that we introduce are {\bf (a)} a strong bound on the discrepancy of the kernel matrix for smooth kernels under a geometric separability assumption akin to the one used in Fast Multipole Methods and {\bf (b)} a way to partition a dataset (and potential queries) into a small number of pieces that ensure separation. 

\paragraph{Dependence on $\Phi$.} We note that a dependence on the aspect ratio $\Phi$ of the dataset is also present in~\cite{BCIS18}, and seems necessary for general smooth kernels. For instance, a natural approach to removing it would be to {\bf (a)} remove data points that are $\poly{1/(\e \mu)}$ far from the query point from consideration and {\bf (b)} discretize points without significantly changing the kernel values but ensure that non-equal points have a minimum distance---{\bf (a)} and {\bf (b)} effectively upper bounds the aspect ratio $\Phi$ by upper bounding the maximum distance and lower bounding the minimum distance. However, this does not work without a closer look at the specific kernel function $\sfK$. For example, the counting kernel $\sfK(x, y) = 1$ for all $x,y$ would mean one cannot do {\bf (a)}. For {\bf (b)}, consider the kernel $\sfK(x, y)=\frac1{1+\sigma^2 \|x-y\|^2}$, where we let $\sigma$ go to infinity. This kernel is smooth as per our definition for every $\sigma$, independent of $L$. At the same time, the value of the kernel starts dropping sharply at $\|x-y\|_2\approx 1/\sigma$, so one should not give a discretization independent of $\sigma$. At the same time, the dependence on $\Phi$ can be removed for specific kernels such as the Cauchy or the $t$-Student kernel using the approach outlined above.

\paragraph{Dependence on $(d \log (n\Phi/(\e\mu)))^{O(t)}$.} Our techniques will naturally incur a poly-logarithmic dependence on $d \log (n\Phi/(\e\mu)$, where the specific power of the exponent will be $O(t)$ for $(L, t)$-smooth kernels $\sfK$. For the Cauchy kernel or Student-$t$ kernel with $\beta$ and $t$ being $O(1)$, the additional factors are poly-logarithmic. Interestingly, the dependence on the smoothness $t$ in~\cite{BCIS18} is $2^{O(t)}$, which becomes important once $t = \Theta(\log n)$. 

\if 0
{\color{red} Erik: After having changed the main conceptual contribution and approach, the following paragraph seems a bit abrupt... maybe it means we should move this to after the tech overview?} \xxx[MK]{I commented out the previous discussion -- it is better now? We could move it to some place in tech overview if we like.}
\fi

\if 0
Taking a broader perspective on kernel methods for high-dimensions, we are not aware of any prior work which adapt the feature embeddings to the specific dataset for improved algorithms. As an example, sampling techniques for the Nystr\"{o}m method~\cite{MM17}, random Fourier features~\cite{RR08, AKMMVZ17}, and sketching methods~\cite{ANW14, ACW17, ACW17b, AKKPVWZ20} always consider the feature embedding $\phi \colon \RR^d \to L_2$ which gives $\langle \phi(p),\phi(q)\rangle = \sfK(p, q)$ for all $p, q \in \RR^d$. Our approach, of dividing the dataset and adapting the feature embeddings to the various parts of the dataset, fits nicely within a recent line-of-work on ``data-dependent'' techniques for high-dimensional datasets~\cite{AINR14, AR15, ALRW17, ANNRW18, CKNS20, CJLW22}. We are hopeful that such techniques are applicable in other algorithmic contexts, and we highlight two concrete open problems where this approach may be useful:
\fi

\subsection{Future Directions and Open Problems} We hope that our techniques, which allow one to combine discrepancy methods with randomized space partitioning,  will prove instrumental in resolving other exciting questions in numerical linear algebra with kernel matrices. We mention two prominent questions here.

\begin{itemize}
\item \underline{Kernel density estimation for Gaussian Kernel.} Does there exist a data structure for kernel density estimation for the Gaussian kernel (i.e., $\sfK(p, q) = \exp(-\|p-q\|_2^2/2)$) with polynomial space complexity and query time $\poly{d} / (\e \cdot \mu^{0.99})$? The Gaussian kernel is not ``smooth'' so the approach from this paper would degrade to $(d \log(1/\mu))^{\Omega(d)} / \e$. A specific hurdle is that our partitioning techniques only ensure an $\Omega(1/\sqrt{d})$ relative separation of the query and dataset (here `relative' refers to the size of a bounding Euclidean ball - see Section~\ref{sec:tech-overview} for more details). However, a natural extension of our data-dependent embeddings to the Gaussian kernel incurs an exponential dependence on the inverse of this relative separation -- see Remark~\ref{rem:smoothness} for more details.

\item \underline{Fast Multipole Methods with a polynomial dependence on dimension.} In fast multipole methods one approximates a kernel matrix by first using a crude ($\ell_\infty$ ball carving) space partitioning to partition space into bounded regions, and then Taylor expands the kernel arounds centers of these regions to approximate  interactions between well-separated data points (see, e.g.,~\cite{fast-multipole}).  Both steps incur an exponential in the dimension loss: the former because a constant factor separation is ensured in the crude $\ell_\infty$ ball-carving used, which requires breaking an $\ell_\infty$ ball into $d^{\Omega(d)}$ balls of factor $\sqrt{d}$ smaller radius. The latter because a Taylor expansion has an exponential number of terms in the dimension of the dataset. The latter exponential loss was recently overcome by the work of~\cite{ahle2019oblivious}, who showed how to {\em sketch} the polynomial kernel (i.e., the Taylor expansion) with only a polynomial in $d$ dependence. However, the approach of~\cite{ahle2019oblivious} suffers from a polynomial dependence on the radius of the dataset, as they do not supply a corresponding space partitioning primitive. Our space partitioning methods are able to exploit only a $\Omega(\frac1{\sqrt{d}})$ relative separation, and only result in a polynomial in $d$ dependence: can our techniques be used together with~\cite{ahle2019oblivious} to optimally sketch kernel matrices?

\if  Consider a smooth, p.d radial kernel $\sfK \colon \RR^d \times \RR^d \to [0, 1]$. Does there exist an algorithm which receives a dataset of $n$ points $a_1,\dots, a_n \in \RR^d$, a label vector $b \in \RR^n$, and a regularization parameter $\lambda > 0$ and can output in time $n \cdot \poly{d, \log(n), 1/\e, 1/\lambda}$ a vector $\hat{x} \in \RR^n$ which is a $(1+\e)$-approximate minimizer of $\| \sfK x - b \|_2^2 + \lambda \cdot \langle \sfK x, x \rangle$?
\fi
\end{itemize}

\ignore{
Should we try to formulate the same approach we did for the Gaussian kernel as an open problem? I think this may alleviate a reviewer who just says that we don't improve on all regimes, and that our thing doesn't apply to Gaussian kernels... At the very least, we could highlight the fact that there are signficant challenges in going beyond smooth kernels here. \xxx[MK]{yes, sounds good!}\xxx[MC]{agreed}}

\ignore{\begin{rem}
We note that, in contrast to approaches to KDE based on random sampling, our data structure can be made Las Vegas, i.e. it can be made to report a failure whenever it fails as opposed to just return an incorrect answer. \xxx[MK]{details?}
\end{rem}}


\section{Technical Overview}\label{sec:tech-overview}
\newcommand{\Sbb}{\mathbb{S}}
\newcommand{\Zbb}{\mathbb{Z}}

In this section, we give an overview of the techniques involved. In order to highlight our main contributions, it will be useful to consider the case of the 2-Student kernel $\sfK\colon \Rbb^d \times \Rbb^d \to [0, 1]$ for concreteness. This kernel is given by
\[ \sfK(x, y) = \dfrac{1}{1 + \|x-y\|_2^2}. \]
 One receives a dataset $P \subset \Rbb^{d}$ of points and seeks to process them into a data structure to support kernel evaluation queries. Namely, a query $q \in \Rbb^{d}$ will come, and we want to estimate $\sum_{p \in P} \sfK(p, q)$ up to a factor of $1\pm \e$, and similarly to \cite{charikar2017hashing,charikar2019multi-resolution,backurs2018efficient,backurs2019space,DBLP:conf/focs/CharikarKNS20} we will parametrize our time and space complexity in terms of $\mu = (1/n)\sum_{p \in P}\sfK(p, q)$. 
 
We start by explaining two prior approaches which will be important ingredients in our scheme. First, a simple random sampling approach, and then the prior work of~\cite{PT20} which shows how to use discrepancy theory to obtain an improved dependence on $\e$. Then, we overview our approach. First, we show how we may improve on the discrepancy bounds when datasets are ``well-separated'' from a query, and then how to algorithmically utilize the improved discrepancy bounds in the worst-case. 
 
\paragraph{Random sampling as repeated coloring.} It is not hard to see that, since kernel values are always between 0 and 1, a uniformly random sample from $P$ of size $O(1/(\e^2 \mu))$ will approximate the kernel evaluation of any $q$ with probability at least $0.9$. More generally, one may take samples from an unbiased estimator of $(1/n)\sum_{p \in P} \sfK(p, q)$, and show that the estimate is good via bounding the variance of the estimator. If proceeding with this plan, the quadratic dependence on $1/\e$ is a consequence of using Chebyshev's inequality; since the probability that the estimate is off by more than $\e \mu$ (the quantity we want to minimize by taking more samples) becomes at most the variance divided by $\e^2 \mu^2$. To facilitate comparison with discrepancy-based approaches, we now sketch an alternative derivation of this result. And for this purpose it will be useful to instead think of repeatedly sampling data points in $P$ with probability $1/2$ and analyzing how errors in the corresponding KDE estimates accumulate.

Suppose that we would like to subsample the dataset $P$ to a subset $P'$ containing about half of the points in $P$ while preserving KDE value. It is convenient to think of this process as coloring the points in $P$ 
 $$
 \chi \colon P \to \{-1,1\},
 $$
 and then letting the `subsampled'  dataset $P'$ contain points that were colored $1$, say: 
 $$
 P'=\{p\in P: \chi(p)=1\}. 
 $$ 
We thus would like to find a coloring $\chi$ of $P$ such that
\begin{equation}\label{eq:subsample-disc}
\left|\frac1{|P|}\sum_{p\in P} \sfK(p, q)-\frac{2}{|P|} \sum_{p\in P'} \sfK(p, q)\right|\leq \e \mu.
\end{equation}

 The left hand side of~\eqref{eq:subsample-disc} can be expressed as

\begin{equation*}
\begin{split}
\left|\sum_{p\in P} \sfK(p, q)-2 \sum_{p\in P'} \sfK(p, q)\right|&=\left|\sum_{p\in P} \sfK(p, q)-2 \sum_{p\in P: \chi(p)=1} \sfK(p, q)\right|\\
&=\left|\sum_{p\in P: \chi(p)=-1} \sfK(p, q)-\sum_{p\in P: \chi(p)=1} \sfK(p, q)\right|\\
&=:\text{disc}_\sfK(P, \chi, q),
\end{split}
\end{equation*}
the discrepancy of coloring $\chi$ with respect to query $q$. Choosing $P'$ to be a uniformly random subset of $P$ containing every data point independently with probability $1/2$ amounts to a uniformly random coloring $\chi$ of $P$, and a simple calculation shows that 
$$
\text{disc}_\sfK(P, \chi, q)=O\left(\sqrt{\sum_{p\in P} \sfK(p, q)^2}\right)=O\left(\sqrt{\sum_{p\in P} \sfK(p, q)}\right)=O\left(\sqrt{\mu\cdot |P|}\right)
$$
for every kernel bounded by $1$, with constant probability. Substituting back into~\eqref{eq:subsample-disc} and taking the normalizing factor of $\frac1{|P|}$ into account, we get that the error introduced by subsampling is below $\e \mu$ as long as 
$$
\frac1{|P|}\cdot O\left(\sqrt{\mu\cdot |P|}\right)=O\left(\sqrt{\frac{\mu}{|P|}}\right)\ll \e\mu.
$$
This means that we can keep subsampling\footnote{One can verify that the total error induced by a sequence of recolorings is dominated by the error introduced in the last step -- see Section~\ref{sec:improved-coresets} for more details.} while $|P|\gg 1/(\e^2 \mu)$. This recovers the bound from Chebyshev's inequality\footnote{Of course, this derivation also uses Chebyshev's inequality in bounding the discrepancy of a random coloring; however,  as we show next, it readily generalizes to settings when the coloring is obtained by a more careful method than uniformly random choice.}.

\paragraph{Better colorings via Banaszczyk's theorem~\cite{PT20}.} Instead of selecting the coloring randomly,~\cite{PT20} note that since $\sfK$ is a p.d kernel with $\sfK(p, q)\leq 1$ for all $p, q$, there exists an embedding $\phi$ such that for every $p, q\in \mathbb{R}^d$ one has 
$$
\| \phi(p)\|_2\leq 1, \|\phi(q)\|_2\leq 1
$$
as well as 
$$
\sfK(p, q)=\langle \phi(p), \phi(q)\rangle.
$$
The existence of a coloring $\chi$ such that 
\begin{equation}\label{eq:basic-disc}
\text{disc}_\sfK(P, \chi, q)\leq O(\gamma_2(\sfK))= O(1)\cdot \left(\max_{p\in \mathbb{R}^d} \| \phi(p)\|^2_2\right)=O(1)
\end{equation}
for all $q$ then follows by Banaszczyk's theorem  -- see Theorem~\ref{thm:banaszczyk}.\footnote{Banaszcyk's theorem gives a distribution over (random) colorings which achieves a $O(1)$ discrepancy for each $q$ with very high probability, so in general, there is a $O(\sqrt{\log m})$-factor, where $m$ denotes the number of rows, i.e., queries, of the kernel matrix which one wishes to support.}  Here $\gamma_2(\sfK)$ is the $\gamma_2$-norm of the kernel matrix $\sfK$, which provides a commonly used route for upper bounding discrepancy -- see Section~\ref{sec:improved-coresets} for more details. Substituting this bound into~\eqref{eq:subsample-disc}, we get that the error introduced by subsampling is below $\e \mu$ as long as 
\begin{align}
\frac1{|P|}\cdot O(1)\geq 1/(\e\mu). \label{eq:repeat-halve}
\end{align}
This means that we can keep subsampling while $|P|\gg 1/(\e \mu)$. This is a quadratic improvement on the $\e$-dependence from random sampling, but far short of our goal of $\poly{\log(1/\mu)}/\e$. Furthermore, the bound on discrepancy provided by~\eqref{eq:basic-disc} is {\bf tight} in general.

\paragraph{Our approach: discrepancy bounds for well-separated datasets.}  In order to get around the tightness of the above bound for general datasets, we show that every dataset $P$ can be decomposed into a small number of datasets that are `nice' with respect to any fixed query $q\in \RR^d$. This decomposition is independent of the query, and relies on randomized space partitioning in high dimensions akin to locality sensitive hashing. We then design specialized feature embeddings for each element in this decomposition to establish a significantly stronger upper bound on the discrepancy of the corresponding sub-dataset with respect to $q$.

Our geometric assumptions are inspired by those in Fast Multipole Methods~\cite{fast-multipole}. In the Fast Multipole Methods, the space $\RR^d$ is \emph{deterministically} recursively partitioned with $\ell_\infty$-balls of geometrically decreasing diameter. The benefit of using $\ell_{\infty}$-balls is that they tile $\RR^d$, and whenever $p$ and $q$ belong to two different $\ell_{\infty}$-balls which are not adjacent (share a corner) at a particular radius, $p$ and $q$ are separated by at least the diameter. The downside is that, since Euclidean separation ultimately matters, one must decrease the radius by at least a constant (and, in fact, at least $\sqrt{d}$) factor at every level -- and this leads to an exponential dependence in the dimension as the tree encoding the recursive partition has degree $\exp(\Omega(d))$. Our recursive partitioning will be \emph{randomized} and use $\ell_2$-balls of randomly chosen radii. With such a partitioning scheme one can ensure a weaker separation, namely a relative $\Omega(\frac1{\sqrt{d}})$ separation. We show, however, that this suffices!  Specifically, we show strong discrepancy bounds when either 
\begin{description}
\item[{\bf (1)}] the dataset $P$ lies in a spherical shell and the query $q$ lies in a slightly larger spherical shell (see Fig.~\ref{fig:well-sep})

or

\item[{\bf (2)}] the query $q$ lies inside a spherical shell and dataset $P$ lies in a slightly larger spherical shell (the opposite of {\bf (1)}). 
\end{description}

This in particular is where we crucially use the assumption that the kernel is smooth -- see Remark~\ref{rem:smoothness} in Section~\ref{sec:improved-coresets} for more details.

  	\begin{figure}
	\centering
	\tikzstyle{vertex}=[circle, fill=black!70, minimum size=3,inner sep=1pt]
	\tikzstyle{svertex}=[circle, fill=black!100, minimum size=5,inner sep=1pt]
	\tikzstyle{gvertex}=[circle, fill=green!80, minimum size=7,inner sep=1pt]
	
	\tikzstyle{evertex}=[circle,draw=none, minimum size=25pt,inner sep=1pt]
	\tikzstyle{edge} = [draw,-, color=red!100, very  thick]
	\tikzstyle{bedge} = [draw,-, color=green2!100, very  thick]
	\begin{tikzpicture}[scale=1.7, auto,swap]

	\draw[pattern=north east lines, pattern color=blue] (0,0) circle (2cm);
	\fill[fill=white] (0,0) circle (1.73cm);
	\fill[fill=white] (0,0) circle (1.414cm);
	\draw[pattern=north west lines, pattern color=red] (0,0) circle (1.414cm);
\fill[fill=white] (0,0) circle (1cm);

	\draw (0,0) circle (1cm);
\draw (0,0) circle (1.414cm);
	\draw (1.8,1.8) node {$r_{\max}$};
	\draw[->, >=stealth] (1.7, 1.7) -- (1.5, 1.5);
	
	\draw (2.2,1.1) node {$\rout$};
	\draw[->, >=stealth] (2.1, 1.0) -- (1.6, 0.7);	

	\draw (0.3,0.8) node {$\rmin$};
	\draw (0.5,1.2) node {$\rin$};	
	
	\draw (2.4, -1.8) node {\small dataset $P$};
	\draw[->, >=stealth] (2.4, -1.7) -- (+0.9, -0.9);

	\draw (-2.4, -1.8) node {\small possible queries};
	\draw[->, >=stealth]  (-2.4, -1.7) -- (-1.4, -1.3);

\draw (0,0) circle (1.73cm);
\draw (0,0) circle (2cm);

	
	\node[vertex](v1) at (0, 0) {};  

	\end{tikzpicture}
	\caption{Illustration of the well-separated setup. The point $c \in \RR^d$ is the center of ball, which may be assumed to be the origin after a translation, and we consider two well-separated shells at radii between $[\rmin, \rin]$ and $[\rout, r_{\max}]$ centered at $c$, where $\rin < \rout$. We consider dataset points which lie in the inner shell, with distance between $\rmin$ and $\rin$ from $c$, and queries which lie in the outer shell, with distance between $\rout$ and $r_{\max}$ from $c$. This is the case we consider throughout the technical overview; the symmetric case when the dataset is inside a low radius shell and query is outside will be analogous.} \label{fig:well-sep}
\end{figure}
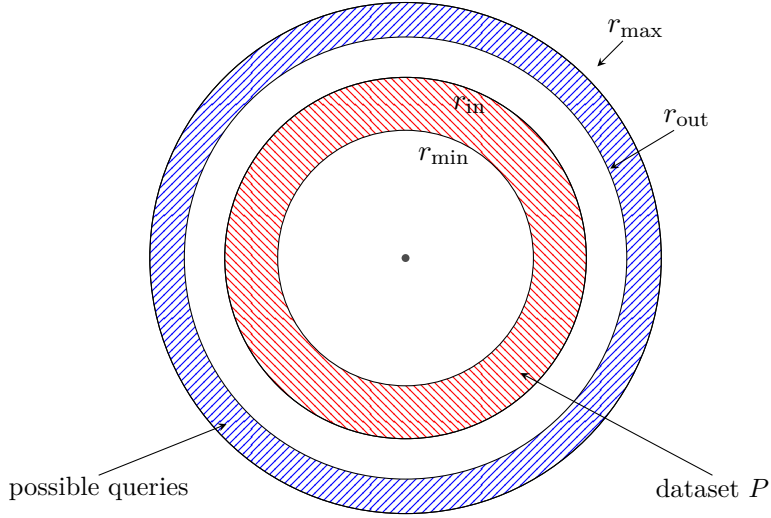

In what follows we give a more detailed outline of how our feature embeddings are constructed, using the $2$-Student kernel as a running example. We first define a basic feature embedding, then explain how to apply discrepancy theory to achieve good colorings via the $\gamma_2$ norm of the embedding and then talk about our modified feature embeddings that achieve strong discrepancy bounds in settings {\bf (1)} and {\bf (2)} above.

\paragraph{Feature Embeddings for Smooth Kernels.} The crucial property of positive definite kernels $\sfK$ is that they may be represented as inner products in a (potentially infinite dimensional) feature space. Consider the following explicit construction for the $2$-Student kernel, which proceeds by taking the inverse Laplace transform of $1 / (1+\|x-y\|_2^2)$ and a Taylor expansion of $e^{x}$:
\begin{align*}
\sfK(x,y) = \dfrac{1}{1 + \|x-y\|_2^2} &= \int_{t:0}^{\infty} e^{- t(\|x-y\|_2^2+1)} dt  = \int_{t:0}^{\infty} e^{-t \|x\|_2^2} e^{-t\|y\|_2^2} e^{2t\langle x, y\rangle} e^{-t} dt \\
						      &= \int_{t:0}^{\infty} e^{-t \|x\|_2^2} e^{-t\|y\|_2^2} \cdot e^{-t} \sum_{k=0}^{\infty} \frac{(2t)^{k}}{k!} \cdot \langle x^{\otimes k}, y^{\otimes k} \rangle dt.
\end{align*}
We may now consider an embedding $\phi$ which takes as input a vector $x \in \Rbb^{d}$ and outputs the (infinite-dimensional) function $\phi(x)$ whose inputs are a number $t \in [0, \infty)$, an index $k \in \Zbb_{\geq 0}$, and sets
\begin{align}
 \phi(x) (t, k) := e^{-t\|x\|_2^2} \sqrt{\dfrac{(2t)^k}{k!}} \cdot e^{-t/2} \cdot x^{\otimes k} \in \Rbb^{d^k}. \label{eq:emb-def}  
 \end{align}
This representation has the benefit that for all $x, y\in \mathbb{R}^d$ 
$$
\sfK(x, y)=\langle \phi(x), \phi(y) \rangle,
$$
where $\langle \phi(x), \phi(y) \rangle=\int_{t:0}^{\infty} \sum_{k=0}^{\infty} (\phi(x)(t, k))^{\intercal} (\phi(y)(t, k)) dt.$


\if 0 
With the embedding $\phi$ in mind, upon receiving as input the set $P$, one may analytically consider the vectors after having applied the embedding $\phi$. This connection to feature embeddings is not exploited in the hashing-based approaches (partly because they work for general kernels, whereas here we operate under the assumption that the kernel is positive semidefinite). Note that for any query $q \in \Rbb^{d}$, the kernel evaluation $\sum_{p \in P} \sf\sfK(p, q)$ can be equivalently written as $\sum_{p \in P} \langle \phi(p), \phi(q) \rangle$ by the feature embeddings, and then by linearity of inner products, can also be equivalently written as $\langle \sum_{p \in P} \phi(p), \phi(q) \rangle$. There are two techniques to utilize this connection for fast kernel evaluations:
\begin{itemize}
\item Discrepancy Theory: The approach of~\cite{PT20} aims to approximate the sum $\sum_{p \in P} \langle \phi(p), \phi(q) \rangle$ by only considering an appropriately scaled subset of the terms. Geometrically, one could think of balancing the (infinite-dimensional) vectors $\phi(p)$ for $p \in P$ by assigning signs $\chi_p$ to each $p$ such that the signed sum $\sum_{p} \chi_p \phi(p)$ is ``small'' in direction $\phi(q)$. The positive signs and negative signs partition $P$ into two groups, and by storing the smaller side, one reduces the points by a factor of $2$. 
\item Fast Multipole Methods~\cite{fast-multipole}: Another approach would be to explicitly store a representation of the vector $\sum_{p \in P} \phi(p)$. Then, whenever a query $q \in \Rbb^{d}$ comes in, all one needs to do is compute $\phi(q)$ and compute the inner product between $\phi(q)$ and $\sum_{p \in P} \phi(p)$. The problem becomes that $\sum_{p \in P} \phi(p)$ is infinite-dimensional, so one would need to effectively truncate the vector $\sum_{p \in P} \phi(p)$. 
This suggests the approach used in fast multipole methods, where one partitions the space such that one only consider ``far away'' parts and one can store coarse feature embeddings. However, this approach runs into a ``curse of dimensionality'' because the space partitions one needs to guarantee separation among most parts uses exponential-in-dimension many parts (and, in fact, the size of the embedding itself, which is simply a truncated Taylor expansion, is also exponential in the dimension).
\end{itemize}
\fi

\paragraph{Discrepancy on the Feature Space.} The key to applying discrepancy minimization algorithms is understanding the so-called $\gamma_2$-norm of the kernel matrix, since this will govern the discrepancy that we may achieve (and hence the number of times that we may halve the dataset). Consider the kernel matrix $A = (\sfK(q, p))$ where rows are indexed by a set of possible queries $Q$ and columns are indexed by dataset points $P$. The $\gamma_2$-norm of $A$ is the minimum, over all factorization of $A$ into $U V$, where $U$ is a $|Q| \times d'$ matrix and $V$ is a $d' \times |P|$ matrix, of the maximum row norm of $Q$ times the maximum column norm of $P$. By placing $\phi(q)$ as the rows of $U$ for each potential query in $Q$ and placing $\phi(p)$ as the columns of $V$ for each dataset point in $P$, we obtain that
\[ \gamma_2(A) \leq \max_{q \in Q} \| \phi(q)\|_2 \cdot \max_{p \in P} \| \phi(p) \|_2.\]
By construction, $\|\phi(q)\|_2^2 = \langle \phi(q), \phi(q) \rangle = \sfK(q, q)$ which equals $1$ for any $q$ (and similarly $p$) -- this gives the bound on the $\gamma_2$-norm used by~\cite{PT20}. As we show below, however, significantly stronger upper bounds on the $\gamma_2$-norm can be obtained if the dataset is well-separated from the query in an appropriate way -- this, coupled with a new hashing-based procedure for reducing to the well-separated setting, will lead to our improved bounds.

\paragraph{Modifying the Feature Embedding.} Suppose that every dataset point $p$ in $P$ was contained within a shell of inner radius $\rmin$ and outer radius $\rin$, and that every query $q$ in $Q$ which we will consider was contained within a shell of inner radius $\rout$ (which is larger than $\rin$) and outer radius $r_{\max}$ (the symmetric case when the query is inside a shell close to the origin and the dataset points are in a shell far from the origin will be analogous). See Fig.~\ref{fig:well-sep} for an illustration.

  In Section~\ref{sec:feature-embeddings}, we show that a configuration gives an improved bound on the $\gamma_2$ norm of the kernel matrix. Indeed, the fact that $q$ has norm which is at least $\rout$ and every $p$ in $P$ has norm which is at most $\rin$ guarantees that the points $p$ and $q$ are not too close to each other, i.e., 
\begin{equation}\label{eq:pq-sep}
\|p - q\|_2 \geq \rout - \rin.
\end{equation}
For example, if we could support a small additive error $\xi > 0$ (which we will later incur a logarithmic dependence, so we will set $\xi$ to $\e \mu/n$), then, it suffices to ``cut off'' the feature embedding at 
\[ t_0 := O\left(\frac{\ln(1/\xi)}{(\rout - \rin)^2}\right), \]
 because for any such ``well-separated'' pair of points $p, q \in \Rbb^d$,
\begin{align*}
\int_{t:t_0}^{\infty} e^{-t(\|p-q\|_2^2+1)} dt \leq e^{-t_0(\rout - \rin)^2} \int_{t:t_0}^{\infty} e^{-t} dt \leq \xi.
\end{align*}
Once we introduce this change, we will exploit the fact that $\|p\|_2 \in [\rmin, \rin]$ and $\|q\|_2 \in [\rout, r_{\max}]$ in order to modify the embedding to $\phi'$ such that the norm of $\phi'(p)$ will not increase too much, but the norm of $\phi'(q)$ will decrease a significant amount. Overall, we will show that $\|\phi'(p)\|_2 \|\phi'(q)\|_2$, which gives an upper bound on the $\gamma_2$-norm of the matrix, will be much smaller. In particular, for a setting of $\rho > 1$, we introduce the change 
\begin{equation}\label{eq:disc-ub}
 \phi'(p)(t, k) = \phi(p)(t, k ) \cdot \rho^{k} \quad\text{and}\quad \phi'(q)(t, k) = \frac{\phi(q)(t, k)}{ \rho^k},  
 \end{equation}
and $\phi'$ certifies an improved bound on the $\gamma_2$-norm of an additive $\xi$-perturbation of kernel matrices. In particular, we upper bound the product of $\|\phi'(p)\|_2\|\phi'(q)\|_2$ while using the fact that $\|p\|_2 \in [\rmin, \rin]$ and $\|q\|_2 \in [\rout, r_{\max}]$. We point the reader to Section~\ref{sec:feature-embeddings} with $G(t) = (1+t)^{-1}$, where we show that this product can be at most,
\begin{align} 
\|\phi'(p) \|_2 \| \phi'(q)\|_2 \lesssim \frac{\ln(1/\xi) \cdot \rin^2}{(\rout - \rin)^2 \cdot \rout \cdot \rmin} \leq \frac{\ln(1/\xi) \cdot \rin}{(\rout - \rin)^2 \cdot \rmin}. \label{eq:discrep}
\end{align}
The above bound gives us the upper bound on the discrepancy that we will achieve, and this will dictate how many times we may half the dataset and incur at most $\e \mu$ error. Importantly, since the query $q$ and every dataset point $p$ considered (inside the shell) satisfies by~\eqref{eq:pq-sep}, we have a lower bound on what each point $p$ from the shell contributes to the kernel evaluation to a query $q \in Q$,
$$
\mu \geq \min_{p\in P, q\in Q} \sfK(p, q)\geq \frac1{1+(\rin + r_{\max})^2},
$$
since $\rin + r_{\max}$ is an upper bound on the maximum distance from $q$ to $P$. We will be able to ensure that $r_{\max}=O(\rout)$, $\rin / \rmin = O(1)$, and that $\rout - \rin \geq \Omega(\rout / \sqrt{d})$ (we expand on why in the next subsection). Overall, this means that 
$$
\mu\geq\Omega\left(\frac1{1+\rout^2}\right).
$$
Since we had $\rin / \rmin = O(1)$ and $(\rout - \rin)^2 \geq \rout^2 / d$, when we use the discrepancy bound in (\ref{eq:discrep}), each step of halving incurs error $O(\ln(1/\xi) \cdot \rout^2/d)$, and similarly to the discussion in (\ref{eq:repeat-halve}), we may continue decreasing the dataset until 
\[ \frac{1}{|P|} \cdot O\left( \frac{\ln(1/\xi) \cdot d}{\rout^2} \right) \lesssim \frac{\e}{1+\rout^2}  \qquad \Longrightarrow \qquad |P| \gtrsim  \frac{d\ln(1/\xi)}{\e}.\]


Even though we have specialized the discussion to the 2-Student kernel, we use a characterization of positive definite radial kernels due to Schoenberg in order to use the above embeddings in general. We note that the smoothness assumption come in the following way. Note that the $\gamma_2$-norm bound depends on how $\rmin, \rin$ and $\rout$ relate to each other, which will factor into the number of times we may halve the dataset while incurring at most $\e \mu$ in the error. This must be compared to $\sfK(p, q)$ (which depends on $\| p - q\|_2$) so that the additive error can be absorbed into $\e \cdot \sfK(P, q) / |P|$. (See also, Remark~\ref{rem:smoothness}.)

\begin{rem}
Taking a broader perspective on kernel methods for high-dimensions, we are not aware of any prior work which adapt the feature embeddings to the specific dataset for improved algorithms. As an example, sampling techniques for the Nystr\"{o}m method~\cite{MM17}, random Fourier features~\cite{RR08, AKMMVZ17}, and sketching methods~\cite{ANW14, ACW17, ACW17b, AKKPVWZ20} always consider the feature embedding $\phi \colon \RR^d \to L_2$ which gives $\langle \phi(p),\phi(q)\rangle = \sfK(p, q)$ for all $p, q \in \RR^d$. Our approach, of dividing the dataset and adapting the feature embeddings to the various parts of the dataset, fits nicely within a recent line-of-work on ``data-dependent'' techniques for high-dimensional datasets~\cite{AINR14, AR15, ALRW17, ANNRW18, CKNS20, CJLW22}, and we are hopeful that such techniques are applicable in other algorithmic contexts
\end{rem}

\paragraph{Maintaining Separation for Coresets.} It remains to show that we can build a data structure which always constructs coresets while guaranteeing a separation between queries and dataset points. The idea is to proceed via ball carving of randomly chosen radii. Suppose, for instance, that all dataset points $P$ lie within a ball of radius $R > 0$, and let $c$ denote the center of that ball. We consider three cases, corresponding to where the (unknown) query $q$ may be in comparison to the ball. The first two cases are relatively simple to handle, and most of the work involves the last case.

\paragraph{Case 1 (easy): $q$ is much farther than $R / \e$ from $c$.} Then, since any $p$ lies within distance $R$ from $c$, one may use the triangle inequality to conclude that the kernel evaluation of $\sfK(q, p)$ and $\sfK(q, c)$ is the same up to $1\pm\e$. In this case, a data structure just needs to remember the center $c$ and the size of the dataset $|P|$, so that one can output $|P| \cdot \sfK(q, c)$ to approximate $\sfK(P, q)$. See Definition~\ref{def:far-radius} and Claim~\ref{cl:far-radius}.

\paragraph{Case 2 (relatively easy): $q$ is farther than $3R$ but within $R/\e$ of $c$.} In this case, we can utilize the coreset construction. Indeed, we ``guess'' the distance from $q$ to $c$ (for which there are at most $O(\log(1/\e))$ many choices). Let $R' \geq 3R$ a guessed distance such that $\|q - c\|_2 \approx 3R'$. Pick a randomly chosen point $c'$ drawn from $B_2(c, R'/2)$ uniformly and use that as our new ``center.'' Every point in $P$ will be within a ball of radius $(R'/2 + R) \leq 3R' / 2$ from $c'$ (by the triangle inequality), and at distance at least $R' / 100$ from $c'$ (with high probability for $d = \omega(\log n)$). In addition, the query is within distance at least $5R'/2$ from $c'$. Setting $\rmin = R'/100$, $\rin = 3R'/2$, and $\rout = 5R'/2$, we can always guarantee an upper bound on the $\gamma_2$-norm of $O(1/(R')^2)$. In addition, the $\sfK(q, p) \geq O(1/(R')^2)$, so the repeated halving technique of \cite{PT20} will give the desired coreset. This is handled in Section~\ref{sec:process-and-query-far}.

\paragraph{Case 3 (the difficult case).} This case occurs when $q$ is within $3R$ from $c$ -- see Fig.~\ref{fig:carving} for an illustration. Here, we want to partition the space such that, as in the fast multipole method, we can guarantee some amount of separation, without the exponential dependency obtained from partitioning into cubes (as is usually done in the fast multipole methods). Consider a point $p$ from $P$ and a query $q$ such that $\| p - q \|_2 = \Omega(R)$. We will consider the following randomized partition. First, we sample a random point $c'$ within $O(R)$ from $c$, and then we sample a (random) radius $r$ on the order of $R$. The hope is that $p$ falls inside the ball around $c'$ of radius $r$, and that $q$ falls outside the ball. In order to apply the improved discrepancy bound for well-separated datasets, we must also fit a shell inner radius $\rin$ and outer radius $\rout$, where $\rout - \rin = \Omega(R / \sqrt{d})$. We point the reader to Lemma~\ref{lem:hash}, where we show this partitioning procedure separates each $p \in P$ with $\|p - q\|_2 = \Omega(R)$ with probability at least $\Omega(1/\sqrt{d})$, so after repeating $O(\sqrt{d} \log n)$ times, we are guaranteed to have separated every dataset point $p$ at distance $\Omega(R)$ from $q$. 


	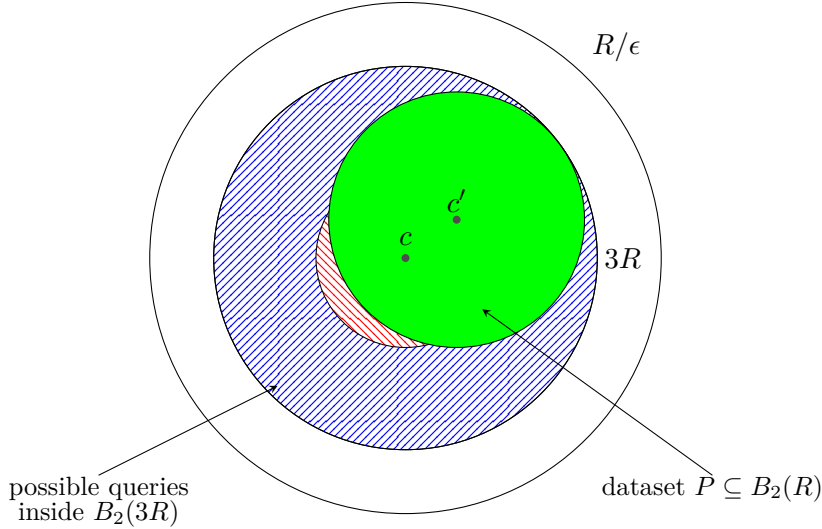
\begin{figure}
	\centering
	\tikzstyle{vertex}=[circle, fill=black!70, minimum size=3,inner sep=1pt]
	\tikzstyle{svertex}=[circle, fill=black!100, minimum size=5,inner sep=1pt]
	\tikzstyle{gvertex}=[circle, fill=green!80, minimum size=7,inner sep=1pt]
	
	\tikzstyle{evertex}=[circle,draw=none, minimum size=25pt,inner sep=1pt]
	\tikzstyle{edge} = [draw,-, color=red!100, very  thick]
	\tikzstyle{bedge} = [draw,-, color=green2!100, very  thick]
	\begin{tikzpicture}[scale=1.7, auto,swap]

	\draw (0,0) circle (2cm);	
	\draw[pattern=north east lines, pattern color=blue] (0,0) circle (1.5cm);
	\fill[fill=white] (0,0) circle (0.7cm);

	\draw[pattern=north west lines, pattern color=red] (0.0,0.0) circle (0.7cm);
	
	\draw[fill=green, opacity=0.5] (0.4,0.3) circle (1cm);
	
	\draw (0.0,0.15) node {$c$};	
	\draw (0.4,0.45) node {$c'$};

	\draw (0,0) circle (1.5cm);
	\draw (1.65,1.65) node {$R/\e$};
	\draw (1.7,0) node {$3R$};
	
	\draw (2.4, -1.8) node {\small dataset $P\subseteq B_2(R)$};
	\draw[->, >=stealth] (2.4, -1.7) -- (+0.6, -0.4);

	\draw (-2.4, -1.8) node {\small possible queries};
	\draw (-2.4, -2.0) node {\small inside $B_2(3R)$};
	
	\draw[->, >=stealth]  (-2.4, -1.7) -- (-1.0, -1.0);


	
	\node[vertex](v1) at (0, 0) {};
	\node[vertex](v2) at (0.4, 0.3) {};	

	\end{tikzpicture}
	\caption{Illustration of ball-carving with balls of radius $O(R)$ when query is at distance at most $3R$ from $c$.} \label{fig:carving}
\end{figure}

In the data structure, we sample a ball and consider the points within the inner-shell and the points outside the outer shell. We construct a coreset for the points inside the inner-shell, and we will use these coresets to evaluate queries which come outside the outer shell (to guarantee separation). Whenever queries come outside the outer shell, the points inside the inner shell are ``captured,'' and the coreset approximates their contribution. For these queries, it remains to recurse on the dataset points which were not captured by the coreset. Similarly, we construct a coreset of the dataset points which evaluate queries which fall inside the inner shell (again, to guarantee separation), and we must recurse on the dataset points outside the outer shell. This recursive partitioning scheme is done in Algorithm~\ref{alg:preprocess}.

Note that there is a small technical issue arising, since the points which fall within radius $\rin$ and $\rout$ from $c'$ are replicated. While this may naively blow up the space of the data structure, the probability that any dataset point falls inside this shell can be made small. The is done by decreasing $\rout - \rin$ to $\alpha R / \sqrt{d}$, at a cost of an increase in the coreset size.

\paragraph{Organization.} In the rest of the paper we first present the formal construction and analysis of our improved coresets for well-separated points in Section~\ref{sec:improved-coresets}.  We then present the details of our reduction to the case of well-separated datasets and the final algorithm in Section~\ref{sec:alg}.


\section{Structural Result: Improved Coresets for Well-Separated Shells}\label{sec:improved-coresets}

\newcommand{\indi}{\mathbf{1}}
\newcommand{\rmax}{r_{\max}}
\newcommand{\rrmax}{\mathsf{r}_{\max}}
\newcommand{\newcen}{\mathsf{newcen}}
\newcommand{\attr}{\mathsf{attr}}
\newcommand{\rrmin}{\mathsf{r}_{\min}}

Given a p.d kernel $\sfK$, the ``kernel trick'' refers to the fact there exists a feature embedding $\phi$ mapping $\Rbb^d$ into a much larger and possibly infinite-dimensional space $\Rbb^{d'}$ where $\sfK(x, y) = \langle \phi(x), \phi(y) \rangle$ for any $x, y \in \Rbb^d$. \ignore{At a high level, one may imagine the ``infinite'' kernel matrix $\sfK$, whose rows and columns are indexed by all points in $\Rbb^d$, and the factorization $\sfK = \Phi \Phi^{\intercal}$, where $\Phi$ has infinitely many rows indexed by all points in $\Rbb^d$ and $d'$ columns such that the $x$-th row of $\Phi$ is $\phi(x)$. }As~\cite{PT20} show, whenever $\sfK(x,x) = 1$ for all $x$, the existence of such a feature embedding $\phi$, as well as discrepancy minimization algorithms are useful for constructing coresets for kernel evaluation. 

The approach proceeds as follows: one receives a dataset $P \subset \Rbb^d$ and wants to support kernel evaluation queries for a finite set of queries $Q \subset \Rbb^d$.\footnote{\cite{PT20} show how to discretize $Q$ to only need to consider $\exp(d)$-sized query sets, but since we will allow queries to fail with a small probability, the discretization will not be an issue in this work.} Then, consider the $|Q| \times |P|$ kernel matrix $A = (a_{qp})$ where the $(q,p)$-entry is $\sfK(q,p)$. Notice that $A \cdot \indi \in \Rbb^{|Q|}$ (where $\indi \in \Rbb^{|P|}$ is the all-$1$'s vector) is the vector of kernel evaluations at each of the queries in $Q$. If one finds a vector $\chi \in \{-1,1\}^{|P|}$ where $\| A \cdot \chi \|_{\infty} \leq \alpha$, then considering the partition of $P$ into $P_{+}$ and $P_{-}$ according to whether a $p \in P$ has $\chi_{p} = 1$ (in which case it belongs to $P_{+}$, and $P_{-}$ otherwise) at least one of the subsets is smaller than $|P|/2$ and for any $q \in Q$, one may approximate the kernel evaluation by only considering that subset. Formally, for both sets $S$ being $P_{+}$ and $P_{-}$,
\[ \left| 2 \sum_{p \in S} \sfK(p, q) - \sum_{p \in P} \sfK(p, q) \right| = \left| \sum_{p \in P_{+}} \sfK(p, q) - \sum_{p \in P_{-}} \sfK(p, q) \right| = (A \chi)_{q} \leq \alpha. \]
In essence, one decreases the size of the dataset by a factor of $2$ and incurs an additive error of $\alpha$ on the kernel evaluation. One can bound $\alpha$ using discrepancy theory.
\begin{defn}[$\gamma_2$-norm of a matrix]
For a matrix $A \in \Rbb^{m \times n}$, the $\gamma_2$-norm of $A$ is given by
\[ \gamma_2(A) = \inf\left\{ \| u_i \|_2 \cdot \| v_j \|_2 : U V = A, \text{ $u_1,\dots, u_m$ are rows of $U$ and $v_1,\dots, v_n$ columns of $V$}\right\}. \]
\end{defn}

\begin{thm}[Banaszczyk's Theorem]\label{thm:banaszczyk}
For any matrix $A \in \Rbb^{m \times n}$, 
\[ \min_{\chi \in \{-1,1\}^n} \| A \cdot \chi \|_{\infty} \lesssim \gamma_2(A) \cdot \sqrt{\log m}. \]
\end{thm}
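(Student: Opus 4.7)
My plan is to deduce this from the classical convex-body form of Banaszczyk's vector balancing theorem, which states: for any symmetric convex body $K \subset \Rbb^{d'}$ with standard Gaussian measure $\gamma_{d'}(K) \geq 1/2$ and any vectors $w_1, \dots, w_n \in \Rbb^{d'}$ with $\|w_j\|_2 \leq 1$, there exist signs $\chi \in \{-1,1\}^n$ such that $\sum_j \chi_j w_j \in cK$ for a universal constant $c$. The key observation is that bounding $\|A\chi\|_\infty$ amounts to forcing $V\chi$ to lie inside a polytope built from the rows of $U$, so once one factors $A = UV$ achieving $\gamma_2(A)$, the desired discrepancy bound becomes a question about the Gaussian measure of that polytope.

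Concretely, I would first fix a factorization $A = UV$ with $U \in \Rbb^{m \times d'}$ and $V \in \Rbb^{d' \times n}$ attaining $\gamma_2(A)$, and normalize so that every row $u_i$ of $U$ satisfies $\|u_i\|_2 \leq 1$ and every column $v_j$ of $V$ satisfies $\|v_j\|_2 \leq \gamma_2(A)$. Since $(A\chi)_i = \langle u_i, V\chi\rangle$, one has
\[ \|A\chi\|_\infty = \max_{i \in [m]} |\langle u_i, V\chi\rangle|, \]
so the target bound becomes $V\chi \in s \cdot K$, where $K := \{ y \in \Rbb^{d'} : |\langle u_i, y\rangle| \leq 1 \text{ for all } i \in [m]\}$ is a symmetric convex body and $s$ is a threshold to be chosen on the order of $\gamma_2(A)\sqrt{\log m}$.

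The next step is to verify that $sK$ (for $s$ a scalar to be chosen) has Gaussian measure at least $1/2$. If $g \sim \N(0, I_{d'})$, then each $\langle u_i, g\rangle$ is a centered Gaussian of variance $\|u_i\|_2^2 \leq 1$, so $\Pr[|\langle u_i, g\rangle| > s] \leq 2\exp(-s^2/2)$. A union bound over $i \in [m]$ gives $\Pr[g \notin sK] \leq 2m\exp(-s^2/2)$, which is at most $1/2$ as long as $s \geq C\sqrt{\log m}$ for an absolute constant $C$. Fixing $s$ at this value, we have $\gamma_{d'}(sK) \geq 1/2$ as required.

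Finally, I would apply the convex-body Banaszczyk theorem to the rescaled columns $v_j/\gamma_2(A)$ (which have $\ell_2$ norm at most $1$) with the body $sK$: there exist signs $\chi \in \{-1,1\}^n$ such that $\sum_j \chi_j v_j/\gamma_2(A) \in c \cdot sK$, i.e., $\max_i |\langle u_i, V\chi\rangle| \leq c s \gamma_2(A) = O(\gamma_2(A) \sqrt{\log m})$, which is the claimed bound. The ``hard part'' of this plan is really just identifying the right convex body to feed into Banaszczyk; once $K$ is taken to be the intersection of slabs indexed by the rows of $U$, the Gaussian measure bound follows from a routine union bound, and the convex-body Banaszczyk theorem, which is itself deep but used here as a black box, finishes the job.
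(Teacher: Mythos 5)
Your derivation is correct, and it is worth noting that the paper itself does not prove Theorem~\ref{thm:banaszczyk} at all: it is quoted as a known black-box result (Banaszczyk's theorem in its $\gamma_2$-norm form, as used by Phillips and Tai and in the factorization-norm/hereditary-discrepancy literature). Your argument is precisely the standard reduction from that literature: take a (near-)optimal factorization $A=UV$, rescale so the rows $u_i$ of $U$ have $\|u_i\|_2\leq 1$ and the columns $v_j$ of $V$ have $\|v_j\|_2\leq\gamma_2(A)$, feed the slab intersection $K=\{y:|\langle u_i,y\rangle|\leq 1\ \forall i\}$ scaled by $s=C\sqrt{\log m}$ into Banaszczyk's convex-body vector-balancing theorem, and use the Gaussian union bound to certify $\gamma_{d'}(sK)\geq 1/2$. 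The steps all check out; the only cosmetic points are that the infimum defining $\gamma_2$ need only be approached (take a factorization within a factor $1+\epsilon$ and absorb it into the implicit constant), and that $sK$ may be unbounded if the $u_i$ do not span $\Rbb^{d'}$, which is harmless since Banaszczyk's theorem applies to symmetric convex sets of Gaussian measure at least $1/2$ (or one can intersect with a huge ball without disturbing the measure bound). So your proposal supplies a valid proof of a statement the paper deliberately leaves to a citation.
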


The final ingredient is showing that $\gamma_2(A) \leq 1$, which is a simple consequence of the existence of $\phi$. In particular, consider the $|Q| \times d'$ matrix $U$ whose rows correspond to $\phi(q)$ for $q \in Q$, and the $d' \times |P|$ matrix whose columns correspond to $\phi(p)$ for $p \in P$. Then, $A = UV$ since $\phi$ is a feature embedding of $\sfK(\cdot, \cdot)$, and the maximum row norm of $U$ and column norm of $V$ is at most $1$, because $\| \phi(q) \|_2^2 = \sfK(q,q) = 1$ and similarly for $p$. 

In this section, we will do two things. 
\begin{enumerate}
\item First, we explicitly construct new feature embeddings for p.d radial kernels $\sfK\colon \Rbb^d \times \Rbb^{d} \to [0, 1]$. For parameters $0 < \rin < \rout$, as well as an additive error $\xi > 0$ (think of $\xi$ as extremely small, since we will depend logarithmically on $1/\xi$), our new feature embedding $\phi$ will only map vectors $z \in \Rbb^d$ which satisfy $\| z\|_{2} \notin [\rin, \rout]$, i.e., they avoid a shell of outer radius $\rout$ and inner radius $\rin$ around the origin. Whenever $\|y \|_2 \leq \rin$ and $\|x\|_2 \geq \rout$, then $\langle \phi(x) ,\phi(y) \rangle$ will be up to $\pm \xi$ the same as $\sfK(x,y)$, yet $\| \phi(x) \|_2 \cdot \|\phi(y)\|_2$ will be smaller than $1$.
\item Second, we use the technique of~\cite{PT20} to repeatedly halve the dataset and construct the coreset whenever a query and dataset will be separated by a shell of inner radius $\rin$ and outer radius $\rout$, and in addition between $\rmin$ and $\rmax$ from the origin. One (minor) difference is that we utilize the self-balancing walk of \cite{ALS21} instead of algorithmic versions of Banaszczyk's theorem. This will lose a logarithmic factor in the size of the coreset, but has the benefit of being very simple.
\end{enumerate}

\subsection{New Feature Embeddings}\label{sec:feature-embeddings}

\ignore{
We follow an approach of~\cite{PT20} for building coresets for kernel density estimates via discrepancy. The main idea will be obtaining improved bounds on discrepancy when we can additionally enforce geometric information on the points. We state the main result of \cite{PT20} in a slightly more general way, so that we can highlight our improvement. 
\begin{defn}[$\gamma_2$-norm]
For a matrix $A \in \Rbb^{n \times m}$ the $\gamma_2$-norm of $A$ is defined by
\[ \gamma_2(A) = \inf\left\{ \| u_i \|_2 \cdot \| v^{(j)} \|_2 : U V = A \right\}. \] 
\end{defn}

\begin{thm}[Coresets via Discrepancy~\cite{PT20}]\label{thm:coresets-pt}
Let $Q \subset \Rbb^d$ be a finite set of vectors. For any $\e > 0$ and $P \subset \Rbb^d$, let $A = (a_{pq})_{p\in P, q\in Q}$ be the $|P| \times |Q|$ kernel matrix where $a_{pq} = \sfK(p, q)$. Then, there exists a subset $S \subset P$ such that for every $q \in Q$,
\begin{align*}
\left| \frac{1}{|P|} \sum_{p \in P} \sfK(p, q) - \frac{1}{|S|} \sum_{p \in S} \sfK(p, q)\right| \leq \e \cdot \gamma_2(A),
\end{align*}
and $|S| \leq  O(\frac{1}{\e}\sqrt{\log |Q|})$. 
\end{thm}

While \cite{PT20} used the fact that $\gamma_2(A) \leq 1$ for every p.d kernel $\sfK \colon \Rbb^d \times \Rbb^d \to [0,1]$, we will show that we can obtain improved bounds given additional geometric properties of $P$ and $Q$. In particular, we will obtain improved bounds whenever the Euclidean norm of every vector $p \in P$ is larger than the Euclidean norm of every $q \in Q$ (or vice-versa), and the smallest norm is not too small. Then, we will use randomized space partitions to decompose datasets into those satisfying the above property. }

\newcommand{\Shell}{\mathrm{Shell}}
In what follows, for any two thresholds $0 < \rin < \rout$, we let $\Shell(\rin, \rout) \subset \Rbb^d$ be the set of vectors $x \in \Rbb^d$ with $\rin < \|x\|_2 < \rout$. The main theorem of this section is the following.
\begin{thm}\label{thm:feature-embeddings}
Let $G \colon \Rbb_{\geq 0} \to [0,1]$ be such that $\sfK(x,y) = G(\|x-y\|_2^2)$ is a p.d kernel for every $\Rbb^d$. For any two thresholds $0 < \rin < \rout$ and $\xi > 0$, there exists a map $\phi \colon \Rbb^d \setminus \Shell(\rin, \rout) \to L_2$ such that every $x, y \in \Rbb^d$ where $0 < \| y\|_2 \leq \rin < \rout \leq \|x\|_2$ satisfy:
\begin{itemize}
\item The inner product $\left| \langle \phi(x), \phi(y) \rangle - \sfK(x, y) \right|\leq \xi$.
\item For any $z \in \Rbb^d \setminus \Shell(\rin, \rout)$,
\[ \| \phi(z) \|_2^2 \lesssim G\left( \dfrac{(\rout - \rin)^2}{\ln(1/\xi)} \cdot \frac{\|z\|_2^2}{\rin^2}\right).\]
\end{itemize}
\end{thm}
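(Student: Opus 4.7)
The plan is to build $\phi$ from the canonical Schoenberg/Bernstein feature embedding of a p.d.\ radial kernel, truncate an integration parameter to cap the approximation error at $\xi$, and then introduce a data-aware rescaling that preserves inner products across the shell while shrinking feature norms on the outer side. Since $\sfK(x,y)=G(\|x-y\|_2^2)$ is p.d.\ on every $\Rbb^d$, Schoenberg's theorem combined with Bernstein's theorem yields a nonnegative Borel measure $\nu$ on $[0,\infty)$ with $G(u)=\int_0^\infty e^{-tu}\,d\nu(t)$ and $\nu([0,\infty))=G(0)\leq 1$. Expanding
\[ e^{-t\|x-y\|_2^2} \;=\; e^{-t\|x\|_2^2}\,e^{-t\|y\|_2^2}\sum_{k\geq 0}\frac{(2t)^k}{k!}\langle x^{\otimes k},y^{\otimes k}\rangle \]
exhibits the canonical feature map $\phi_0(z)(t,k)=e^{-t\|z\|_2^2}\sqrt{(2t)^k/k!}\,z^{\otimes k}$, with inner product given by $\int_0^\infty\sum_k(\cdot)\,d\nu(t)$.

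Next I would truncate to $t\leq t_0 := 2\ln(1/\xi)/(\rout-\rin)^2$ and introduce a rescaling. Splitting $e^{-t\|x-y\|_2^2}$ at well-separated pairs ($\|x-y\|_2\geq \rout-\rin$) and using $G\leq 1$ shows the tail $\int_{t_0}^\infty e^{-t\|x-y\|_2^2}d\nu(t)$ is at most $e^{-t_0(\rout-\rin)^2/2}\leq \xi$. Then pick $\rho>1$ with $\rho^2-1=c/(t_0\rin^2)$ for a small absolute constant $c$ so that $\rho^2\in(1,2]$, and define
\[ \phi(z)(t,k) \;=\; \phi_0(z)(t,k)\cdot\rho^{\sigma(z)\,k}\cdot\mathbf{1}\{t\leq t_0\},\qquad \sigma(z) \;=\; \begin{cases}+1 & \|z\|_2\leq\rin,\\ -1 & \|z\|_2\geq\rout.\end{cases} \]
The factors $\rho^{+k}$ and $\rho^{-k}$ cancel whenever one takes an inner product across the shell, so $\langle\phi(x),\phi(y)\rangle$ coincides with the truncated Gaussian integral, giving $\langle\phi(x),\phi(y)\rangle=\sfK(x,y)\pm\xi$ and hence the first bullet.

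For the norm bound, the Taylor series inside $\|\phi(z)\|_2^2$ sums to
\[ \|\phi(z)\|_2^2 \;=\; \int_0^{t_0}\exp\!\Bigl(2t\|z\|_2^2\bigl(\rho^{2\sigma(z)}-1\bigr)\Bigr)\,d\nu(t). \]
When $\|z\|_2\geq\rout$ the exponent is negative, so extending the integration to $[0,\infty)$ recognizes the integrand as a Laplace transform of $\nu$, giving $\|\phi(z)\|_2^2\leq G\bigl(2\|z\|_2^2(1-\rho^{-2})\bigr)\leq G\bigl(\|z\|_2^2(\rho^2-1)\bigr)$ where the last step uses the elementary inequality $2(1-\rho^{-2})\geq \rho^2-1$ for $\rho^2\in[1,2]$; substituting $\rho^2-1=c(\rout-\rin)^2/(\ln(1/\xi)\rin^2)$ yields the stated bound. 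When $\|z\|_2\leq\rin$ the exponent is nonnegative but capped by $2t_0\rin^2(\rho^2-1)=2c$, so $\|\phi(z)\|_2^2\leq e^{2c}G(0)$, which matches the stated $G$-bound up to constants since the argument of $G$ is at most $(\rout-\rin)^2/\ln(1/\xi)$ in this regime. The main technical challenge is calibrating $\rho$: it must be close enough to $1$ that the inner-side exponential blowup stays $O(1)$, yet far enough from $1$ that $1-\rho^{-2}$ produces a meaningful Laplace decay on the outer side; the choice $\rho^2-1\asymp 1/(t_0\rin^2)$ is what balances the two requirements.
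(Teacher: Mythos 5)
Your construction is the same as the paper's (Hausdorff--Bernstein--Widder measure, tensor expansion of $e^{2t\langle x,y\rangle}$, truncation at $t_0\asymp \ln(1/\xi)/(\rout-\rin)^2$, and a $\rho^{\pm k}$ reweighting that cancels across the shell --- your $\rho>1$ is just the reciprocal of the paper's $\rho<1$), and the first bullet and the outer-shell computation are essentially right. The genuine gap is in your inner-shell norm bound. You bound $\|\phi(z)\|_2^2\leq e^{2c}G(0)$ and assert this ``matches the stated $G$-bound up to constants since the argument of $G$ is at most $(\rout-\rin)^2/\ln(1/\xi)$.'' That inference is invalid: the theorem requires $\|\phi(z)\|_2^2\lesssim G(u)$ with $u=\frac{(\rout-\rin)^2}{\ln(1/\xi)}\cdot\frac{\|z\|_2^2}{\rin^2}$, and for a rapidly decaying completely monotone $G$ (e.g.\ $G(\lambda)=e^{-\lambda}$, which satisfies all hypotheses) one can have $u\gg 1$ --- take $\rout-\rin\gg\sqrt{\ln(1/\xi)}$ and $\|z\|_2=\rin$ --- so $G(u)$ is exponentially smaller than $G(0)$, and a bound of order $e^{2c}$ proves nothing. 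The quantity $u$ being bounded above does not help, since that only gives $G(u)\geq G((\rout-\rin)^2/\ln(1/\xi))$, which is not bounded below by a constant.

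The fix is small, because your construction does satisfy the bound; you just discarded the leverage coming from the truncation. On $[0,t_0]$ one has $e^{-tu}\geq e^{-t_0u}\geq e^{-2}$ (since $t_0u=2\|z\|_2^2/\rin^2\leq 2$ with your $t_0$), hence $\nu([0,t_0])\leq e^{2}\int_0^{\infty}e^{-tu}\,d\nu(t)=e^2G(u)$, and therefore $\|\phi(z)\|_2^2\leq e^{2c}\nu([0,t_0])\leq e^{2c+2}G(u)$; equivalently, you can use the paper's splitting trick $e^{at}=e^{(a+b)t}\,e^{-bt}$, which keeps a $G$-factor with the correct argument and only an $e^{O(1)}$ prefactor. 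A secondary, related point: on the outer shell your choice ``$\rho^2-1=c/(t_0\rin^2)$ for a small constant $c$'' lands you at $G(c'u)$ with $c'<1$ \emph{inside} the argument, and for general completely monotone $G$ this does not imply $\lesssim G(u)$ (again $G(\lambda)=e^{-\lambda}$ with large $u$); you should calibrate the reweighting so that the resulting argument is at least the stated one (the paper takes $1-\rho=\min\{1/(t_0\rin^2),\,1/2\}$, which leaves a factor-two slack in the right direction), while keeping the inner-shell exponent $O(1)$ exactly as you argued.
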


We will prove Theorem~\ref{thm:feature-embeddings} in the next subsection, but we note below that it directly implies an improvement on the $\gamma_2$-norm of kernel matrices of p.d kernels (up to a small additive error $\xi$). In particular, Theorem~\ref{thm:feature-embeddings} implies that for any sets $P, Q \subset \Rbb^d$ where $P \subset B_2(0, \rin) \setminus B_2(0, \rmin)$ and $Q \not\subset B_2(0, \rout)$ (or vice-versa), there exists a $|Q| \times |P|$ matrix $\tilde{A}$ which is entry-wise $\xi$-close to the $|Q| \times |P|$ kernel matrix $(\sfK(q,p))_{q \in Q, p \in P}$, and
\begin{align*}
\gamma_2(\tilde{A}) \lesssim \left( G\left( \dfrac{(\rout - \rin)^2}{\ln(1/\xi)} \cdot \dfrac{\rout^2}{\rin^2} \right) \cdot G\left( \dfrac{(\rout - \rin)^2}{\ln(1/\xi)} \cdot \frac{\rmin^2}{\rin^2} \right) \right)^{1/2}.
\end{align*}

\begin{rem}[Using Smoothness to Relate $\gamma_2(\tilde{A})$ to $\sfK(P, q) / |P|$]\label{rem:smoothness}
Here, one can see where smoothness of the kernel becomes essential. As per (\ref{eq:repeat-halve}), one can decrease the size of $P$ while the error incurred from discrepancy, which is given by $\gamma_2(\tilde{A}) / |P|$, is smaller than $\e \cdot \sfK(P, q) / |P|$. Note that when $p \in P$ and $q \in Q$ are always at distance at most $2r_{\max}$, we know $\sfK(p, q) \geq G(4r_{\max}^2)$. Thus, we can continue decreasing the dataset while $(1/|P|) \cdot \gamma_2(\tilde{A})$ is significantly smaller than $\e \cdot G(4r_{\max}^2)$, which occurs while
\begin{align*}
|P| \gtrsim \frac{1}{\e} \left( G\left( \dfrac{(\rout - \rin)^2}{\ln(1/\xi)} \cdot \dfrac{\rout^2}{\rin^2} \right) \cdot G\left( \dfrac{(\rout - \rin)^2}{\ln(1/\xi)} \cdot \frac{\rmin^2}{\rin^2} \right) \right)^{1/2} \frac{1}{G(4r_{\max}^2)}. 
\end{align*}
The smoothness comes in when computing the ratio of $G(\cdot)$'s, since the smoothness allows us to relate the $G(r_1^2) / G(r_2^2)$ in terms of $r_2/r_1$.  

In particular, this is why extending our approach to kernels with faster decay requires interesting new ideas: for the Gaussian kernel, for example, the right hand side of the equation above is significantly larger than $\sfK(P, q)/|P|$, as the $\Theta(1/\sqrt{d})$ factor stemming from the separation that our partitioning scheme ensures affects the exponent.
\end{rem}

\subsection{Proof of Theorem~\ref{thm:feature-embeddings}}

We will construct the map $\phi$ explicitly. In order to do so, we first recall Schoenberg's characterization of p.d radial kernels, as well as the Haussdorf-Bernstein-Widder theorem, which we will use in constructing $\phi$. We point to Chapter~7 of \cite{W04} for an extensive treatment of these topics.

\begin{thm}[Schoenberg's Characterization]\label{thm:schoenberg}
A kernel $\sfK(x,y) = G(\|x - y\|_2^2)$ is p.d if and only if the function $G$ is completely monotone on $\Rbb_{\geq 0}$, i.e., $G \in C^{\infty}(\Rbb_{\geq 0})$, and
\[ (-1)^{\ell} \cdot G^{(\ell)}(t) \geq 0 \qquad\text{for all $\ell \in \{0\} \cup \Nbb$ and $t \geq 0$}. \]
\end{thm}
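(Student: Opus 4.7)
\medskip

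\noindent\textbf{Proof plan.} The plan is to prove both directions separately, exploiting the fact that the Hausdorff-Bernstein-Widder (HBW) theorem mentioned in the excerpt gives a one-to-one correspondence between completely monotone functions on $\Rbb_{\geq 0}$ and Laplace transforms of non-negative Borel measures on $[0,\infty)$. That is, $G$ is completely monotone if and only if there exists a non-negative Borel measure $\mu$ on $[0,\infty)$ with
\[ G(t) = \int_0^\infty e^{-ts}\, d\mu(s). \]
Given this characterization, the two implications reduce to a basic fact about Gaussian kernels (for the easy direction) and a more delicate ``all dimensions'' argument (for the hard direction).

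\medskip

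\noindent\textbf{Sufficiency (completely monotone $\Rightarrow$ p.d).} First I would invoke HBW to write $G(t) = \int_0^\infty e^{-ts}\,d\mu(s)$ for a non-negative measure $\mu$. For any finite configuration $x_1,\dots,x_n \in \Rbb^d$ and coefficients $c_1,\dots,c_n \in \Rbb$,
\[ \sum_{i,j=1}^n c_i c_j\, G(\|x_i - x_j\|_2^2) \;=\; \int_0^\infty \sum_{i,j=1}^n c_i c_j\, e^{-s\|x_i-x_j\|_2^2}\, d\mu(s). \]
The integrand is non-negative for each $s \geq 0$ because the Gaussian kernel $(x,y)\mapsto e^{-s\|x-y\|_2^2}$ is itself p.d.\ on every $\Rbb^d$ (e.g., via its non-negative Fourier transform, or by exhibiting an explicit feature map into $L_2$ using Hermite functions). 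Integrating the non-negative pointwise quantity against $\mu$ preserves non-negativity, giving the p.d.\ property of $\sfK$.

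\medskip

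\noindent\textbf{Necessity (p.d.\ in all $\Rbb^d$ $\Rightarrow$ completely monotone).} This is the substantive direction and I would tackle it in two steps. First I would upgrade ``p.d.\ in every fixed $\Rbb^d$'' to a statement about configurations in an infinite-dimensional Hilbert space: any finite set of distances $\{\|x_i-x_j\|_2\}$ realizable by points in a separable Hilbert space can be realized (or arbitrarily well approximated) by a configuration in some $\Rbb^d$, so the p.d.\ hypothesis extends to all such configurations. Second, I would use the flexibility of high-dimensional configurations to test $G$ against carefully chosen measures. For a fixed $t > 0$ and large $n$, taking $n$ mutually near-orthogonal vectors of squared norm $t/2$ produces points with pairwise squared distance approximately $t$ and zero self-distance, so the $n \times n$ p.d.\ matrix has diagonal $G(0)$ and off-diagonal $G(t)$; this gives $G(t) \leq G(0)$ and already rules out sign changes. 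Iterating this construction (with configurations built from combinations of spheres of several radii) produces positivity constraints on finite differences $\Delta^{\ell}G$ of every order, which, after passing to the limit of small step sizes, yield $(-1)^\ell G^{(\ell)} \geq 0$ pointwise on $(0,\infty)$. Smoothness of $G$ on $(0,\infty)$ follows from these bounded-derivative bounds together with a standard convolution/mollification argument, and continuity at $0$ follows from the p.d.\ hypothesis applied to pairs with arbitrarily small separation.

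\medskip

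\noindent\textbf{Main obstacle.} The hard step is the second one: squeezing complete monotonicity (an infinitely-many-derivative statement) out of finite-dimensional p.d.\ constraints. The clean route is to identify, for each finite collection $t_1,\dots,t_n$, a high-dimensional point configuration whose pairwise squared distances are exactly $\{t_i + t_j\}_{i,j}$ (up to diagonal), so that the positive semidefiniteness of the $n\times n$ matrix $(G(t_i+t_j))_{i,j}$ becomes a direct consequence of the hypothesis; this is equivalent to $t \mapsto G(t)$ being a moment-like function, and by a Hamburger/Stieltjes-style representation one recovers the Laplace transform form, at which point HBW (applied in reverse) yields complete monotonicity. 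Making the realization of $\{t_i + t_j\}$ as squared distances rigorous in sufficiently large $\Rbb^d$ — essentially, realizing any positive semidefinite Gram matrix as pairwise inner products and then squaring — is the technical crux, and I would handle it by an explicit embedding into $\Rbb^n$.
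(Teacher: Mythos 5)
The paper does not prove this statement: Schoenberg's characterization is quoted as a classical result (the text points to Chapter~7 of \cite{W04}), so there is no in-paper argument to match yours against. Two preliminary remarks on your reading. First, you correctly interpret ``p.d.'' as ``positive definite on $\RR^d$ for \emph{every} $d$'' (equivalently on $\ell_2$); this is essential, since for a fixed dimension the equivalence is false, and it is the reading the paper intends. Second, your sufficiency direction is correct and standard: write $G(t)=\int_0^\infty e^{-ts}\,\mu(ds)$ by Hausdorff--Bernstein--Widder, use that each Gaussian $e^{-s\|x-y\|_2^2}$ is p.d.\ in every dimension, and integrate against the non-negative measure.

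The necessity direction, however, has a genuine gap at exactly the step you call the crux. Your proposed embedding realizes $\|x_i-x_j\|_2^2=t_i+t_j$ only for $i\neq j$ (e.g.\ $x_i=\sqrt{t_i}\,e_i$ with orthonormal $e_i$); the diagonal entries of the kernel matrix that the hypothesis controls are $\sfK(x_i,x_i)=G(0)$, not $G(2t_i)$. Since $G(0)\geq G(2t_i)$, positive semidefiniteness of the matrix with the inflated diagonal does \emph{not} imply positive semidefiniteness of $(G(t_i+t_j))_{i,j}$, so the moment-matrix/Hamburger route does not follow ``directly'' from the hypothesis as claimed. Your alternative finite-difference sketch is the right classical path, but the engine that makes it work is missing: the standard device is to take $a$ orthogonal to the span of $x_1,\dots,x_n$ with $\|a\|_2^2=h$ and test positive definiteness on the $2n$ points $x_i$ and $x_i+a$ with coefficients $c_i$ and $-c_i$, which shows that $t\mapsto G(t)-G(t+h)$ again induces a p.d.\ radial kernel on every $\RR^d$; iterating gives that $(-1)^{\ell}\Delta_h^{\ell}G(\|\cdot\|_2^2)$ is p.d.\ for all $h>0$ and all $\ell$, and pointwise nonnegativity $(-1)^{\ell}\Delta_h^{\ell}G(t)\geq 0$ then follows from equidistant (simplex) configurations with $n\to\infty$ — not from ``combinations of spheres of several radii,'' which you never exhibit. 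Finally, passing from these difference inequalities to $G\in C^{\infty}$ with $(-1)^{\ell}G^{(\ell)}\geq 0$ requires the (nontrivial) difference-quotient characterization of complete monotonicity or a Hausdorff-moment-type argument; it is not delivered by a generic mollification remark. As written, the necessity half is a plausible plan rather than a proof; either supply the translation trick and the difference-to-derivative theorem explicitly, or simply cite Schoenberg/Wendland as the paper does.
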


\begin{thm}[Haussdorf-Bernstein-Widder]\label{hbw}
A function $G \colon \Rbb_{\geq 0} \to \Rbb$ is completely monotone if and only if there exists a non-negative finite Borel measure $\mu$ where
\begin{align*}
G(\lambda) = \int_{t:0}^{\infty} e^{-t\lambda} \mu(dt).
\end{align*}
\end{thm}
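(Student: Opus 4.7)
The plan is to prove the two implications separately. The easy (``if'') direction is a routine differentiation under the integral: given $G(\lambda) = \int_0^\infty e^{-t\lambda}\,\mu(dt)$ for a finite non-negative Borel measure $\mu$, on any half-line $[\lambda_0,\infty)$ with $\lambda_0 > 0$ the family of integrands $\{t^\ell e^{-t\lambda} : \lambda \geq \lambda_0\}$ is dominated by the $\mu$-integrable function $t^\ell e^{-t\lambda_0/2}$. Dominated convergence then justifies differentiation under the integral, yielding $G^{(\ell)}(\lambda) = \int_0^\infty (-t)^\ell e^{-t\lambda}\,\mu(dt)$. Hence $(-1)^\ell G^{(\ell)}(\lambda) \geq 0$ for every $\ell$ and $\lambda > 0$, and $G \in C^\infty(\Rbb_{\geq 0})$.

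For the hard (``only if'') direction I would build an explicit sequence of discrete non-negative measures whose Laplace transforms converge pointwise to $G$, then extract a weak limit. Concretely, set
$$\mu_n := \sum_{k=0}^{\infty} \frac{(-n)^k\, G^{(k)}(n)}{k!}\,\delta_{k/n},$$
whose weights are all $\geq 0$ by complete monotonicity of $G$. Specializing the Laplace transform at $\lambda = 0$ (a Taylor expansion of $G$ about $n$, evaluated at $0$) shows $\mu_n(\Rbb_{\geq 0}) \to G(0^+)$, so the total masses are uniformly bounded. The key computational step is the Post--Widder style identity $\int_0^\infty e^{-t\lambda}\,\mu_n(dt) \to G(\lambda)$ for every fixed $\lambda > 0$, which follows by reindexing the series as a Taylor expansion of $G$ evaluated at a point shifted by $k/n$ and controlling the remainder using that signed derivatives of $G$ have a definite sign.

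The final step is a compactness argument: since $\{\mu_n\}$ is uniformly bounded in total mass, Helly's selection theorem provides a subsequence converging vaguely to a non-negative finite measure $\mu$. The main obstacle will be \emph{tightness}---ruling out that mass escapes to $+\infty$ and is invisible to the vague limit. I would handle this via the uniform tail bound
$$\int_M^\infty e^{-t\lambda}\,\mu_n(dt) \;\leq\; e^{-M\lambda/2}\int_0^\infty e^{-t\lambda/2}\,\mu_n(dt),$$
whose right-hand side is uniformly bounded in $n$ because the inner integral converges to $G(\lambda/2)$. Tightness plus vague convergence upgrades to weak convergence against each test function $t \mapsto e^{-t\lambda}$, and combined with the pointwise convergence of $\int e^{-t\lambda}\,\mu_n(dt) \to G(\lambda)$ we conclude $G(\lambda) = \int_0^\infty e^{-t\lambda}\,\mu(dt)$ as required. (Uniqueness, if needed, is a consequence of injectivity of the Laplace transform on finite non-negative measures, which follows from Stone--Weierstrass applied to the algebra generated by $\{e^{-t\lambda}\}_{\lambda > 0}$.)
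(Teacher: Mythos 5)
The paper itself contains no proof of this statement: it is the classical Hausdorff--Bernstein--Widder theorem, quoted as a known result with a pointer to Chapter~7 of Wendland's book, so there is no internal argument to compare yours against. What you propose is, in substance, the standard Feller-style proof of Bernstein's theorem, and it is essentially correct: the weights of $\mu_n$ are non-negative by complete monotonicity; reindexing shows $\int_0^\infty e^{-t\lambda}\,\mu_n(dt)$ is the Taylor series of $G$ about $n$ evaluated at $n(1-e^{-\lambda/n})\to\lambda$ (not ``at a point shifted by $k/n$,'' a small misstatement); the total masses are uniformly bounded by $G(0^+)$; and Helly selection plus your uniform tail estimate (or simply $e^{-t\lambda}\le e^{-M\lambda}$ on $[M,\infty)$ together with the mass bound) lets you pass to the limit against each test function $e^{-t\lambda}$ with $\lambda>0$, after which $\lambda=0$ follows from monotone convergence and continuity of $G$ at $0$.

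Two points deserve more care. First, the crux of the hard direction is the identity $\sum_{k\ge 0}\tfrac{G^{(k)}(n)}{k!}\bigl(n(1-e^{-\lambda/n})-n\bigr)^k=G\bigl(n(1-e^{-\lambda/n})\bigr)$, i.e.\ that the Taylor series of $G$ about $n$ converges \emph{to} $G$ on $(0,n]$. Sign-definiteness of the derivatives alone only shows the series has non-negative terms with partial sums bounded by $G(x)$, hence converges to some limit $\le G(x)$; equality needs the absolutely-monotone remainder comparison (Bernstein's little theorem): setting $f(u)=G(n-u)$, which is absolutely monotone on $[0,n)$, one bounds the Lagrange remainder by $R_m(x)\le (x/y)^m f(y)$ for $x<y<n$, using that $f^{(m)}$ is non-decreasing, and lets $m\to\infty$. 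Your phrase ``controlling the remainder using that signed derivatives of $G$ have a definite sign'' is the right germ, but as stated it does not yet close this step. Second, a minor caveat on the easy direction: for a general finite $\mu$ the one-sided derivatives of $G$ at $0$ can be infinite (e.g.\ $\mu(dt)=dt/(1+t^2)$), so ``$G\in C^\infty(\Rbb_{\ge 0})$'' is not literally obtained; the clean equivalence is complete monotonicity on $(0,\infty)$ together with finiteness of $G(0^+)$. This is a looseness in the quoted statement rather than a flaw in your argument, but it is worth flagging.
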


We introduce the following notation:
\[ t_0 = \dfrac{\log(1/\xi)}{(\rout - \rin)^2} \qquad \text{and}\qquad \rho = 1 - \min\left\{ \dfrac{1}{t_0 \cdot \rin^2}, \frac{1}{2} \right\}. \]
We show how to map each point $\phi(x)$ and $\phi(y)$ when $\|x\|_2 \geq \rout$ and $\|y\|_2 \leq \rin$. First, we let $u_x, v_y \colon [0, t_0] \times (\{0\} \cup \Nbb) \to \Rbb^*$ (where $\Rbb^*$ consists of the union of all finite length tuples $\cup_{j \geq 1} \Rbb^j$) be the functions given by
\begin{align*} 
u_x(t, k) &= e^{-t \|x\|_2^2} \cdot \dfrac{1}{\sqrt{k!}} \left( \sqrt{2t}\cdot \rho\right)^k \cdot \sqrt{\mu(t)} \cdot x^{\otimes k} \in \Rbb^{d^k}, \\
v_y(t,k) &= e^{-t \|y\|_2^2} \cdot \dfrac{1}{\sqrt{k!}} \left( \sqrt{2t} \cdot \frac{1}{\rho}\right)^k \cdot \sqrt{\mu(t)} \cdot y^{\otimes k} \in \Rbb^{d^k}.
\end{align*}
The map $\phi(x)$ will consider the collection of functions $\{ u_{x}(\cdot, k) \colon [0, t_0] \to \Rbb^{d^k} : k \geq 1\}$ and ``concatenate them.'' In particular, we let $H$ denote the Hilbert space over functions $g \colon [0, t_0] \times \left(\cup_{k \geq 1}  [d]^k \right) \to \Rbb$, where $g_1, g_2 \in H$ have
\[ \langle g_1, g_2 \rangle \eqdef \int_{t:0}^{t_0} \sum_{k=0}^{\infty} \sum_{i=1}^{d^k} g_{1}(t, k, i) \cdot g_2(t, k, i) dt. \]
Then, $\phi(x)$ and $\phi(y)$ are the functions where $\phi(x)(t, k, i) = u_{x}(t, k)_i$ and $\phi(y)(t,k, i) = v_y(t, k)_i$. We note that the novelty in the above definitions is introducing the $\rho$ and $1/\rho$ factors in $u_x$ and $v_y$. In the absence of the $\rho$- and $1/\rho$-factors, the proof that we produce would recover features embeddings of unit norm. By introducing these factors, we are able to exploit the fact that $x$ and $y$ have different norms. In particular, these terms cancel out when computing $\langle u_x, v_y\rangle$, but  affect the norms $\| u_x \|_2^2$ and $\|v_y\|_2^2$. For instance, we have
\begin{equation}\label{eq:phi-dot-product}
\begin{split}
\langle \phi(x), \phi(y)\rangle &= \sum_{k=0}^{\infty} \int_{t:0}^{t_0} \langle u_x(t, k), v_y(t, k) \rangle dt \\
	&= \int_{t:0}^{t_0} e^{-t \| x\|_2^2 - t\|y\|_2^2} \sum_{k=0}^{\infty} \frac{1}{k!} \left( 2t \langle x, y \rangle \right)^k \mu(dt) \\
	&= \int_{t:0}^{t_0} e^{-t \| x\|_2^2 - t\|y\|_2^2 + 2t \langle x,y\rangle} \mu(dt) = \int_{t:0}^{t_0} e^{-t \| x - y\|_2^2} \mu(dt).
\end{split}
\end{equation}
By Theorem~\ref{hbw}, we have
\begin{align*}
G(\|x - y\|_2^2) - \langle \phi(x), \phi(y)\rangle  &= \int_{t:t_0}^{\infty} e^{-t\|x-y\|_2^2} \mu(dt) \leq \exp\left(-\dfrac{\ln(1/\xi) \| x-y\|_2^2}{(\rout - \rin)^2} \right) \leq \xi,
\end{align*}
where we used the fact that $\| x - y\|_2 \geq (\rout - \rin)$ by the triangle inequality and the fact that $\int_{t:t_0}^{\infty} \mu(dt)\leq \int_{t:0}^{\infty} \mu(dt)=\sfK(0, 0)\leq 1$.
In particular, $\sfK(x, y) - \xi \leq \langle \phi(x), \phi(y) \rangle \leq \sfK(x, y)$.  

The norms $\| \phi(x) \|_2^2$ and $\| \phi(y) \|_2^2$ satisfy
\begin{equation*}
\| \phi(x) \|_2^2 = \int_{t:0}^{t_0} e^{-2t\|x\|_2^2}  \sum_{k=0}^{\infty} \frac{1}{k!} (2t\rho^2 \|x\|_2^2)^k \mu(dt) \leq \int_{t:0}^{t_0} e^{-2t\|x\|_2^2(1 - \rho^2)} \mu(dt) \leq G\left(2(1-\rho^2)\|x\|_2^2\right) \\
\end{equation*}
and
\begin{equation*}
\begin{split}
\| \phi(y) \|_2^2 &= \int_{t:0}^{t_0} e^{2t(1/\rho^2 - 1)\|y\|_2^2} \mu(dt) = \int_{t:0}^{t_0} e^{3t(1/\rho^2 - 1)\|y\|_2^2} \cdot e^{-t(1/\rho^2-1)\|y\|_2^2}  \mu(dt)   \\
	&\leq e^{3t_0(1 - \rho^2) \|y\|_2^2/\rho^2} \int_{t:0}^{\infty} e^{-t(1 -\rho^2) \|y\|_2^2} \mu(dt) = e^{3t_0(1 - \rho^2) \|y\|_2^2 / \rho^2} \cdot G\left( \dfrac{(1-\rho^2)\|y\|_2^2}{\rho^2}\right).
\end{split}
\end{equation*}
Since $G(\cdot)$ is decreasing (because $G$ is total monotone by assumption, so the derivative of $G$ is always non-positive) and $\rho < 1$, we have
\[ G(2(1-\rho^2) \|x\|_2^2) \leq G\left( \dfrac{\|x\|_2^2}{t_0 \cdot \rin^2} \right) \qquad\text{and}\qquad G\left( \frac{(1-\rho^2) \|y\|_2^2}{\rho^2} \right) \leq G\left(\frac{\|y\|_2^2}{t_0 \cdot \rin^2} \right)\]
The final upper bound on $\|\phi(y)\|_2^2$ comes from the fact
\begin{align*}
\frac{3t_0 (1 -\rho^2) \|y\|_2^2}{\rho^2} \lesssim t_0 \cdot \dfrac{1}{t_0 \cdot \rin^2} \cdot \|y\|_2^2 \leq 1,
\end{align*}
since $\|y\|_2 \leq \rin$. 


\newcommand{\bV}{\mathbf{V}}
\newcommand{\Balance}{\textsc{Balance}}
\newcommand{\bw}{\boldsymbol{w}}
\newcommand{\bsigma}{\boldsymbol{\sigma}}
\newcommand{\calD}{\mathcal{D}}
\newcommand{\bc}{\mathbf{c}}
\newcommand{\boldr}{\mathbf{r}}
\newcommand{\calN}{\mathcal{N}}
\newcommand{\bgamma}{\boldsymbol{\gamma}}
\newcommand{\Prx}{\mathop{\Pr}}
\newcommand{\Ex}{\mathop{\E}}
\newcommand{\bx}{\mathbf{x}}
\newcommand{\eps}{\e}
\newcommand{\boldeta}{\boldsymbol{\eta}}
\renewcommand{\R}{\RR}
\renewcommand{\N}{\mathbb{N}}
\newcommand{\calH}{\mathcal{H}}
\newcommand{\bh}{\mathbf{h}}
\newcommand{\bG}{\mathbf{G}}
\newcommand{\by}{\mathbf{y}}
\newcommand{\bu}{\mathbf{u}}
\newcommand{\BuildTree}{\textsc{BuildTree}}
\newcommand{\calT}{\mathcal{T}}
\newcommand{\calL}{\mathcal{L}}
\newcommand{\calbT}{\boldsymbol{\calT}}
\newcommand{\calbL}{\boldsymbol{\calL}}
\newcommand{\bX}{\mathbf{X}}
\newcommand{\Ber}{\mathrm{Ber}}
\newcommand{\ProcessCaptured}{\textsc{ProcessCaptured}}
\newcommand{\QueryCaptured}{\textsc{QueryCaptured}}
\renewcommand{\bv}{\boldsymbol{v}}
\newcommand{\polylog}{\mathrm{polylog}}
\newcommand{\scT}{\textsc{T}}
\newcommand{\bS}{\mathbf{S}}
\newcommand{\bt}{\mathbf{t}}
\newcommand{\cen}{\mathsf{cen}}
\newcommand{\radius}{\mathsf{rad}}
\newcommand{\OuterBall}{\mathsf{OuterBall}}
\newcommand{\Capt}{\mathrm{Cap}}
\newcommand{\InnerBall}{\mathsf{InnerBall}}
\newcommand{\Child}{\mathsf{Child}}
\newcommand{\rrin}{\mathsf{r}_{\mathrm{in}}}
\newcommand{\rrout}{\mathsf{r}_{\mathrm{out}}}
\newcommand{\cc}{\mathsf{c}}
\newcommand{\sfR}{\mathsf{R}}
\newcommand{\sfr}{\mathsf{r}}
\newcommand{\sfc}{\mathsf{c}}

\subsection{Coresets from New Feature Embeddings}\label{sec:coresets-const}

\begin{defn}[Smooth Function \cite{BCIS18}]
For $L, t \geq 1$, a kernel $\sfK \colon \Rbb^d \times \Rbb^d \to \R$ is $(L, t)$-smooth if for any three points $p_1, p_2, q \in \R^d$ with $p_1 \neq q \neq p_2$, 
\begin{align*}
\max \left\{ \dfrac{\sfK(p_1, q)}{\sfK(p_2, q)},  \dfrac{\sfK(p_2, q)}{\sfK(p_1, q)} \right\} \leq L \cdot \left( \max\left\{\dfrac{\|p_1 - q\|_2}{\|p_2-q\|_2}, \dfrac{\|p_2-q\|_2}{\|p_1 - q\|_2} \right\}\right)^{t}.
\end{align*}
\end{defn}

Suppose the kernel $\sfK(x,y) = G(\|x-y\|_2^2)$ is p.d for every $\R^d$, and in addition, is $(L, t)$-smooth. We will need one preliminary theorem which will follow from the (online) discrepancy minimization algorithm of \cite{ALS21}. Then, we state and prove Lemma~\ref{lem:disc-ds}, which is the main lemma of this section.

\begin{thm}[Self-Balancing Walk~\cite{ALS21}]\label{thm:self-balancing}
For any $n, d \in \N$, there exists a randomized algorithm which receives as input a set of vectors $V = \{ v_1,\dots, v_n\} \in \R^d$ and a parameter $\delta > 0$. The algorithm outputs a (random) subset $\bV' \subset V$ such that, for any vector $u \in \R^d$, with probability at least $1 - \delta$,
\begin{align*}
\left|\sum_{i \in \bV'} \langle v_i, u \rangle - \sum_{i \notin \bV'} \langle v_i, u \rangle \right| \lesssim \log(n/\delta) \cdot \| u\|_2 \cdot \max_{i \in [n]} \| v_i\|_2.
\end{align*}
Furthermore, the algorithm does not require explicit access to $V$; it only requires oracle access to $\{ \langle v_i, v_j \rangle \}_{i,j \in [n]}$.
\end{thm}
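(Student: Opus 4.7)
The plan is to use an online self-balancing walk. Process the vectors $v_1, \ldots, v_n$ one at a time and maintain a running signed sum $w_t = \sum_{i \leq t} \chi_i v_i$ with $w_0 = 0$, where each $\chi_i \in \{-1, +1\}$ is chosen adaptively. At step $t$, having observed $w_{t-1}$, compute $c_t := \langle w_{t-1}, v_t\rangle$ and set $\chi_t = +1$ with probability $(1 - c_t/C)/2$ and $\chi_t = -1$ with probability $(1 + c_t/C)/2$, where $C = \Theta(\log(n/\delta)) \cdot \max_i \|v_i\|_2$. Note that this choice is well-defined as a probability provided $\|w_{t-1}\|_2 \cdot \|v_t\|_2 \leq C$, which will be maintained as an invariant with high probability. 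At the end, output $\bV' = \{i : \chi_i = +1\}$, so that the signed error in the statement equals exactly $\langle w_n, u\rangle$.

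The key identity is $\E[\chi_t \mid \mathcal{F}_{t-1}] = -\langle w_{t-1}, v_t\rangle / C$, so that each update has a built-in ``restoring force'' pulling $w_t$ back toward the origin whenever its inner product with the incoming vector is large. The first invariant to establish is that $\|w_t\|_2 \lesssim C$ throughout the process with probability at least $1 - \delta/2$. This is done by a potential function argument: show that $\Phi_t := \exp(\|w_t\|_2^2 / (c' C^2))$ is a supermartingale for an appropriate constant $c'$, using the expansion $\|w_t\|_2^2 = \|w_{t-1}\|_2^2 + 2\chi_t \langle w_{t-1}, v_t\rangle + \|v_t\|_2^2$ and the fact that $\E[2\chi_t \langle w_{t-1}, v_t\rangle] = -2 c_t^2 / C$ cancels the quadratic increment. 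A Markov bound on $\Phi_n$ then yields the stated invariant.

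Next, for any fixed $u \in \R^d$, analyze $Z_t := \langle w_t, u\rangle$, which satisfies $Z_t - Z_{t-1} = \chi_t \langle v_t, u\rangle$. The per-step increment is bounded in magnitude by $\|v_t\|_2 \|u\|_2$, and its conditional expectation equals $-\langle w_{t-1}, v_t\rangle \langle v_t, u\rangle / C$, whose magnitude is at most $\|u\|_2 \|v_t\|_2 \|w_{t-1}\|_2 / C \lesssim \|u\|_2 \|v_t\|_2$ on the good event $\|w_{t-1}\|_2 \lesssim C$. Thus the process $(Z_t)$ is a bounded-difference submartingale after compensating by its drift. Applying Freedman's (or Azuma's) inequality to the centered martingale $Z_t - \sum_{s \leq t} \E[Z_s - Z_{s-1} \mid \mathcal{F}_{s-1}]$ and then bounding the drift term separately on the good event yields $|Z_n| \lesssim \log(n/\delta) \cdot \|u\|_2 \cdot \max_i \|v_i\|_2$ with probability at least $1 - \delta/2$. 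A union bound over the two events gives the theorem.

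The oracle-access statement is immediate, since the algorithm only ever uses $c_t = \langle w_{t-1}, v_t\rangle = \sum_{i<t} \chi_i \langle v_i, v_t\rangle$, which is computable from the pairwise Gram entries. The main obstacle is calibrating the constant $C$ so that both the invariant $\|w_t\|_2 \lesssim C$ and the tail bound on $|Z_n|$ fit the $\log(n/\delta)$ factor simultaneously; this is where the precise form of the supermartingale potential $\Phi_t$ and the choice of the linearization coefficient $1/C$ must be tuned in concert.
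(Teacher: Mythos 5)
Your algorithm is exactly the self-balancing walk of \cite{ALS21} that the paper invokes (and your oracle-access observation is fine), but the analysis you sketch has two genuine gaps, and the paper's proof avoids both by reducing to the directional sub-Gaussian bound proved in \cite{ALS21} rather than to a norm bound. First, the invariant $\|w_t\|_2 \lesssim C$ with $C = \Theta(\log(n/\delta))\cdot\max_i\|v_i\|_2$ is false in general: if the $v_t$ are pairwise orthogonal (say $v_t = e_t$), then $\langle w_{t-1}, v_t\rangle = 0$ at every step, the signs are uniform, and $\|w_n\|_2 = \sqrt{n}\cdot\max_i\|v_i\|_2 \gg C$. Correspondingly, $\exp\left(\|w_t\|_2^2/(c'C^2)\right)$ is not a supermartingale: the conditional increment of $\|w_t\|_2^2$ is $\|v_t\|_2^2 - 2\langle w_{t-1},v_t\rangle^2/C$, which is positive whenever the incoming vector is nearly orthogonal to $w_{t-1}$, so the restoring drift does not cancel the quadratic growth in norm; it only controls \emph{directional} projections. (There is also a scaling inconsistency: well-definedness of the sign probabilities needs $|\langle w_{t-1},v_t\rangle|\le C$, which your claimed invariant would only give after multiplying by another factor of $\max_i\|v_i\|_2$; the clean fix, as in the paper, is to rescale the $v_i$ and $u$ to unit norm.)

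Second, even granting a bound on $\|w_{t-1}\|_2$, your treatment of $Z_t = \langle w_t, u\rangle$ does not yield the polylogarithmic bound. After compensating by the drift, Freedman/Azuma controls the martingale part only through the predictable quadratic variation $\sum_t \mathrm{Var}(\chi_t\mid \mathcal{F}_{t-1})\langle v_t,u\rangle^2$, which can be as large as $n\cdot\max_i\|v_i\|_2^2\,\|u\|_2^2$ (e.g., all $v_t$ aligned with $u$), giving only a $\sqrt{n}$-type deviation; and bounding the drift term ``separately'' by $\sum_t \|u\|_2\|v_t\|_2$ gives an $n$-type bound. The whole point of the self-balancing walk is that the drift and the noise must be played against each other inside a single exponential potential applied to the projection $\langle w_t,u\rangle$ itself; this is precisely what Theorem~1.1 of \cite{ALS21} does, establishing $\Ex\left[\exp\left(\langle w_i,u\rangle^2/(240\pi\log(n/\delta))\right)\right]\le\sqrt{2}$ for unit $u$. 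The paper's proof then applies this bound with $u = v_{t+1}$ and a union bound over the $n$ steps to guarantee the sign probabilities are well defined (after dropping the $\ell_\infty$ check in their Line~4), and finally with the rescaled test vector $u/\|u\|_2$ to conclude. To repair your write-up you would need to replace both the norm-invariant and the Freedman step by this directional potential argument (or simply cite it, as the paper does).
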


\begin{proof}
The statement of Theorem~\ref{thm:self-balancing} above does not explicitly appear in~\cite{ALS21}, but readily follows from the proof of Theorem~1.1 in their paper. In particular,~\cite{ALS21} give a randomized algorithm, $\Balance$, which receives as input a sequence of vectors $v_1,\dots, v_n \in \Rbb^d$ of norm at most $1$, and a failure probability $\delta$.\footnote{We consider the minor modification to their algorithm, where the second condition of Line~4, that $\| w_{i-1}\|_{\infty} \leq c$ is dropped, which allows us to set $c = 30\log(n/\delta)$, as they do in their proof of Theorem~1.2. We note that the reason the check $\| w_{i-1}\|_{\infty} \leq c$ is present in their algorithm is because in \emph{online} discrepancy, one wants to ensure that every coordinate of any partial sum of vectors is small, whereas we will only care about the final vector.} The algorithm produces a sequence of (random) vectors $\bw_0 = 0, \bw_1,\dots, \bw_n \in \Rbb^d$ such that for any vector $u \in \R^d$ with $\|u\|_2\leq1$,\footnote{The specific constant of $240 \pi \log(n/\delta)$ follows their bound of $4Lc$, since $L = 2\pi$ and $c = 30 \log(n/\delta)$.}
\[ \Ex_{\bw_{i}}\left[ \exp\left(\frac{\langle \bw_i, u\rangle^2}{240\pi \log(n/\delta)} \right)\right] \leq \sqrt{2}. \]

In addition, as long as $|\langle \bw_{i}, v_{i+1} \rangle| \leq c = 30 \log(n/\delta)$ for all $i \in [n]$, then there is a setting of signs $\bsigma \in \{-1,1\}^n$ such that every $i \in [n]$ satisfies
\[ \bw_i = \sum_{\ell=1}^i \bsigma_i v_i. \]
As in their proof of Theorem~1.1, one may take a union bound over the $n$ steps and conclude that $|\langle \bw_{n}, u\rangle| \leq c$ except with probability at most $\delta$. Theorem~\ref{thm:self-balancing} stated above simply performs the above argument with $u' = u / \| u \|_2$ and $v_{i}' = v_i / \max_{i} \|v_i\|_2$ to obtain the setting signs $\bsigma \in \{-1,1\}^n$, and sets $\bV'$ to be those indices $i \in [n]$ where $\bsigma_i = 1$.
\end{proof}

The main result of this section is

\begin{lem}\label{lem:disc-ds}
For $L, t \geq 1$, let $\sfK \colon \R^d \times \R^d \to [0, 1]$ be a p.d radial kernel which is $(L, t)$-smooth. There exists a randomized algorithm which receives as input a subset $X \subset \Rbb^d$, four thresholds $0 < \rmin \leq \rin < \rout \leq \rmax$, and three parameters $\eps, \xi, \delta \in (0, 1)$. The algorithm outputs a random subset $\bS \subset X$ and a number $T \geq 0$ which satisfy the following conditions.
\begin{itemize}
\item For any $q \in \Rbb^d$, if $\|q\|_2 \in [\rout , \rmax]$ and $\| x\|_2 \in [\rmin, \rin]$ for every $x \in X$, or $\|q\|_2 \in [\rmin, \rin]$ and $\|x\|_2 \in [\rout, \rmax]$ for every $x \in X$, the following holds with probability at least $1 - \delta$,
\begin{align*}
\left| 2^{T} \sum_{x \in \bS} \sfK(x, q) - \sum_{x \in X} \sfK(x, q) \right| \leq \eps \sum_{x \in X} \sfK(x, q) + 2|X| \xi.
\end{align*}
\item The size of the subset $\bS$ is bounded by
\[ |\bS| \leq \frac{\log^2(|X|/\delta)}{\eps} \cdot L \cdot \left(\dfrac{2\cdot\rmax}{\rout - \rin} \cdot \frac{\rin}{\rmin}\sqrt{\ln(1/\xi)} \right)^t\]
\end{itemize}
\end{lem}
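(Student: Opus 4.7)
The plan is to iterate the self-balancing walk of Theorem~\ref{thm:self-balancing} on the feature-embedding vectors $\phi(x)$ supplied by Theorem~\ref{thm:feature-embeddings}, halving the dataset at each step. Set $X_0 = X$, and for $k = 0, 1, \ldots, T-1$ invoke Theorem~\ref{thm:self-balancing} on the vectors $\{\phi(x) : x \in X_k\}$ with failure probability $\delta/T$ to obtain a random subset $X_{k+1} \subset X_k$ with $|X_{k+1}| \leq |X_k|/2$. We stop at the smallest $T$ for which $|X_T|$ meets the target size and set $\bS := X_T$. Crucially, Theorem~\ref{thm:self-balancing} only needs oracle access to the Gram matrix $\{\langle \phi(x),\phi(x')\rangle\}_{x,x'\in X_k}$, computable directly from the integral representation defining $\phi$ in the proof of Theorem~\ref{thm:feature-embeddings}.

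For the error analysis, fix a query $q$ with (say) $\|q\|_2 \in [\rout, \rmax]$ and $\|x\|_2 \in [\rmin, \rin]$ for all $x \in X$; the opposite configuration is symmetric. Let
\[
D_k := \sum_{x \in X_{k+1}} \langle \phi(x),\phi(q)\rangle - \sum_{x \in X_k \setminus X_{k+1}} \langle \phi(x),\phi(q)\rangle.
\]
Telescoping the identity $\sum_{x \in X_k} \langle \phi(x),\phi(q)\rangle = 2\sum_{x \in X_{k+1}} \langle \phi(x),\phi(q)\rangle - D_k$ across all $T$ levels yields
\[
2^T \sum_{x \in \bS} \langle \phi(x),\phi(q)\rangle - \sum_{x \in X} \langle \phi(x),\phi(q)\rangle \;=\; -\sum_{k=0}^{T-1} 2^k D_k.
\]
Theorem~\ref{thm:self-balancing}, combined with the norm bounds of Theorem~\ref{thm:feature-embeddings} and monotonicity of $G$, gives per step
\[
|D_k| \;\lesssim\; \log(|X|T/\delta) \cdot \|\phi(q)\|_2 \cdot \max_{x \in X_k} \|\phi(x)\|_2 \;\lesssim\; \log(|X|T/\delta) \cdot G(r_b^2),
\]
where $r_b := \rmin(\rout - \rin)/(\sqrt{\ln(1/\xi)}\,\rin)$ and we used $\rmin \leq \rout$ to drop the (smaller) factor inside the geometric mean $\sqrt{G(r_a^2)G(r_b^2)}$. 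A union bound over $k$ and the geometric sum $\sum_k 2^k \leq 2^T$ then produce $\sum_k 2^k |D_k| \lesssim 2^T \log^2(|X|/\delta)\,G(r_b^2)$, with the extra log factor absorbing $T \leq \log|X|$.

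It remains to translate the inner-product bound back to kernel values and to compare against $\sum_{x \in X} \sfK(x,q)$. Theorem~\ref{thm:feature-embeddings} guarantees $\sfK(x,q) - \xi \leq \langle \phi(x),\phi(q)\rangle \leq \sfK(x,q)$, so the two boundary sums contribute an additive error at most $|X|\xi + 2^T |X_T|\xi \leq 2|X|\xi$ (using $|X_T| \leq |X|/2^T$), matching the $+2|X|\xi$ term in the lemma statement exactly. Finally, $(L,t)$-smoothness together with the trivial bound $\sfK(x,q) \geq G((\rin + \rmax)^2) \geq G(4\rmax^2)$ gives
\[
G(r_b^2) \;\leq\; L\left(\frac{2\rmax}{r_b}\right)^t G(4\rmax^2) \;\leq\; \frac{L}{|X|} \left(\frac{2\rmax}{r_b}\right)^t \sum_{x \in X}\sfK(x,q).
\]
Substituting $2^T = |X|/|\bS|$ and requiring the discrepancy contribution to be at most $\eps \sum_x \sfK(x,q)$ rearranges directly to $|\bS| \gtrsim (\log^2(|X|/\delta)/\eps)\cdot L\cdot(2\rmax\,\rin\,\sqrt{\ln(1/\xi)}/((\rout-\rin)\,\rmin))^t$, as claimed. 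The main obstacle I expect is the bookkeeping: threading the $\xi$-errors through the $T$-step iteration so that they land exactly on $2|X|\xi$, and carefully combining the per-step log factors under the union bound. All the substantive work, namely the improved $\gamma_2$-type bound on well-separated pairs, is already contained in Theorems~\ref{thm:self-balancing} and~\ref{thm:feature-embeddings}.
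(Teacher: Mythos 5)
Your proposal is correct and follows essentially the same route as the paper's proof: iterate the self-balancing walk of Theorem~\ref{thm:self-balancing} on the modified embeddings of Theorem~\ref{thm:feature-embeddings}, telescope the per-level discrepancies weighted by $2^{\ell}$, bound each level by $\log(|X|T/\delta)\cdot G\bigl(\tfrac{(\rout-\rin)^2}{\ln(1/\xi)}\cdot\tfrac{\rmin^2}{\rin^2}\bigr)$, convert the $\xi$-perturbations into the $2|X|\xi$ additive term via $|X_T|\leq |X|/2^T$, and use $(L,t)$-smoothness with $\sfK(x,q)\geq G(4\rmax^2)$ to fix $T$ and obtain the stated coreset size. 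The only differences are cosmetic (an immaterial sign in the telescoping identity and choosing $T$ implicitly rather than by the paper's explicit formula).
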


\begin{proof}
Let $\phi \colon \Rbb^d \setminus \Shell(\rmin, \rout) \to L_2$ be the map of Theorem~\ref{thm:feature-embeddings} instantiated with the parameter $\xi$. The algorithm will proceed in iterations and, go through $\ell \in \{0, \dots, T\}$ specifying the sets
\begin{itemize}
\item $\bV_{0} = \{ \phi(x) : x \in X \}$.
\item For $\ell \in [T]$, the set $\bV_{\ell}$ is set to the smallest of $\bV_{\ell-1}'$ or $\bV_{\ell-1} \setminus \bV_{\ell-1}'$, where $\bV_{\ell-1}'$ is the output of algorithm from Theorem~\ref{thm:self-balancing} with input $\bV_{\ell-1}$ and failure probability set to $\delta / T$.\footnote{Even though the vectors $\phi(x)$ are infinite dimensional, there are only a finite set of vectors. In addition, Theorem~\ref{thm:self-balancing} has no dependence on the dimensionality $d$. Thus, it suffices to implicitly work with the subspace spanned by $\{ \phi(x) : x \in X\}$ and $\phi(q)$.}
\end{itemize}
By definition of the procedure above, its simple to see that $|\bV_{T}| \leq |X| / 2^{T}$. In order to show the approximation bound, we first note that
\begin{align}
\left| 2^T \sum_{x \in \bV_T} \sfK(x, q) - \sum_{x \in X} \sfK(x, q) \right| \leq \left|2^{T} \sum_{x \in \bV_T} \langle \phi(x), \phi(q)\rangle - \sum_{x \in X} \langle \phi(x), \phi(q)\rangle \right| + (|X| + 2^T |\bV_T|) \cdot \xi, \label{eq:additive-error}
\end{align}
and because of the fact $|\bV_T| \leq |X| / 2^{T}$, we have $|X| + 2^T|\bV_T| \leq 2|X|$. This handles the claimed additive error. Then, we also have
\begin{align}
\left| 2^T \sum_{x \in \bV_t} \langle \phi(x), \phi(q)\rangle - \sum_{x \in X}  \langle \phi(x), \phi(q)\rangle \right| &\leq \sum_{\ell=0}^{T} 2^{\ell-1} \left| 2 \sum_{x \in \bV_{\ell}} \langle \phi(x), \phi(q)\rangle - \sum_{x \in \bV_{\ell-1}} \langle \phi(x), \phi(q) \rangle  \right| \nonumber \\
		&= \sum_{\ell=0}^{T} 2^{\ell-1} \left|\sum_{x \in \bV'_{\ell-1}} \langle \phi(x), \phi(q)\rangle - \sum_{x \in \bV_{\ell-1} \setminus \bV'_{\ell-1}} \langle \phi(x), \phi(q)\rangle \right| \nonumber \\
		&\lesssim \sum_{\ell=0}^{T} 2^{\ell-1} \cdot \log(|X| T / \delta) \cdot G\left( \dfrac{(\rout - \rin)^2}{\ln(1/\xi)} \cdot \frac{\rmin^2}{\rin^2}\right), \label{eq:disc}
\end{align}
where the final line applies the upper bound of Theorem~\ref{thm:self-balancing}, as well as the fact that 
\[ \| \phi(q) \|_{2} \cdot \max_{x \in X} \| \phi(x)\|_2 \lesssim G\left( \dfrac{(\rout - \rin)^2}{\ln(1/\xi)} \cdot \frac{\rmin^2}{\rin^2}\right). \]
from Theorem~\ref{thm:feature-embeddings}. Furthermore, we notice that $\|x - q\|_2 \leq 2\rmax$ for every $x \in X$, which means $\sfK(x, q) \geq G(4\rmax^2)$. Using the fact that $\sfK$ is $(L, t)$-smooth, we may always upper bound (\ref{eq:disc}) by
\begin{align*}
\left(\dfrac{2^T \log\left( \frac{|X| T}{\delta}\right)}{|X|}\right) \cdot L \cdot \left( \frac{2\cdot\rmax}{\rout - \rin} \cdot \frac{\rin}{\rmin} \cdot \sqrt{\ln(1/\xi)} \right)^t \sum_{x\in X} \sfK(x, q),
\end{align*}
and the above bound is smaller than $\eps$ as long as we set
\[ T = \log_2\left( \frac{\eps \cdot |X|}{\log^2(|X|/\delta)} \cdot \frac{1}{L} \cdot \left( \frac{\rout - \rin}{2\cdot \rmax} \cdot \frac{\rmin}{\rin} \cdot \frac{1}{\sqrt{\ln(1/\xi)}} \right)^t \right),\]
and results in a set $\bS$ of size at most
\[ |\bS| \leq \frac{|X|}{2^{T}} \leq \frac{\log^2(|X|/\delta)}{\eps} \cdot L \cdot \left(\frac{2\cdot \rmax}{\rout - \rin} \cdot \frac{\rin}{\rmin} \cdot \sqrt{\ln(1/\xi)} \right)^t. \]
\end{proof}

The above lemma readily implies the following theorem, which we state so that we can directly invoke this in the subsequent sections. Specifically, Lemma~\ref{lem:disc-ds} can produce a coreset $\bS$, so if we store the coreset points, the parameter $T$, and a translation vector $c \in \R^d$. 

\begin{lem}\label{lem:coreset}

Let $n, d \in \N$, and suppose that for $L, t \geq 1$, $\sfK \colon \Rbb^d \times \Rbb^d \to [0, 1]$ is a p.d radial kernel which is $(L, t)$-smooth. There exist two randomized algorithms with the following guarantees:
\begin{itemize}
\item $\emph{\ProcessCaptured}(X, c, \rmin, \rin, \rout, \rmax, \eps, \xi, \delta)$ receives as input a set $X \subset \R^d$ of size at most $n$, a point $c \in \R^d$, thresholds $\rmin < \rin < \rout \leq \rmax \in \R_{\geq 0}$, error parameters $\eps, \xi \in (0, 1)$, and failure probability $\delta \in (0, 1)$. We are promised that one of the following two hold
\begin{align} 
\|x - c\|_2 &\in [\rmin, \rin] &:\qquad \forall x \in X\label{eq:dataset-inside}\\
\|x - c\|_2 & \in [\rout, \rmax]  &:\qquad \forall x \in X. \label{eq:dataset-outside}
\end{align}
The algorithm outputs a pointer to a data structure $v$.
\item $\emph{\QueryCaptured}(q, v, c, \rmin, \rin, \rout, \rmax, \eps, \xi, \delta)$ receives as input a query $q \in \R^d$, a pointer to a data structure $v$, a point $c \in \R^d$, thresholds $\rmin< \rin< \rout< \rmax \in \R_{\geq 0}$, error parameters $\eps, \xi \in (0, 1)$, and failure probability $\delta \in (0, 1)$. The algorithm outputs a value $\boldeta \in \R_{\geq 0}$. 
\end{itemize}
For any query $q \in \R^d$, if $\|q - c\|_2 \in [\rout, \rmax]$ and (\ref{eq:dataset-inside}) holds, or $\|q - c\|_2 \in [\rmin, \rin]$ and (\ref{eq:dataset-outside}) holds, the following occurs with probability at least $1 - \delta$ over the randomness in the algorithm. We execute $\emph{\ProcessCaptured}(X, c, \rmin, \rin, \rout, \rmax, \eps, \xi, \delta)$ and we let $\bv$ denote the pointer to the data structure it outputs, and we let $\boldeta$ be the output of $\emph{\QueryCaptured}(q, \bv, c, \rmin, \rin, \rout, \rmax, \eps, \xi,\delta)$. Then, 
\begin{itemize}
\item \emph{\textbf{Correctness}}: The estimate $\boldeta \in \Rbb_{\geq 0}$ that we output satisfies
\[ (1-\eps) \sum_{x \in X} \sfK(q,x) - 2\xi|X| \leq \boldeta \leq (1+\eps) \sum_{x \in X} \sfK(q, x) + 2\xi |X|. \]

\item \emph{\textbf{Time and Space Complexity}}: The algorithm $\emph{\ProcessCaptured}(X, c, \rmin, \rin, \rout, \rmax, \eps, \xi, \delta)$ takes time at most $\poly{nd}$ time to output a data structure $\bv$.\footnote{It is important here that the algorithm can efficiently compute $\langle \phi(x), \phi(y) \rangle$ for any $x, y \in X$ for the feature embeddings of Theorem~\ref{thm:feature-embeddings}. See Remark~\ref{rm:dotproduct} for more details.} The total space of $\bv$, as well as the running time of $\emph{\QueryCaptured}(q, \bv, c, \rmin,\rin,\rout, \rmax,\eps, \xi, \delta)$ is, up to a constant factor, at most
\[ d \cdot \frac{\log^2(|X|/\delta)}{\eps} \cdot L \cdot \left( \dfrac{2 \cdot \rmax}{\rout - \rin} \cdot \frac{\rin}{\rmin} \cdot \sqrt{\ln(1/\xi)}\right)^t. \]
\end{itemize}
\end{lem}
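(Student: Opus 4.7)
The plan is to package Lemma~\ref{lem:disc-ds} into a data structure via a simple translation. Given the input to $\ProcessCaptured(X,c,\rmin,\rin,\rout,\rmax,\eps,\xi,\delta)$, define the translated set $X' = \{x - c : x \in X\}$. Because $\sfK$ is radial (in particular translation invariant), for every $x \in X$ and $q \in \R^d$ we have $\sfK(x,q) = G(\|x-q\|_2^2) = G(\|(x-c)-(q-c)\|_2^2) = \sfK(x-c, q-c)$, and the geometric hypothesis on $\|x-c\|_2$ (and later on $\|q-c\|_2$) translates directly into the hypothesis of Lemma~\ref{lem:disc-ds} applied to $X'$ with query $q' := q - c$. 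So $\ProcessCaptured$ simply runs the algorithm of Lemma~\ref{lem:disc-ds} on $X'$ with the same parameters $(\rmin,\rin,\rout,\rmax,\eps,\xi,\delta)$, obtains a coreset $\bS \subset X'$ and an integer $T$, and stores the triple $\bv := (\bS, T, c)$. The companion procedure $\QueryCaptured(q,\bv,\ldots)$ translates the query, $q' = q - c$, and returns $\boldeta := 2^{T} \sum_{x' \in \bS} \sfK(x', q')$.

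For correctness, fix a query $q$ satisfying one of the two separation cases; by translation invariance and the discussion above, the hypotheses of Lemma~\ref{lem:disc-ds} hold for $X'$ and $q'$. Applying that lemma, with probability at least $1-\delta$,
\begin{align*}
\left| 2^T \sum_{x' \in \bS} \sfK(x', q') - \sum_{x' \in X'} \sfK(x', q') \right| \leq \eps \sum_{x' \in X'} \sfK(x', q') + 2|X|\xi.
\end{align*}
Since $\sum_{x' \in X'} \sfK(x', q') = \sum_{x \in X} \sfK(x, q)$ by translation invariance, this yields exactly the multiplicative-plus-additive bound in the statement, i.e.\ $(1-\eps)\sum_{x \in X} \sfK(q,x) - 2\xi |X| \leq \boldeta \leq (1+\eps)\sum_{x \in X} \sfK(q,x) + 2\xi |X|$.

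For the space and time bounds, the stored data structure $\bv$ consists of a center $c \in \R^d$, an integer $T$, and the coreset $\bS$ whose cardinality is bounded by the second bullet of Lemma~\ref{lem:disc-ds}. Since each vector takes $O(d)$ words, the total storage, and the running time of $\QueryCaptured$ (which performs $|\bS|$ translations and kernel evaluations followed by one scaling by $2^T$), are both $O(d)$ times the $|\bS|$-bound from Lemma~\ref{lem:disc-ds}, matching the claim. For the preprocessing time, we note that the self-balancing walk of Theorem~\ref{thm:self-balancing} only needs oracle access to inner products between the embedded vectors, and by the identity derived in \eqref{eq:phi-dot-product}, $\langle \phi(x), \phi(y)\rangle = \int_0^{t_0} e^{-t\|x-y\|_2^2}\, \mu(dt)$, which for the $(L,t)$-smooth p.d.\ radial kernels of interest admits a closed form (or can be approximated to the required precision by standard numerical integration) in $\poly{d}$ time.

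The main (very mild) obstacle is the final point about efficient inner-product computation: the $\phi$ of Theorem~\ref{thm:feature-embeddings} lives in an infinite-dimensional space, so we must argue that we never need to instantiate it explicitly. This is precisely what Theorem~\ref{thm:self-balancing} grants us via its oracle-access guarantee, reducing preprocessing to $\poly{nd}$ evaluations of the truncated integral representation of the kernel, which we defer to Remark~\ref{rm:dotproduct}. Everything else is bookkeeping on top of Lemma~\ref{lem:disc-ds}.
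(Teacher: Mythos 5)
Your proposal is correct and matches the paper's approach: the paper derives Lemma~\ref{lem:coreset} directly from Lemma~\ref{lem:disc-ds} by storing the coreset $\bS$, the halving count $T$, and the translation vector $c$, with the query answered as the rescaled kernel sum over $\bS$ and efficiency of inner products deferred to Remark~\ref{rm:dotproduct}. Your translation-invariance bookkeeping and the $O(d)\cdot|\bS|$ accounting are exactly the intended (and essentially only) content of this step.
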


\begin{rem}\label{rm:dotproduct}
Lemma~\ref{lem:coreset} assumes black-box access to dot products in the embedded space. This can typically be achieved by obtaining an analytic expression for the measure $\mu$ and integrating as per~\eqref{eq:phi-dot-product}.  For example, if $\sfK(x, y)=1/(1+\|x-y\|^2)$, then $G(\lambda)=1/(1+\lambda)$ in Theorem~\ref{hbw} and one has $\mu(t)=e^{-t}$. Then $\langle \phi(x), \phi(y)\rangle$ can be evaluated per~\eqref{eq:phi-dot-product} at polylogarithmic cost.

\end{rem} 

\newcommand{\Preprocess}{\textsc{Preprocess}}
\newcommand{\Query}{\textsc{Query}}
\newcommand{\ind}{\mathbf{1}}

\section{Algorithmic Result: Data Structure for Evaluating the Coresets}\label{sec:alg}

\subsection{A Ball Carving Hash}\label{sec:hash}

For this section, we present a ball carving hash function. We let $d \in \N$ will be the dimensionality of the space (which we view as a non-constant parameter), as well as a dataset $P \subset \R^d$ of $n$ points. The goal will be to provide a randomized partition of the dataset, such that whenever we use the coreset data structure of \cite{PT20}, we are doing so for points $P$ (and potential queries $Q$) whose kernel matrix has a smaller $\gamma_2$-norm. Hence, this section does not concern the specific kernel $\sfK$, and will simply be a ball-carving hash function.

\begin{lem}\label{lem:hash}
There exists absolute constants $c_1, c_2 > 0$ such that, for any $\alpha  \in (0, c_2)$ and $R > 0$, we have the following. There exists a distribution $\calD$ supported on pairs $(c, r) \in \R^d \times \R_{\geq0}$ which specify a function $h_{c,r} \colon \R^d \to \{0,1,*\}$ given by
\begin{align} 
h_{c,r}(x) =\left\{ \begin{array}{cc} 0 & x \in B_2\left(c, r - \frac{\alpha R}{\sqrt{d}}\right) \\
						     1 & x \notin B_2\left(c, r+\frac{\alpha R}{\sqrt{d}}\right)\\
						     * & \text{o.w.} \end{array} \right. . \label{eq:hash-function}
\end{align}
The distribution satisfies the three guarantees:
\begin{itemize}
\item\label{en:far-points-separated} \emph{\textbf{Separate Far Points:}} For any two points $x, y \in B_2(0, R)$ where $\|x-y\|_2 \geq R/100$, the probability over $(\bc, \boldr) \sim \calD$ that $h_{\bc, \boldr}(x), h_{\bc, \boldr}(y) \in \{0,1\}$ and $h_{\bc, \boldr}(x) \neq h_{\bc, \boldr}(y)$ is at least $c_1/\sqrt{d}$.
\item\label{en:avoid-boundary} \emph{\textbf{Avoid Boundary:}} For any point $x \in B_2(0, R)$, the probability over $(\bc, \boldr) \sim \calD$ that $h_{\bc, \boldr}(x) = *$ is at most $\alpha / \sqrt{d}$.
\item\label{en:far-from-center} \emph{\textbf{Far From Center}:} For any point $x \in \R^d$, with probability at least $1 - 2^{-\Omega(d)}$ over the draw of $(\bc, \boldr) \sim \calD$, $\| x - \bc\|_2 \geq R / 100$, $\|\bc\|_2 \leq 2 R$.
\end{itemize}
\end{lem}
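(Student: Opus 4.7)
The plan is to define the distribution $\calD$ as follows: sample $\bc$ uniformly from $B_2(0,2R)$ and, independently, sample $\boldr$ uniformly from an interval of length roughly $4R$, say $[R/100, 4R]$. Two of the three properties then follow almost immediately. The ``Far From Center'' property is immediate: by construction $\|\bc\|_2 \leq 2R$ always, and for any fixed $x$,
\[ \Pr[\|\bc - x\|_2 < R/100] \leq \frac{\vol(B_2(x, R/100))}{\vol(B_2(0, 2R))} = (1/200)^d = 2^{-\Omega(d)}. \]
The ``Avoid Boundary'' property follows because $\boldr$ is uniform on an interval of length at least $2R$: conditioning on $\bc$, the probability that $|\boldr - \|\bc - x\|_2| \leq \alpha R/\sqrt{d}$ is at most $2(\alpha R/\sqrt{d})/(4R - R/100) \leq \alpha/\sqrt{d}$.

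The main work goes into ``Separate Far Points''. Fix $x,y \in B_2(0,R)$ with $\Delta := \|x-y\|_2 \geq R/100$. Conditional on $\bc$, the probability over $\boldr$ that $h(x), h(y) \in \{0,1\}$ and $h(x) \neq h(y)$ equals
\[ \frac{\max\bigl(|\|\bc - x\|_2 - \|\bc - y\|_2| - 2\alpha R/\sqrt d,\,0\bigr)}{4R - R/100} \]
provided the separating window for $\boldr$ lies inside $[R/100, 4R]$. A straightforward concentration argument (using $\vol(B_2(0,R))/\vol(B_2(0,2R)) = 2^{-d}$) shows that, with probability $1 - 2^{-\Omega(d)}$ over $\bc$, one has $\|\bc\|_2 \in [R,2R]$, which forces both $\|\bc - x\|_2$ and $\|\bc - y\|_2$ into $[R/100, 3R]$ and makes the denominator in the displayed expression at most $6R$ in the identity
\[ \|\bc - x\|_2 - \|\bc - y\|_2 = \frac{-2\langle \bc, x - y\rangle + (\|x\|_2^2 - \|y\|_2^2)}{\|\bc - x\|_2 + \|\bc - y\|_2}. \]

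So it suffices to show that the numerator is at least of order $\Delta R/\sqrt d$ with constant probability over $\bc$. This is the main obstacle and the place where anti-concentration is needed: for $\bc$ uniform in $B_2(0,2R)$ and $v := x-y$ fixed, the random variable $\langle \bc, v \rangle$ has a bounded density of order $O(\sqrt d / (R \|v\|_2))$ (the marginal distribution of a fixed coordinate of a uniform point in a high-dimensional Euclidean ball is proportional to $(1 - (s/2R)^2)^{(d-1)/2}$, whose sup-norm satisfies this bound). Consequently, for any target $\beta > 0$,
\[ \Pr\!\left[\,\bigl|-2\langle \bc, v\rangle + (\|x\|_2^2 - \|y\|_2^2)\bigr| \leq \beta\, \Delta \, R/\sqrt{d}\,\right] \leq O(\beta), \]
which is smaller than $1/4$ for a suitable absolute constant $\beta$. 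Intersecting this event with the concentration event on $\|\bc\|_2$ yields $|\|\bc - x\|_2 - \|\bc - y\|_2| \geq c' R/\sqrt d$ for an absolute constant $c'>0$, with probability at least $1/8$ (say) over $\bc$.

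Choosing $c_2 \leq c'/4$, we get $|\|\bc - x\|_2 - \|\bc - y\|_2| - 2\alpha R/\sqrt d \geq c' R/(2\sqrt d)$ on this event, so taking expectation over $\bc$ and dividing by $4R$ gives a separation probability at least $c_1/\sqrt d$ for an absolute constant $c_1 > 0$, completing the argument. The only nontrivial ingredient is the anti-concentration/density bound for $\langle \bc, v\rangle$, which follows from the closed-form marginal of a uniform point in a Euclidean ball and is standard.
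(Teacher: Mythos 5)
Your construction is sound and follows essentially the same blueprint as the paper's proof: a random center plus a uniformly random radius, with the separation probability reduced to a lower bound on the gap $\big|\|x-\bc\|_2-\|y-\bc\|_2\big|$, which is in turn controlled through anti-concentration of the linear form $\langle \bc, x-y\rangle$ after writing the gap as a difference of squared distances over a denominator of order $R$. The only substantive difference is cosmetic: the paper draws $\bc\sim\calN(0,R^2 I_d/d)$ and uses $2$-stability plus Gaussian anti-concentration to lower bound $\E[\bgamma]$ in expectation (with indicator corrections for the bad event), whereas you draw $\bc$ uniformly from $B_2(0,2R)$ and use the explicit marginal density of order $\sqrt{d}/(R\|v\|_2)$ to get a constant-probability event on which $\bgamma\gtrsim R/\sqrt{d}$; both yield the same $\Omega(1/\sqrt{d})$ bound. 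One inferential slip to fix: $\|\bc\|_2\in[R,2R]$ does \emph{not} force $\|\bc-x\|_2\geq R/100$ (take $x$ with $\|x\|_2=R$ and $\bc$ close to $x$), so the containment of the distances in $[R/100,3R]$ does not follow from the norm event alone; however, $\|\bc-x\|_2\geq R/100$ and $\|\bc-y\|_2\geq R/100$ each hold with probability $1-2^{-\Omega(d)}$ by exactly the volume argument you already used for the ``Far From Center'' property, so intersecting with these events (and noting this also keeps the separating window for $\boldr$ inside $[R/100,4R]$) repairs the step without any new idea.
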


\begin{proof}
The distribution $\calD$ samples a point $\bc$ and a threshold $\boldr$ by letting $\bc \sim \calN(0, R^2 \cdot I_d / d)$ and $\boldr \sim [0, 3R]$. The third item is the simplest: (i) by anti-concentration of $\calN(0, R^2 \cdot I_d / d)$, the probability mass on any ball of radius $R/100$ is at most $2^{-\Omega(d)}$, and (ii) by concentration of $\calN(0, R^2 \cdot I_d / d)$, $\|\bc\|_2 \leq 2R$ except with probability $2^{-\Omega(d)}$. The second item is also simple to argue: for any fixed $c \in \R^d$, $h_{c, \boldr}(x) = *$ whenever $\left| \boldr - \| x - c\|_2 \right| \leq \alpha R / \sqrt{d}$. Therefore, the probability over the draw of $\boldr \sim [0, 3R]$ that the above occurs is at most $\alpha / \sqrt{d}$.  

\newcommand{\bA}{\boldsymbol{A}}
\newcommand{\bB}{\boldsymbol{B}}

We now argue the first item, that $h_{\bc, \boldr}$ tends to separate far points. Consider a fixed setting of $p, q \in B_2(0, R)$, and for a sample $(\bc, \boldr) \sim \calD$, let 
\[ \bgamma \eqdef \big| \|p-\bc\|_2 - \|q - \bc\|_2\big|. \] 
Then, the event that $h_{\bc, \boldr}(p) \neq h_{\bc, \boldr}(q)$ occurs whenever the following two events hold: 
\begin{itemize}
\item Event $\bA$: For $\delta = 1/2$, we have $p, q \in B_2(\bc, (2+\delta)R)$, and 
\item Event $\bB$: The threshold $\boldr$ lies within an interval of $3R$ of length $\bgamma - 2 \alpha R / \sqrt{d}$. 
\end{itemize}
Notice that Event $\bA$ holds with high probability because we may apply the triangle inequality and the fact $p, q \in B_2(0, R)$. If event $\bA$ fails, then the center point $\bc$ must have Euclidean norm larger than $(1+\delta) R$, so
\begin{align*}
\Prx_{\bc}\left[ \text{Event $\bA$ fails}\right] \leq \Prx_{\bc}\left[ \| \bc\|_2 \geq (1+\delta) \cdot R \right] \leq \Prx_{\bx \sim \chi^2(d)}\left[ \bx - d\geq \delta \cdot d\right] \leq e^{-d\delta^2 / 8},
\end{align*}
using a standard concentration inequality on $\chi^2(d)$ random variables (see, Example 2.5 in~\cite{W19}). Therefore, we have
\begin{align}
\Prx\left[ \text{Events $\bA$ and $\bB$ hold}\right] &\geq \Prx\left[ \text{Event $\bB$ holds}\right] - \Prx\left[ \text{Event $\bA$ fails}\right] \nonumber \\
&\geq \Prx\left[ \text{Event $\bB$ holds}\right] - e^{-d\delta^2/8}. \label{eq:hahas}
\end{align}
So it remains to lower bound the probability over $(\bc, \boldr) \sim \calD$ that $h_{\bc, \boldr}(p) \neq h_{\bc, \boldr}(q)$, which is at least
\begin{align}
\Prx\left[ \text{Event $\bB$ holds}\right] &= \Ex_{\bc}\left[ \dfrac{\bgamma - 2 \alpha R / \sqrt{d}}{3R}\right] = \frac{1}{3R} \cdot \Ex_{\bc}\left[ \bgamma \right] - \frac{\alpha}{\sqrt{d}}. \label{eq:split-prob}
\end{align}
We may lower bound the expectation of $\bgamma$ by considering whether or not event $\bA$ holds. In particular, we may always lower bound, for any setting of the randomness of $\bgamma$,
\begin{align}
\bgamma &= \dfrac{\Big| \|p-\bc\|_2^2 - \|q - \bc\|_2^2\Big|}{\|p-\bc\|_2 + \|q-c\|_2} \geq \dfrac{\Big| \|p-\bc\|_2^2 - \|q - \bc\|_2^2\Big|}{2(2+\delta) R} - \ind\left\{ \text{Event $\bA$ fails}\right\} \cdot (R+\|\bc\|_2).   \label{eq:gamma-eq}
\end{align}
The reason is the following: if event $\bA$ holds, then $\|p-\bc\|_2 + \|q -\bc\|_2 \leq 2(2+\delta) R$ and we obtain the desired lower bound. If event $\bA$ fails, then since $\|p\|_2, \|q\|_2 \leq R$,
\[ \dfrac{\Big| \|p-\bc\|_2^2 - \|q - \bc\|_2^2\Big|}{2(2+\delta) R} = \dfrac{\Big| \|p\|_2^2 - \|q\|_2^2 - 2\langle p-q,\bc\rangle\Big|}{2(2+\delta) R} \leq \frac{2R^2}{2(2+\delta) R} + \frac{2\|p-q\|_2\|\bc\|_2}{2(2+\delta) R} \leq R + \|\bc\|_2, \]
so subtracting $\ind\{ \text{Event $\bA$ fails} \} \cdot (R+\|\bc\|_2)$ ensures that the right-hand side of (\ref{eq:gamma-eq}) is negative. Hence,
\begin{align*}
\Ex_{\bc}\left[ \bgamma \right] &\geq \Ex_{\bc}\left[  \dfrac{\big| \| p - \bc\|_2^2 - \| q - \bc\|_2^2\big|}{2(2+\delta) R}\right] - R \cdot \Prx_{\bc}\left[ \text{Event $\bA$ fails}\right] - \Ex_{\bc}\left[ \|\bc\|_2 \cdot \ind\{ \|\bc\|_2 \geq (1+\delta) R\}\right]
\end{align*}
We bound each term:
\begin{align*}
\Ex_{\bc}\left[ \dfrac{\big| \| p - \bc\|_2^2 - \| q - \bc\|_2^2\big|}{2(2+\delta) R} \right] = \frac{1}{2(2+\delta) R} \cdot \Ex_{\bc}\left[ \big| \|p\|_2^2 - \|q\|_2^2 - 2 \langle p - q, \bc \rangle \big| \right] = \Omega \left( \dfrac{\|p-q\|_2}{\sqrt{d}}\right),
\end{align*}
where by 2-stability of the Gaussian distribution, $2 \langle p - q , \bc \rangle$ is distributed like $\calN(0, 4 \|p-q\|_2^2 \cdot R^2 / d)$, and anti-concentration of the Gaussian distribution. Recall that we have already upper bounded the probability that event $\bA$ fails, and finally, we can similarly bound the final expectation,
\begin{align*}
\Ex_{\bc}\left[ \|\bc\|_2 \cdot \ind\{ \|\bc\|_2 \geq (1+\delta) R \} \right] &\leq \int_{\zeta:\delta}^{\infty} R \cdot \Prx_{\bc}\left[ \|\bc\|_2 \geq (1+\zeta) R \right] d\zeta \leq R \int_{\zeta:\delta}^{\infty} e^{-d\zeta^2/8} d\zeta \leq R\cdot e^{-\Omega(d)},
\end{align*}
using the fact $\delta = 1/2$. Substituting into (\ref{eq:split-prob}) into (\ref{eq:hahas}), we obtain
\begin{align*}
\Prx_{(\bc, \boldr) \sim \calD}\left[ \begin{array}{c} h_{\bc,\boldr}(p) \neq h_{\bc,\boldr}(q) \\ h_{\bc, \boldr}(p) , h_{\bc, \boldr}(q) \in \{0,1\} \end{array}\right] &\geq \frac{1}{3R} \left( \Omega\left(\frac{\|p-q\|_2}{\sqrt{d}} \right) - R \cdot e^{-\Omega(d)} \right) - \frac{\alpha}{\sqrt{d}} - e^{-\Omega(d)} \\
&\geq \Omega\left(\frac{1}{\sqrt{d}}\right).
\end{align*} 
since $\|p-q\|_2 \geq R/100$, $\alpha$ is a small enough constant, and $d$ is at least a large enough constant.
\end{proof}

\newcommand{\MEB}{\texttt{MEB}}
\newcommand{\size}{\texttt{size}}
\newcommand{\PreprocessFar}{\textsc{PreprocessFar}}
\newcommand{\FarDS}{\texttt{FarDS}}
\newcommand{\QueryFar}{\textsc{QueryFar}}

\subsection{Data Structure}\label{sec:ds-tree-decomp}

In this section, we will use Lemma~\ref{lem:hash} to preprocess the dataset by recursively hashing. We will consider a dataset $P \subset \R^d$ which consists of $n$ points, an unknown query $q \in \Rbb^d$, and the parameter $\Phi > 1$ which denotes the maximum aspect ratio of $P \cup \{ q\}$, i.e.,
\[ \Phi = \dfrac{\max_{x,y \in P \cup \{ q\}} \| x - y\|_2}{\min_{x\neq y \in P \cup \{q\}} \|x - y\|_2}. \]
We assume that the dimensionality $d$ is $\omega(\log n \log \log \Phi) \leq d$; the assumption is without loss of generality, as we may add coordinates which are set to $0$. 

\begin{thm}\label{thm:main-thm}
For $n,d \in \Nbb$, and $L, t \geq 1$, consider any p.d $(L,t)$-smooth kernel $\sfK \colon \R^d \times \R^d \to [0,1]$. There exists two randomized algorithms $\emph{\Preprocess}$ (Algorithm~\ref{alg:preprocess}) and $\emph{\Query}$ (Algorithm~\ref{alg:query}) such that, 
\begin{itemize}
\item $\emph{\Preprocess}(P, \eps, \xi)$ receives as input a dataset $P \subset \R^d$ of at most $n$ points, and two error parameters $\eps, \xi \in (0, 1)$. The algorithm outputs a pointer to a data structure $u$.
\item $\emph{\Query}(q, u, \eps, \xi)$ receives as input a query $q \in \R^d$, a pointer to a data structure $u$ generated by $\emph{\Preprocess}$, and two error parameters $\eps, \xi \in (0, 1)$. 
\end{itemize}
Suppose that $P \cup \{ q\}$ has aspect ratio at most $\Phi$, then we satisfy the following guarantees with probability at least $0.9$,
\begin{itemize}
\item \emph{\textbf{Correctness}}: If $\bu$ is the output of $\emph{\Preprocess}(P, \eps, \sigma)$ and $\boldeta \in \Rbb_{\geq 0}$ is the output of $\emph{\Query}(q, \bu, \eps, \sigma)$. Then, 
\[ (1-\eps) \sum_{p \in P} \sfK(p, q) - 2\xi n \leq \boldeta \leq (1+\eps) \sum_{p \in P} \sfK(p, q) + 2\xi n.\]
\item \emph{\textbf{Space Complexity}}: The total space of a data structure $\bu$ produced by $\emph{\Preprocess}(P, \eps, \xi)$ is at most, up to a constant factor,
\[ \frac{nd}{\eps} \cdot L \cdot \left( \sqrt{d \cdot \polylog(nd \Phi\ln(1/\xi)/\eps) \cdot \ln(1/\xi)} \right)^t. \]
\item \emph{\textbf{Query Time}}: $\emph{\Query}(q, \bu, \eps, \xi)$ takes time at most, up to a constant factor,
\[ \frac{d}{\eps} \cdot L \cdot \left( \sqrt{d \cdot \polylog(nd\Phi/\eps) \cdot \ln(1/\xi)}\right)^t. \]
\item \emph{\textbf{Preprocessing Time}}: Assuming access to oracles for computing $\langle \phi(x), \phi(y) \rangle$ for $x, y \in \R^d$ and $\phi$ constructed from Theorem~\ref{thm:feature-embeddings}, the algorithm $\emph{\Preprocess}(P, \eps, \xi)$ runs in time $\poly{nd \log \Phi \ln(1/\xi) / \eps}$.
\end{itemize}
\end{thm}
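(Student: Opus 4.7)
The plan is to construct a randomized tree where each node handles a subset of $P$ at a particular geometric scale, captures some of those points into discrepancy-based coresets via Lemma~\ref{lem:coreset}, and recurses on the rest. At the root I would compute the minimum enclosing ball (MEB) of $P$, with center $c$ and radius $R$. The query algorithm then dispatches on $\|q-c\|_2$ into three cases mirroring the technical overview: (i) if $\|q-c\|_2 > R/\eps$, output $|P|\cdot\sfK(c,q)$, which incurs only $O(\eps)$ multiplicative error by smoothness; (ii) if $3R \le \|q-c\|_2 \le R/\eps$, guess a dyadic scale $R'$ of $\|q-c\|_2$ (one of $O(\log(1/\eps))$ options) and apply Lemma~\ref{lem:coreset} once with a randomly shifted center $c'$ drawn from $B_2(c,R'/2)$ and all four shell radii of order $\Theta(R')$; (iii) if $\|q-c\|_2 < 3R$, recurse via ball carving as described next.

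The heart of the construction is case (iii). At a node with MEB parameters $(c,R)$, I would sample $\tilde O(\sqrt d)$ ball-carving hashes $h_{\bc_i,\boldr_i}$ from Lemma~\ref{lem:hash} with $\alpha = 1/\polylog(nd\Phi/(\eps\xi))$ chosen small enough to control boundary duplication. For each hash, I invoke $\ProcessCaptured$ twice: once on the inner points $I_i \subseteq B_2(\bc_i,\boldr_i-\alpha R/\sqrt d)$ (queried when $q$ lands outside the outer shell) with parameters $\rmin = \Theta(\alpha R/\sqrt d)$, $\rin = \boldr_i - \alpha R/\sqrt d$, $\rout = \boldr_i + \alpha R/\sqrt d$, $\rmax = O(R)$, and symmetrically on the outer points $O_i$. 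This produces a tree in which each root-to-leaf path encounters $\tilde O(\sqrt d)$ coresets. By Lemma~\ref{lem:hash}, for any fixed $q$, every $p \in P$ with $\|p - q\|_2 \ge R/100$ is captured into some coreset along $q$'s path after $\tilde O(\sqrt d)$ rounds with high probability; the uncaptured points all lie within $B_2(q, R/100)$, so their MEB has radius at most $R/50$ and I recurse at a smaller scale. The outer recursion thus has depth $O(\log \Phi)$, and the inner recursion has depth $\tilde O(\sqrt d)$, for a total of $D = \tilde O(\sqrt d \log \Phi)$ coreset evaluations per query.

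For correctness, I would set each internal coreset error to $\eps' = \eps/D$ and each internal failure probability to $\delta' = 0.1/D$, so that a union bound over all coresets on a query path yields the stated $(1\pm\eps)$ multiplicative guarantee together with additive error $2\xi n$ with probability at least $0.9$. For the complexity analysis, the expected number of coreset memberships of a fixed $p \in P$ at a single outer level is at most $(1+\alpha/\sqrt d)^{\tilde O(\sqrt d)} = O(1)$ by the shell-avoidance guarantee of Lemma~\ref{lem:hash}; combining this with the per-coreset size $L/\eps' \cdot (\sqrt{d\ln(1/\xi)}/\alpha)^t$ from Lemma~\ref{lem:coreset} (invoked with $\rmax/(\rout-\rin) \lesssim \sqrt d/\alpha$ and $\rin/\rmin = O(1)$), a $d$-factor for storing each point, and the $O(\log \Phi)$ outer depth, the stated space bound follows; the query time equals the per-query work at a single coreset times $D$. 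The main obstacle is the bookkeeping: verifying that the outer recursion on uncaptured points genuinely contracts the MEB by a constant factor (so its depth is indeed $O(\log \Phi)$), controlling shell duplication so that $\alpha$ can be taken polylog-small without the $(1/\alpha)^t$ factor in the coreset size becoming super-polylogarithmic, and smoothly transitioning between cases~(i)--(iii) as the query walks down the tree.
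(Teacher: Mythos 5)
Your overall architecture is the paper's own: recursive ball carving via Lemma~\ref{lem:hash} with $\alpha$ polylogarithmically small, coresets for the captured inner/outer points via Lemma~\ref{lem:coreset}, far-query handling by dyadic annuli plus a single-point surrogate, and a branching-process account of shell duplication. Two steps, however, would fail as written. First, in your case (i) the claim that for $\|q-c\|_2 > R/\eps$ one may return $|P|\cdot \sfK(c,q)$ with $O(\eps)$ multiplicative error ``by smoothness'' is false for general $(L,t)$-smooth kernels: the definition only bounds the ratio $\sfK(p,q)/\sfK(c,q)$ by $L\bigl(1+O(\eps)\bigr)^t$, which is not $1+O(\eps)$ when $L>1$. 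The paper instead proves Claim~\ref{cl:far-radius} from the Hausdorff--Bernstein--Widder representation of a p.d.\ radial kernel, which requires the larger cutoff $R(r,\eps,\xi)=O(r\ln(1/\xi)/\eps)$ and tolerates an extra additive $\xi|P|$; your far-field threshold and its justification must be changed accordingly. Second, your depth/space argument is query-centric: ``every $p$ with $\|p-q\|_2\geq R/100$ is captured along $q$'s path, so the uncaptured points lie in $B_2(q,R/100)$'' only controls the single path followed by one fixed query, while the tree is built obliviously and every branch (each corresponding to a different hypothetical sequence of query outcomes) must also terminate, since the space bound is a property of the whole structure. The paper closes exactly this gap with a query-free argument: any two \emph{data} points at distance at least half the current radius are separated by one hash with probability $\Omega(1/\sqrt d)$ and, once separated, never co-occur in a descendant, so after $\Theta(\sqrt d\,\log(n\log\Phi))$ levels every branch's diameter halves; combined with the Galton--Watson bound on duplication (probability $\alpha/\sqrt{d}$ per level, $\alpha\le c_1/(4\log n\log\Phi)^2$) this gives depth $O(\sqrt d\,\log(n\log\Phi)\log\Phi)$ and tree size $O(n\sqrt d\,\log(n\log\Phi)\log\Phi)$. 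You flagged this contraction as an obstacle; this pairwise-separation argument is the missing idea.

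Two further accounting issues. Your choice $\rmin=\Theta(\alpha R/\sqrt d)$ contradicts the $\rin/\rmin=O(1)$ you later plug into Lemma~\ref{lem:coreset}: with that $\rmin$ and $\boldr=\Theta(R)$ the coreset size picks up another $(\sqrt d/\alpha)^t$, degrading $d^{t/2}$ to $d^{t}$. The paper takes $\rmin=u.\sfR/100$, which is legitimate precisely because the ``Far From Center'' guarantee of Lemma~\ref{lem:hash} puts every point at distance $\Omega(R)$ from the random Gaussian center with probability $1-2^{-\Omega(d)}$. Also, splitting the error as $\eps'=\eps/D$ and $\delta'=0.1/D$ is unnecessary and inflates the coreset size by $D=\tilde{O}(\sqrt d\log\Phi)$: the captured sets along the query's path are pairwise disjoint subsets of $P$, so the per-coreset $(1\pm\eps)$ multiplicative guarantees combine with no loss and only the additive $\xi$-errors accumulate, to $2\xi n$; what does need a union bound is the failure probability over all calls and hash events, which the paper handles by running every $\ProcessCaptured$/$\QueryCaptured$ with $\delta=1/\poly{nd\Phi/\eps}$. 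Finally, note that the far-query machinery cannot live only at the root: in branches where the surviving cluster has shrunk away from $q$, the query is far from that node's enclosing ball, which is why the paper attaches a $\PreprocessFar$/$\QueryFar$ structure to every node -- this is the concrete resolution of the ``transition between cases'' bookkeeping you left open.
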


\begin{rem}[Simplifications on the Notation and Error Probability]
In the description of the algorithms $\Preprocess(P, \eps, \xi)$ and $\Query(q, u, \eps, \xi)$, we will make multiple calls to $\ProcessCaptured$ and $\QueryCaptured$ with various parameter settings and we will make the following simplifications. First, we will drop the dependence on $\eps$ and $\xi$ in the description below, as these will remain constant throughout the algorithm. Second, we will, at the forefront, set the error probability $\delta$ in calls to $\ProcessCaptured$ and $\QueryCaptured$ to $1/\poly{dn \Phi/\eps}$. Thus, in the presentation below, we (i) simplify the notation by dropping the dependence on $\delta$ in calls to $\ProcessCaptured$ and $\QueryCaptured$, and (ii) essentially assume that $\ProcessCaptured$ and $\QueryCaptured$ are deterministic algorithms. Indeed, the theorem that we seek to prove, Theorem~\ref{thm:main-thm}, allows for $\polylog(dn\Phi/\eps)$ dependencies, and Lemma~\ref{lem:coreset} has poly-logarithmic dependence on $1/\delta$. Since the preprocessing time of $\Preprocess(P,\eps,\xi)$ and query time $\Query(q, u, \eps, \xi)$ which is at most polynomial in $nd \log \Phi$, setting $\delta \leq 1/\poly{dn\Phi}$ allows one to union bound over all executions of $\ProcessCaptured$ and $\QueryCaptured$ so that they are all correct with high probability.
\end{rem}

The remainder of this section gives the proof of Theorem~\ref{thm:main-thm}. The preprocessing algorithm that we present below can be thought of as consisting of two main parts, where we prepare for an (unknown) query $q \in \Rbb^d$. The data structure will be stored as a binary tree where nodes will hold additional information. Specifically, a call to $\Preprocess(P)$ instantiates a node $u$, and will have certain attributes stored in the node. The notation for these will be $u.\attr$, for some ``attribute'' $\attr$. Each non-leaf node $u$ will have at most two children, which will be stored in $u.\Child(0)$ and $u.\Child(1)$. The formal description appears in Algorithm~\ref{alg:preprocess}.

\paragraph{High Level Structure of Algorithm~\ref{alg:preprocess}} The first thing the algorithm does is enclose the dataset $P$ with an (approximately) minimum enclosing ball around a center point $u.\cen$ of radius $u.\radius$. This is done in Line~\ref{ln:enclose}, and can be done in $O(nd)$ since it will suffice to obtain a constant-factor approximation (for instance, picking an arbitrary point $p \in P$ and letting $u.\cen = p$ and $u.\radius = \max_{p' \in P} \| p - p'\|_2$ suffices). If Line~\ref{ln:check-empty} is triggered, then there is at most $1$ distinct point (which is stored as $u.\cen$) and since the algorithm stores $|P|$ in $u.\size$, it will be able to compute the kernel contribution of $P$. The parameter $u.\sfR$ is set to $2 \cdot u.\radius$ and the algorithm will prepare for two cases: when the (unknown) query $q \in \Rbb^d$ is ``close'' to $u.\cen$ (within distance at most $u.\sfR$), and when the query $q \in \Rbb^d$ is ``far'' from $u.\cen$. 
\begin{itemize}
\item \textbf{Query Close}: In this case, we will be preparing for a query $q$ satisfying $\|q - u.\cen\|_2 \leq u.\sfR$, so that we may instantiate Lemma~\ref{lem:hash} with the origin as $u.\cen$ and $R$ as $u.\sfR$. In this case, we can consider the hash function as dividing $\Rbb^d$ into three parts: (i) within distance $u.\rrin$ of $u.\newcen$ (with high probability, we also have the distance will be at least $u.\rrmin$ as well), (ii) within the shell around $u.\newcen$ of inner-radius $u.\rrin$ and outer-radius $u.\rrout$, and (iii) outside the ball around $u.\newcen$ of radius $u.\rrout$ (and we will also have an upper bound on the distance to $u.\newcen$ of $u.\rrmax$). In this case, we build two coresets with $\ProcessCaptured$ for regions (i) and (iii). Then, we recursively preprocess the points which are not captured by the coresets and store these data structures as the children, $u.\Child(0)$ and $u.\Child(1)$. 
\item \textbf{Query Far}: In this case, we prepare for a query $q$ which satisfies $\|q - u.\cen\|_2 \geq u.\sfR$. Since every point $p \in P$ lies within distance $u.\radius$ of $u.\cen$ and $u.\sfR = 2 \cdot u.\radius$, we can already guarantee a separation of least $u.\radius$ between $q$ and any $p \in P$. Formally, we will handle this in Line~\ref{ln:handle-far} with the sub-routine $\PreprocessFar$ (which we specify shortly). This case will not recursively call $\Preprocess$. 
\end{itemize}
As we will formally show below, recursively applying the ball-carving hash function result in calls to $\Preprocess$ with datasets of decreasing radii. This is because Lemma~\ref{lem:hash} guarantees that we separate far points with at least some probability. Eventually the radius becomes $0$ and Line~\ref{ln:check-empty} ends the recursion. 

 \begin{algorithm}[H]
	\caption{Preprocessing of a dataset $P \subset \R^d$ into a data structure.} \label{alg:preprocess}
	\label{alg:prepro}
	\begin{algorithmic}[1]
		\Procedure{$\textsc{Preprocess}(P)$	
		}{}
		\State Initialize a node $u$. 
		\State\label{ln:enclose} Let $\MEB(P) \subset \Rbb^d$ denote an (approximately) minimum enclosing ball of $P$.
		\State Store the parameters $u.\radius, u.\cen, u.\sfR, u.\size$ as
		\begin{align*} 
		u.\radius &= \text{radius of $\MEB(P) \in \Rbb_{\geq0}$} \\
		u.\cen &= \text{center of $\MEB(P) \in \Rbb^d$} \\
		u.\sfR &= 2 \cdot u.\radius \\
		u.\size &= |P| \in \Nbb.
		\end{align*}
		\If{$u.\radius = 0$}\label{ln:check-empty}
			\State\label{ln:single-point} \Return $u$.
		\EndIf
		\State\label{ln:sample} Sample $(\bc, \boldr) \sim \calD$ from Lemma~\ref{lem:hash} with $R$ set to $u.\sfR$ and $\alpha = c_1 / (4 \log n \log \Phi)^2$.
		\State\label{ln:well-separate} Store $u.\sfc = \bc$ and $u.\sfr = \boldr$ and define the following quantities
		\begin{align*} 
		u.\rrmin &= \frac{u.\sfR}{100} \\
		u.\rrin &= \boldr - \frac{\alpha \cdot u.\sfR}{\sqrt{d}} \\
		u.\rrout &= \boldr + \frac{\alpha \cdot u.\sfR}{\sqrt{d}} \\
		u.\rrmax &=  3 \cdot u.\sfR \\
		u.\newcen &= u.\sfc + u.\cen.
		\end{align*}
		\State We defined the following subsets of $P$ (which do not necessarily partition $P$):
		\[ \Capt(b) = \left\{ p \in P : h_{\bc, \boldr}(p - u.\cen) = 1 - b\right\} ,\qquad\text{for $b \in \{0,1\}$}  \]
		\For{$b \in \{0,1\}$}
		\State\label{ln:inside-ball} Run $\ProcessCaptured(\Capt(b), u.\newcen, u.\rrmin, u.\rrin, u.\rrout, u.\rrmax)$.
		\State Store the output in $u.\InnerBall(b)$.
		\If{$P \setminus \Capt(b) \neq \emptyset$}
		\State\label{ln:child} Execute $\Preprocess(P \setminus \Capt(b))$ and store the output in $u.\Child(b)$.
		
		\EndIf
		\EndFor
		\State\label{ln:handle-far} Run $\PreprocessFar(P,  u.\cen, u.\radius, u.\sfR)$ and store in $u.\FarDS$.
		\State \Return $u$.
		\EndProcedure
	\end{algorithmic}
\end{algorithm}

 \begin{algorithm}[H]
	\caption{Querying a data structure rooted at $u$ with a point $q \in \R^d$. } 
	\label{alg:query}
	\begin{algorithmic}[1]
		\Procedure{$\textsc{Query}(q, u)$	
		}{}
		\If{$\|q - u.\cen\|_2 \leq u.\sfR$}
			\State We evaluate the hash function $h_{u.\sfc, u.\sfr}(q - u.\cen)$, and set $b \in \{0, 1, *\}$ to be its output. 
			\If{$b = *$}
			\State \Return ``fail.''
			\EndIf 
			\State\label{ln:capt} Execute $\QueryCaptured(q, u.\InnerBall(b), u.\newcen, u.\rrmin, u.\rin, u.\rout, u.\rrmax)$ 
		\State\label{ln:recurse} Execute $\Query(q, u.\Child(b))$.
		\State \Return the sum of the outputs of Line~\ref{ln:capt} and Line~\ref{ln:recurse}.
		\EndIf
		\If{$\|q - u.\cen\|_2 > u.\sfR$}
			\State \Return $\QueryFar(q, u.\FarDS, u.\cen, u.\sfR)$.
		\EndIf
		\EndProcedure
	\end{algorithmic}
\end{algorithm}

\subsection{The $\PreprocessFar$ and $\QueryFar$ Algorithms} \label{sec:process-and-query-far}

In this section, we give the descriptions of $\PreprocessFar$ (Algorithm~\ref{alg:preprocess-far}) and  $\QueryFar$ (Algorithm~\ref{alg:query-far}). These are meant to capture cases where the query lies substantially far from the dataset. The approach will be straight-forward: we will partition the (query) space $\Rbb^d$ into geometrically increasing balls (up to a certain point), and maintain a coreset for each of these balls. The remaining piece is to bound how many balls we need. 

\begin{defn}\label{def:far-radius}
Let $\sfK \colon \Rbb^d \times \Rbb^d \to [0, 1]$ be a p.d radial kernel. For any $r \geq 0$ and $\eps, \xi \in (0, 1)$, let $R(r,\eps, \xi) > 0$ denote the minimum over all $R \geq 0$ such that, for all datasets $P \subset \Rbb^d$ where $P \subset B_2(0, r)$, and for all queries $q \in \Rbb^d$ with $\|q\|_2 \geq R$, 
\[ (1-\eps) \sum_{p \in P} \sfK(p, q) - 2\xi |P| \leq |P| \cdot \sfK(0, q) \leq (1+\eps) \sum_{p \in P} \sfK(p, q) + 2\xi |P|.  \]
\end{defn}
\begin{claim}\label{cl:far-radius}
We always have $R(r, \e, \xi)=O(r \ln(1/\xi)/\eps)$.
\end{claim}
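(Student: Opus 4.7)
}
My plan is to set $R = C \cdot r\ln(1/\xi)/\eps$ for a suitable constant $C$ depending on $L$ and $t$, and then verify the bound in Definition~\ref{def:far-radius} by establishing the per-point estimate
\[ \bigl|\sfK(p,q) - \sfK(0,q)\bigr| \;\leq\; \eps\cdot \sfK(p,q) \;+\; 2\xi \qquad \forall\,p\in B_2(0,r), \]
and then summing this estimate over $p \in P$.

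First, I would use the triangle inequality to control the relative distance: since $\|p\|_2 \leq r$ and $\|q\|_2 \geq R \geq 2r$, we have $\|p-q\|_2 \in [\|q\|_2 - r,\,\|q\|_2+r]$, so
\[ \max\!\left\{\dfrac{\|p-q\|_2}{\|q\|_2},\,\dfrac{\|q\|_2}{\|p-q\|_2}\right\} \;\leq\; \dfrac{1}{1 - r/\|q\|_2} \;\leq\; 1 + \dfrac{2r}{R}. \]
Plugging this into the $(L,t)$-smoothness inequality with $p_1 = p$ and $p_2 = 0$ yields
\[ \max\!\left\{\dfrac{\sfK(p,q)}{\sfK(0,q)},\,\dfrac{\sfK(0,q)}{\sfK(p,q)}\right\} \;\leq\; L\cdot\left(1 + \dfrac{2r}{R}\right)^{t} \;\leq\; L\cdot\exp\!\left(\dfrac{2tr}{R}\right). \]

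To close out the per-point estimate, I would split into two cases based on the magnitude of $\sfK(0,q)$. In the \emph{large-kernel} regime where $\sfK(0,q) \geq \xi$, I pick $C$ large enough (depending on $L,t$) that the exponential above is at most $1+\eps$; this immediately gives $\sfK(p,q) \in [(1-\eps)\sfK(0,q),(1+\eps)\sfK(0,q)]$, yielding the per-point bound without using the additive slack. In the \emph{small-kernel} regime where $\sfK(0,q) < \xi$, the same smoothness bound forces $\sfK(p,q) \leq O(L)\cdot \xi$, so both terms are $O(\xi)$ and their difference is absorbed into the additive $2\xi$ slack. The role of the $\ln(1/\xi)$ factor in $R$ is to guarantee enough ``room'' at the boundary between these two regimes (i.e.\ to drive the exponential factor below $1+\eps$ even after the smoothness constant $L$ is factored in), and to ensure that after summing $|P|$ per-point errors, the aggregate additive slack remains at most $2\xi|P|$ rather than growing.

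The only obstacle I anticipate is bookkeeping with the constants: matching the threshold used to split the two regimes against the exact smoothness bound in a way that yields $2\xi|P|$ total additive error (rather than a larger multiple). This is a routine calculation with no new ideas required; the core content is just that smoothness of $\sfK$ together with $\|p\|_2/\|q\|_2 \ll 1$ makes $\sfK(p,q)$ and $\sfK(0,q)$ multiplicatively close whenever either is non-negligible.
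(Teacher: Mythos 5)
There is a genuine gap, and it sits in the large-kernel regime of your argument. The $(L,t)$-smoothness inequality (Definition~\ref{dfn:smooth}) gives $\max\{\sfK(p,q)/\sfK(0,q),\,\sfK(0,q)/\sfK(p,q)\} \leq L\cdot(1+2r/R)^{t}$, and this bound never drops below $L$, no matter how large you take $R$ (or your constant $C$): smoothness only caps kernel ratios by $L$ times a power of the distance ratio, it does not force the ratio to tend to $1$ as the distance ratio tends to $1$. So for any kernel with $L>1$ (say $L=2$), your large-kernel case cannot conclude $\sfK(p,q)\in[(1-\eps)\sfK(0,q),(1+\eps)\sfK(0,q)]$; the difference $|\sfK(p,q)-\sfK(0,q)|$ can a priori be as large as $(L-1)\cdot\sfK(0,q)$, which is not absorbed by $\eps\,\sfK(p,q)+2\xi$ when $\sfK(0,q)\gg\xi$. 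A smaller issue of the same flavor: in your small-kernel regime the per-point additive error is $O(L)\cdot\xi$, which already overshoots the $2\xi$ budget of Definition~\ref{def:far-radius} when $L$ is large. The fact that $\ln(1/\xi)$ plays no essential role in your argument is a symptom that the mechanism is different: the claim does not use smoothness at all.

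The paper's proof works for any p.d radial kernel bounded by $1$ via Theorem~\ref{hbw}: write $\sfK(p,q)=\int_{t:0}^{\infty} e^{-t\|p-q\|_2^2}\,\mu(dt)$ with total mass at most $1$, set $R=C r\ln(1/\xi)/\eps$, and split the integral at $t_0 = 4\ln(1/\xi)/\|q\|_2^2$. For $t\geq t_0$ one has $e^{-t\|p-q\|_2^2}\leq e^{-t\|q\|_2^2/4}\leq \xi$ (this is exactly where the $\ln(1/\xi)$ factor is needed), while for $t\leq t_0$ the choice of $R$ makes $t\|p\|_2^2$ and $t r\|q\|_2$ of order $\eps$, so $e^{-t\|p-q\|_2^2}=e^{-t\|q\|_2^2}\cdot e^{-t\|p\|_2^2+2t\langle p,q\rangle}$ lies within $(1\pm\eps/2)$ of $e^{-t\|q\|_2^2}$ for every $p\in B_2(0,r)$. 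Integrating over $t$ and summing over $p\in P$ gives the claim with an absolute constant, independent of $L$ and $t$. If you want to salvage a smoothness-based route, you would need a hypothesis forcing the ratio bound to approach $1$ (essentially $L=1$, as happens for the $2$-Student kernel), but that is strictly weaker than what the claim asserts and than what the data structure in Section~\ref{sec:process-and-query-far} relies on.
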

\begin{proof}

By Theorem~\ref{hbw} there exists a non-negative finite Borel measure $\mu$ such that
\begin{equation}\label{eq:bhw}
\sfK(p, q) = \int_{t:0}^{\infty} e^{-t \|p-q\|_2^2} \mu(dt)
\end{equation}
for all $p, q\in \RR^d$. We consider the parameter $R = C r \ln(1/\xi) / \eps$, for large enough constant $C$, and we show that enforcing $\| q\|_2 \geq R$ suffices. Note that $R \geq 2r$, and that for any $p\in B_2(0,r)$, the following occurs. If we consider a query $q\in \Rbb^d$ with $\|q \|_2 \geq R$, we always have $\|p-q\|_2 \geq \|q\|_2 - \|p\|_2 \geq \|q\|_2 / 2$. Consider first the case that $t \geq 0$ satisfies, $t \|q\|_2^2\geq 4\ln(1/\xi)$. Then, we will have 
\begin{equation}\label{eq:cutoff}
e^{-t \|p-q\|_2^2}\leq e^{-t \|q\|_2^2/4}\leq \xi.
\end{equation}
On the other hand, if $t \geq 0$ is such that $t \|q \|_2^2 \leq 4 \ln(1/\xi)$, then 
\begin{align*}
t \leq \frac{4 \ln(1/\xi)}{\|q\|_2^2} \leq \frac{4 \eps^2}{C^2 r^2 \ln(1/\xi)} \qquad\text{and}\qquad t \|q\|_2 \leq \frac{4 \eps}{C r},
\end{align*}
which implies the following two inequalities:
\begin{align*}
\dfrac{e^{-t\|p-q\|_2^2}}{e^{-t\|q\|_2^2}} = e^{-t\|p\|_2^2 + 2t \langle p, q\rangle} &\leq e^{2tr \|q\|_2} \leq e^{8\eps / C} \leq 1+\eps/2 \\
e^{-t\|p\|_2^2 + 2t \langle p, q\rangle} &\geq e^{-tr^2 - 2t r \|q\|_2} \geq e^{-4\eps^2/(C^2 \ln(1/\xi)) - 8\eps / C} \geq 1 - \eps/2.
\end{align*}
Putting both cases together, whenever $\|q\|_2 \geq R$, we have
\begin{align*}
\sum_{p \in P} \sfK(p, q) &= \sum_{p \in P} \int_{t:0}^{\infty} e^{-t\|p-q\|_2^2} \mu(dt) \leq \sum_{p \in P} \int_{t:0}^{4\ln(1/\xi)/\|q\|_2^2} e^{-t\|p-q\|_2^2} \mu(dt) + |P| \xi \\
				     &\leq (1+\eps/2) \sum_{p \in P} \int_{t:0}^{4\ln(1/\xi) / \|q\|_2^2} e^{-t \|q\|_2^2} \mu(dt) + |P| \xi \leq (1+\eps/2) |P| \cdot \sfK(0, q) + |P| \xi,
\end{align*}
and analogously, 
\begin{align*}
\sum_{p \in P} \sfK(p, q) &\geq \sum_{p \in P} \int_{t:0}^{4\ln(1/\xi)/\|q\|_2^2} e^{-t\|p-q\|_2^2} \mu(dt) \geq (1-\eps) \sum_{p \in P} \int_{t:0}^{4\ln(1/\xi) / \|q\|_2^2} e^{-t\|q\|_2^2} \mu(dt) \\
					 &\geq (1-\eps/2) \sum_{p \in P} \int_{t:0}^{\infty} e^{-t\|q\|_2^2} \mu(dt) - |P| \xi \geq (1-\eps/2) |P| \cdot \sfK(0, q) - |P| \xi.
\end{align*}
Re-arranging terms gives the desired inequality.
\ignore{
\begin{equation}\label{eq:ratio}
\begin{split}
e^{-t \|p-q\|^2}/e^{-t \|p'-q\|^2}=e^{-t \|p\|^2+t \|p\|^2}\cdot e^{-2t \langle p-p', q\rangle},
\end{split}
\end{equation}

and we next show that the two terms on the rhs are close to $1$. Suppose that $\|q\|\geq R(r, \e, \xi)\geq C(r/\e)\sqrt{\log(1/\xi)}$, as per the assumption of the claim. Note that in that case 
\begin{equation}\label{eq:p-vs-q}
\|p\|\leq r\leq \frac{\e}{C\sqrt{\log 1/\xi}}\cdot \|q\|.
\end{equation}

The exponent of the first factor on the rhs of~\eqref{eq:ratio} satisfies
$$
t \cdot \left|-\|p\|^2+\|p'\|^2\right| \leq t r^2\leq \frac{\e^2}{C^2\log(1/\xi)}\leq \e/3.
$$

The exponent of the second factor on the rhs of~\eqref{eq:ratio} satisfies
\begin{equation*}
\begin{split}
\left|2t \cdot \langle p-p', q\rangle\right|&\leq 2t\cdot \|p-p'\| \|q\|\\
&\leq 4t \|q\|^2\cdot \frac{\e}{C\sqrt{\log (1/\xi)}}\\
&\leq 4 (16\log (1/\xi))\cdot \frac{\e}{C\log (1/\xi)}\\
&\leq 4 (16\log (1/\xi))\cdot \frac{\e}{C\log (1/\xi)}\\
&\leq \e/4\\
\end{split}
\end{equation*}
as long as $C>0$ is larger than an absolute constant.

Substituting the above two bounds into~\eqref{eq:ratio}, we get that 
$$
1-\e\leq e^{-t \|p-q\|^2}/e^{-t \|p'-q\|^2}\leq 1+\e
$$
as long as $\e$ is smaller than an absolute constant. Combining this with~\eqref{eq:cutoff} and then~\eqref{eq:bhw} gives the result.

\xxx[MK]{seems correct to me know, but good for someone to check}}
\end{proof}

 \begin{algorithm}[H]
	\caption{Preprocessing of a dataset $P \subset \R^d$ within a ball when the query will be far.} \label{alg:preprocess-far}
	\begin{algorithmic}[1]
		\Procedure{$\PreprocessFar(P, c, \rin, \sfr)$}{}
		\State\label{ln:find-center} Initialize a node $u$, and sample a point $\bc_0 \sim B_2(0, \rin/3)$ and let
		\begin{align*}
		u.\cen &= c + \bc_0 \in \Rbb^d \\
		u.\rrmin &= \frac{\rin}{100} \\
		u.\rrin &= 4 \cdot \rin / 3 \\
		u.\sfr &= \sfr - \rin / 3 \geq \frac{5}{4} \cdot u.\rrmin \quad \text{(when $\sfr \geq 2\cdot \rmin$)} \\
		u.\size &= |P|.
		\end{align*}
		\For{$h \in \{0, \dots, \lceil \log_2\left(R(u.\sfr,\eps, \xi) / u.\sfr\right) \rceil\}$}
		\State Execute $\ProcessCaptured(P, u.\cen, u.\rrmin, u.\rrin, 2^h \cdot u.\sfr, 2^{h+1} \cdot u.\sfr)$,
		\State Store the output in $u.\OuterBall(h)$. 
		\EndFor
		\State \Return $u$.
		\EndProcedure
	\end{algorithmic}
\end{algorithm}

 \begin{algorithm}[H]
	\caption{Querying of a dataset $P \subset \R^d$ within a ball when the query is far.} \label{alg:query-far}
	\begin{algorithmic}[1]
		\Procedure{$\QueryFar(q, u, c, \sfr)$}{}
		\State Compute $r = \| q - u.\cen\|_2$ (note we will always have $r \geq u.\sfr$).
		\State Let $h \in \{0, 1,2, \dots\}$ such that $2^h \cdot u.\sfr \leq r \leq 2^{h+1} \cdot u.\sfr$. 
		\If{$h \leq \lceil \log_2(R(u.\sfr, \eps, \xi) / u.\sfr)\rceil$}
			\State \Return $\QueryCaptured(q, u.\OuterBall(h), u.\rrmin, u.\rrin, 2^h \cdot u.\sfr , 2^{h+1} \cdot u.\sfr)$.
		\EndIf
		\If{$h > \lceil \log_2(R(u.\sfr, \eps, \xi) / u.\sfr) \rceil$}
			\State \Return $u.\size \cdot \sfK(q, u.\cen)$.
		\EndIf
		\EndProcedure
	\end{algorithmic}
\end{algorithm}

\begin{lem}
Let $n,d \in \Nbb$, and suppose that, for $L, t \geq 1$, $\sfK \colon \Rbb^d \times \Rbb^d \to [0,1]$ is a p.d radial kernel which is $(L,t)$-smooth. There are two randomized algorithms with the following guarantees:
\begin{itemize}
\item $\emph{\PreprocessFar}(P, c, \rin ,\sfr)$ receives as input a set $P \subset \Rbb^d$ of size at most $n$, a point $c$, and two thresholds $\rin$ and $\sfr$, where $\sfr \geq 2 \cdot \rin$. We are promised that every $p \in P$ satisfies $\| p - c\|_2 \leq \rin$. The algorithm outputs a pointer to a data structure $\bu$.
\item $\emph{\QueryFar}(q, u, c, \sfr)$ receives as input a query $q \in \Rbb^d$, and a pointer to a data structure (generated from $\emph{\PreprocessFar}$). We are promised that the query $q$ satisfies $\|q - c\|_2 \geq \sfr$, and the algorithm returns a value $\boldeta \in \Rbb_{\geq 0}$.
\end{itemize}
We satisfy the following guarantees with probability at least $1 - 1/\poly{nd \Phi/\eps}$:
\begin{itemize}
\item \emph{\textbf{Correctness}}: If $\bu$ is the output of $\emph{\PreprocessFar}(P, c, \rin, \sfr)$ and $\boldeta$ is the output of $\emph{\QueryFar}(q, \bu, c, \sfr)$, then
\[ (1-\eps) \sum_{p \in P} \sfK(p, q) - 2\xi |P| \leq \boldeta \leq (1+\eps) \sum_{p \in P} \sfK(p, q) + 2\xi |P|.  \]
\item \emph{\textbf{Preprocessing Time and Space Complexity}}: The algorithm $\emph{\PreprocessFar}$ preprocesses inputs in time at most $\poly{nd\log\Phi \ln(1/\xi)/\eps}$ to output the data structure $\bu$. The total space of $\bu$ is, up to a constant factor, at most
\[ \frac{d}{\eps} \cdot \polylog (dn\Phi\ln(1/\xi)/\eps) \cdot L \cdot \left( O(\sqrt{\ln(1/\xi)}) \right)^t\] 
\item \emph{\textbf{Query Time}}: The algorithm $\emph{\QueryFar}$ outputs an estimate $\boldeta$ in time, up to a constant factor, at most
\[ \frac{d}{\eps} \cdot \log^2(nd\Phi/\eps) \cdot L \cdot \left( O(\sqrt{\ln(1/\xi)} \right)^t\]
\end{itemize}
\end{lem}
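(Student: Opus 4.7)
The plan is to verify the three bullets in turn by reducing $\PreprocessFar$/$\QueryFar$ to a small, logarithmic collection of calls to the coreset primitive of Lemma~\ref{lem:coreset} (for nearby queries) plus the degenerate ``far center approximation'' justified by Claim~\ref{cl:far-radius} (for very distant queries).

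The first step is to establish the geometric invariants relating $u.\cen$ to $P$ and to any admissible $q$. Since $u.\cen = c + \bc_0$ with $\|\bc_0\|_2 \leq \rin/3$ and every $p \in P$ has $\|p - c\|_2 \leq \rin$, the triangle inequality gives the upper bound $\|p - u.\cen\|_2 \leq 4\rin/3 = u.\rrin$. For the lower bound $\|p - u.\cen\|_2 \geq \rin/100 = u.\rrmin$, I would use the fact that $\bc_0$ is drawn uniformly from $B_2(0, \rin/3)$: the measure of the intersection of this ball with any $\rin/100$-radius ball is at most $(3/100)^d$, so in dimension $d = \omega(\log n)$ a union bound over $p \in P$ yields the bound with probability at least $1 - 1/\poly{nd\Phi/\eps}$. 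For the query, the promise $\|q - c\|_2 \geq \sfr$ together with $\|\bc_0\|_2 \leq \rin/3$ gives $\|q - u.\cen\|_2 \geq \sfr - \rin/3 = u.\sfr$. Combined with $\sfr \geq 2\rin$ one checks that $u.\rrin < u.\sfr$, so $P$ and $q$ lie in well-separated shells around $u.\cen$ with inner radii $(u.\rrmin, u.\rrin)$ and outer radii $(u.\sfr, \|q - u.\cen\|_2)$ respectively.

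The second step is correctness of $\QueryFar$. Set $r = \|q - u.\cen\|_2$ and let $h$ be the scale with $2^h \cdot u.\sfr \leq r \leq 2^{h+1} \cdot u.\sfr$. If $h \leq \lceil \log_2(R(u.\sfr,\eps,\xi)/u.\sfr)\rceil$, then the stored $u.\OuterBall(h)$ was produced by $\ProcessCaptured(P, u.\cen, u.\rrmin, u.\rrin, 2^h u.\sfr, 2^{h+1} u.\sfr)$, and the invariants above place $P$ inside the inner shell and $q$ inside the outer shell of Lemma~\ref{lem:coreset}, so $\QueryCaptured$ returns an estimate satisfying the required $(1\pm\eps)\sfK(P,q) \pm 2\xi|P|$ bound. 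If instead $h > \lceil \log_2(R(u.\sfr,\eps,\xi)/u.\sfr)\rceil$, then $r \geq R(u.\sfr,\eps,\xi)$ and $P$ lies in $B_2(u.\cen, u.\sfr)$ (since $u.\rrin \leq u.\sfr$), so Claim~\ref{cl:far-radius} applied in the coordinate system translated by $u.\cen$ gives exactly the guarantee for returning $u.\size \cdot \sfK(q, u.\cen)$. I would union bound the failure probability of the finitely many $\ProcessCaptured$/$\QueryCaptured$ calls by setting $\delta = 1/\poly{nd\Phi/\eps}$.

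Finally I would extract the quantitative complexity bounds. By Claim~\ref{cl:far-radius}, the number of scales is $\lceil \log_2(R(u.\sfr,\eps,\xi)/u.\sfr)\rceil + 1 = O(\log(\ln(1/\xi)/\eps))$. For each scale $h$, the parameters $\rmin = \rin/100$, $\rin_{\text{in}} = 4\rin/3$, $\rout = 2^h u.\sfr$, $\rmax = 2^{h+1} u.\sfr$ plugged into Lemma~\ref{lem:coreset} give size at most
\[ \frac{d \log^2(n/\delta)}{\eps} \cdot L \cdot \left( \frac{2 \cdot 2^{h+1} u.\sfr}{2^h u.\sfr - 4\rin/3} \cdot \frac{400}{3} \cdot \sqrt{\ln(1/\xi)} \right)^t, \]
where the ratio $2\rmax/(\rout - \rin_{\text{in}})$ is $O(1)$ using $u.\sfr \geq 5\rin/3$ for $h=0$ and trivially for $h \geq 1$. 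Summing over the $O(\log(\ln(1/\xi)/\eps))$ scales yields the claimed space bound. Preprocessing time is $\poly{nd\log\Phi\ln(1/\xi)/\eps}$ per scale by Lemma~\ref{lem:coreset}, and $\QueryFar$ performs a single $\QueryCaptured$ call (or a single kernel evaluation), giving the claimed query time. The main obstacle I anticipate is the verification that the ratio $\rmax/(\rout - \rin_{\text{in}})$ remains bounded at the smallest scale $h=0$; this is exactly why the algorithm insists $\sfr \geq 2\rin$ and separates $u.\rrin$ from $u.\sfr$ by a constant factor, and beyond this boundary case the remaining calculations are routine.
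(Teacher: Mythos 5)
Your proposal is correct and follows essentially the same route as the paper: use the random shift $\bc_0$ to guarantee $\|p - u.\cen\|_2 \in [u.\rrmin, u.\rrin]$ and $\|q - u.\cen\|_2 \geq u.\sfr$ with high probability, invoke Lemma~\ref{lem:coreset} at each of the $O(\log(\ln(1/\xi)/\eps))$ geometric scales with the ratio $\frac{2\rmax}{\rout - \rin}\cdot\frac{\rin}{\rmin} = O(1)$, and fall back to $u.\size \cdot \sfK(q, u.\cen)$ beyond radius $R(u.\sfr,\eps,\xi)$ via Claim~\ref{cl:far-radius}. In fact you supply some details the paper leaves implicit (the volume-ratio argument for the lower bound $\|p-u.\cen\|_2 \geq u.\rrmin$ and the explicit check of the $h=0$ boundary case), so no gaps to report.
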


\begin{proof}
The purpose of Line~\ref{ln:find-center} is to find an appropriate center which will guarantee some separation between the query and the dataset, and to ensure that no dataset point is too close to the center. In particular, since $\bc_0 \sim B_2(0, \rin/2)$ is drawn randomly and $d = \omega(\log n \log \log\Phi))$, it is not too hard to check that with high probability, the new center $u.\newcen$ and thresholds $u.\rrmin < u.\rrin < u.\sfr$ satisfy that (i) $u.\sfr - u.\rrin \geq u.\rrin / 4$, (ii) every $p \in P$ satisfies $\| p - u.\cen\|_2 \in [u.\rrmin, u.\rrin]$, and (iii) every $q \in \Rbb^d$ with $\|q - c\|_2 \geq \sfr$ will also satisfy $\|q - u.\cen\|_2 \geq u.\sfr$. Thus, the calls to $\ProcessCaptured(P, u.\cen, u.\rrmin, u.\rrin, 2^h \cdot u.\sfr, 2^{h+1} \cdot u.\sfr)$ and\\ $\QueryCaptured(q, u.\OuterBall(h), u.\rrmin, u.\rrin, 2^h \cdot u.\sfr, 2^{h+1} \cdot u.\sfr)$ satisfy the correctness guarantees for queries $q$ with $\|q - u.\cen\|_2 \in [2^h \cdot u.\sfr, 2^{h+1} \cdot u.\sfr]$. If $\|q - u.\cen\|_2 \geq R(u.\sfr, \eps, \xi)$, then by Definition~\ref{def:far-radius}, the entire kernel contribution of $P$ is approximated by $\sfK(q, u.\cen) \cdot u.\size$. The bound on the space and query time follow from Lemma~\ref{lem:coreset}, plugging in $\rmax = 2^{h+1} \cdot u.\sfr$, $\rout = 2^h \cdot u.\sfr$, $\rin = u.\rrin$ and $\rmin = u.\rrmin$
\[ \frac{2 \cdot 2^{h+1} \cdot u.\sfr}{2^{h} \cdot \sfr - u.\rrin} \cdot \frac{u.\rrin}{u.\rrmin} = O(1). \]
 \end{proof}

\subsection{Analysis of $\Preprocess$ and $\Query$}

Before we begin to analyze $\Preprocess(P)$, it is useful to check that the invocations of $\ProcessCaptured$ satisfy the requirements specified in Lemma~\ref{lem:coreset}. We check both cases of $b \in \{0, 1\}$ individually. 
\begin{itemize}
\item \textbf{Case $b = 1$}. By definition of $h_{c, r}$, we have the following guarantees. The set $\Capt(1)$ consists of all points $p \in P$ where $p$ is within $B_2(u.\newcen, u.\rrin)$. Furthermore, by a union bound over $|P|$ points, with probability at least $1 - |P|/2^{\Omega(d)}$, every $p \in \Capt(1)$ also satisfies $\| p - u.\newcen\|_2 \geq u.\sfR / 100 = u.\rrmin$. Thus, the call to $\ProcessCaptured$ with $b = 1$ satisfies Inequality~\ref{eq:dataset-inside} in Lemma~\ref{lem:coreset}, and will handle cases where $\|q - u.\newcen\|_2 \in [u.\rrout, u.\rrmax]$. 
\item \textbf{Case $b = 0$}. In a similar vein, $\Capt(0)$ consists of points whose distance to $u.\newcen$ is at least $u.\rrout$. Furthermore, since $P \subset B_2(u.\cen, u.\radius)$ and by the sampling procedure (recall $u.\sfc \sim \calN(0, u.\sfR^2 I_d / d)$,) we have $\| u.\sfc\|_2 \leq 2 u.\sfR$ with probability $1 - 2^{-\Omega(d)}$, we have $P \subset B_2(u.\newcen, 3u.\sfR)$, so $\|p - u.\newcen\|_2 \leq u.\rrmax$. Thus, the call to $\ProcessCaptured$ with $b=1$ satisfies Inequality~\ref{eq:dataset-outside} and $\| q - u.\newcen\|_2 \in [u.\rrmin, u.\rrin]$ in Lemma~\ref{lem:coreset} with probability at least $1-2^{-\Omega(d)}$. 
Hence, in both $b \in \{0,1\}$ of Line~\ref{ln:inside-ball}, we apply Lemma~\ref{lem:coreset} with
\[\dfrac{2 \cdot \rmax}{\rout - \rin} \cdot \frac{\rin}{\rmin} = O\left( \frac{\sqrt{d}}{\alpha}\right). \]
\end{itemize} 

With those remarks set, we now analyze the space complexity $\Preprocess(P)$. Since we will later bound the space complexity of $\ProcessCaptured$, we will mostly be concerned with the size of the binary tree rooted at $u$ produced by $\Preprocess(P)$. The one challenging aspect in bounding the size of the tree is that the two sets $P \setminus \Capt(0)$ and $P \setminus \Capt(1)$ are not necessarily disjoint. In particular, if $p \in P$ happens to satisfy $h_{\bc, \boldr}(p - u.\cen) = *$, then $p$ is contained in both sets. This will mean that the tree may be super-linear in size (if we are not careful about the setting of $\alpha$). We will first bound the depth of the tree.

\begin{lem}
The total depth of an execution of $\emph{\Preprocess}(P)$ is $O(\sqrt{d} \log(n \log \Phi) \log \Phi)$ with high probability.
\end{lem}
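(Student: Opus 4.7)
The depth of $\Preprocess(P)$ equals the length of the longest root-to-leaf path in the binary recursion tree. Two easy observations get the analysis started. First, because $P \setminus \Capt(b) \subseteq P$ for $b \in \{0,1\}$, the MEB radius $u.\radius$ is non-increasing along any root-to-leaf path, and the recursion terminates at Line~\ref{ln:check-empty} exactly when $u.\radius = 0$. Second, under the aspect-ratio assumption on $P \cup \{q\}$, any non-leaf node has dataset diameter at least the minimum pairwise distance $r_{\min}$ (hence MEB radius at least $r_{\min}/2$), while the root has MEB radius at most the diameter $r_{\max}$ of $P$. Hence along any path the MEB radius passes through at most $O(\log \Phi)$ dyadic scales before the recursion terminates, and my plan is to show that each scale costs at most $O(\sqrt{d} \log(n \log \Phi))$ recursion levels with high probability.

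\paragraph{Scale halving via Lemma~\ref{lem:hash}.}
The key technical claim is that at any node with MEB radius $R$, over the random draw $(\bc, \boldr) \sim \calD$ in Line~\ref{ln:sample}, with probability at least $\Omega(1/\sqrt{d})$ the MEB radii of \emph{both} recursive children are at most $R/2$. To argue this, I would use that the MEB of $P$ is determined by at least two boundary points $p^*, q^* \in P$ at mutual distance $\Omega(R)$; the \emph{Separate Far Points} guarantee of Lemma~\ref{lem:hash} (applied with its $R$-parameter set to $u.\sfR = 2R$) then implies that $p^*$ and $q^*$ are assigned to different sides of the hash ball with probability at least $c_1/\sqrt{d}$. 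Combined with the random choice of $\boldr$ uniform in $[0, 3u.\sfR]$, which with constant probability places the hash ball so as to cut the MEB into two pieces of small circumradius, this separation forces both children's MEBs to shrink by a constant factor.

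\paragraph{Concentration and union bound.}
Fix any root-to-leaf path. Over $T = \Theta(\sqrt{d}\log(n\log\Phi))$ independent hash samples, the scale-halving event occurs at least once with probability at least $1 - (1 - \Omega(1/\sqrt{d}))^T \geq 1 - 1/\poly(n\log\Phi)$ by a standard calculation. To bound the tree globally, I would use that the \emph{Avoid Boundary} bound of Lemma~\ref{lem:hash} implies each point falls in the shell (and hence is duplicated into both children's datasets) with probability at most $\alpha/\sqrt{d}$, where $\alpha = c_1/(4\log n\log\Phi)^2$ from Line~\ref{ln:sample}. Hence the expected total dataset size across all nodes at level $k$ is at most $n(1+\alpha/\sqrt{d})^k$, which remains $O(n)$ for all $k$ up to the target depth. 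By Markov's inequality, the tree has $\poly(n,\log \Phi)$ nodes with high probability, and a union bound over those nodes and the $O(\log \Phi)$ scale-halving phases along any path yields the claimed depth bound.

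\paragraph{Main obstacle.}
The crucial step is the probabilistic claim in the second paragraph: that a single random ball-carving halves the MEB of \emph{both} children simultaneously with probability $\Omega(1/\sqrt{d})$. The subtlety is that after separating the diameter pair $p^*, q^*$, the ``outside-the-ball'' child still contains many of the remaining MEB boundary points, so a naive application of Lemma~\ref{lem:hash} does not shrink its MEB. Overcoming this requires exploiting the full randomness of $(\bc, \boldr)$ --- in particular, showing that the hash ball is well-aligned with the MEB geometry with constant probability, so that a constant-fraction cut of the MEB in the direction of $\bc$ forces a corresponding constant-factor circumradius reduction on both sides.
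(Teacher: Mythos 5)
There is a genuine gap, and you have in fact flagged it yourself: your argument hinges on the claim that a single draw of $(\bc,\boldr)$ makes \emph{both} children's MEB radii at most $R/2$ with probability $\Omega(1/\sqrt{d})$, and this claim is false in general, not merely unproven. Separating one diameter pair $p^*,q^*$ says nothing about the child lying outside the carved ball: if the dataset is spread over a sphere of radius $R$, carving out one ball of radius $O(R)$ around a random center removes only part of the data, and the ``outside'' child typically still has MEB radius $\Theta(R)$. More fundamentally, halving a child's MEB requires separating essentially \emph{every} pair at distance $\Omega(R)$ simultaneously, while Lemma~\ref{lem:hash} only gives a $c_1/\sqrt{d}$ separation probability per fixed pair, so no single hash can be expected to do this; there is no ``well-aligned with the MEB geometry'' event of constant probability to be found. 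Consequently your per-scale budget computation (the $1-(1-\Omega(1/\sqrt{d}))^T$ step) is counting repetitions of an event that does not do the work you need.

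The paper avoids this by never asking a single hash to shrink anything. It fixes one pair $p_1,p_2$ at distance at least $u.\radius/2$ and tracks the random subtree $\bT$ of descendants whose dataset still contains both points: at each such node the pair is separated (no descendant in $\bT$) with probability at least $c_1/\sqrt{d}$, is duplicated via the $*$ outcome (two descendants) with probability at most $\alpha/\sqrt{d}$, and otherwise passes to one child. This branching process has depth $s=\Theta(\sqrt{d}\log(n\log\Phi))$ with probability at most $(1-c_1/\sqrt{d}+2\alpha/\sqrt{d})^{s-1}\le (n\log\Phi)^{-10}$, and a union bound over the at most $n^2$ far pairs shows that after $s$ levels no far pair survives on any path, so the radius has halved; only then does the $O(\log\Phi)$ count of dyadic scales enter, exactly as in your outer skeleton. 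So your decomposition into scales and your use of the Avoid Boundary bound to control duplication are consistent with the paper, but the core per-scale mechanism must be replaced by this per-pair, multi-level branching argument (the halving accumulates over $\Theta(\sqrt{d}\log(n\log\Phi))$ hashes rather than occurring in one lucky hash).
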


\newcommand{\bT}{\mathbf{T}}

\begin{proof}
We will show that the following occurs with high probability, which in turn implies the desired bound on the depth. Let $u$ be any node of the tree and $P_u \subset P$ such that $u$ was the output of $\Preprocess(P_u)$. Then, over the randomness of the next $s$ levels down the tree, we consider any path $u = \bu_0, \bu_1, \dots, \bu_s$ down the tree where $\bu_{\ell}$ is the child of $\bu_{\ell-1}$, and we have $\bu_{s}.\radius \leq u.\radius / 2$. Suppose that we set $s$ such that the above event occurs with probability at least $1 - 1 / o(\log \Phi)$ (so we can union bound over $O(\log \Phi)$ such events). Then, every $s$ levels of the tree, the values of $\bu.\radius$ decrease by a factor of $2$, and by the definition of the aspect ratio $\Phi$, there are at most $O(s\log \Phi)$ recursive levels before Line~\ref{ln:single-point} always returns.

We now show that we may set $s = \Theta(\sqrt{d}\log(n \log \Phi))$ in the above argument. Consider fixing any two points $p_1, p_2 \in P_u$. Let $\bT$ denote the random sub-tree rooted $u$ which recursively contains children $\bv$ generated  from calls $\Preprocess(P_{\bv})$ in Line~\ref{ln:child} where $p_1, p_2 \in P_{\bv}$. We notice that $\bv.\radius$ is always at most $2 \cdot u.\radius$ (because $P_{\bv} \subset P_u$ and the factor of $2$ occurs because we may choose a different center of $\MEB(P_u)$ and $\MEB(P_{\bv})$), so that in Line~\ref{ln:sample}, the distribution $\calD$ has $\bv.\sfR \leq 4 u.\radius$. Then, the number of nodes at depth $\ell$ of $\bT$ is a (simple) branching process (known as a Galton-Watson process), where in a node which contains $p_1$ and $p_2$ undergoes the following process:

\begin{itemize}
\item By Lemma~\ref{lem:hash}, the fact $R \leq 4 u.\radius$, and $\| p_1 - p_2 \|_2 \geq u.\radius / 2$, we have that with probability at least $c_1 / \sqrt{d}$, the hash $h_{\bc,\boldr}$ satisfies $h_{\bc,\boldr}(p_1 - \bv.\cen), h(p_2 - \bv.\cen) \in \{0 ,1\}$ and $h_{\bc,\boldr}(p_1 - \bv.\cen) \neq h(p_2 - \bv.\cen)$. Thus, the node $\bv$ has \emph{no descendants in} $\bT$ (since we've separated $p_1, p_2$).
\item If $h_{\bc, \boldr}(p_1 - \bv.\cen) = h_{\bc, \boldr}(p_2 - \bv.\cen) = *$, then $p_1$ and $p_2$ are included in both calls to $\Preprocess(P_{\bv} \setminus \Capt(0))$ and $\Preprocess(P_{\bv} \setminus \Capt(1))$ in Line~\ref{ln:child}. In this case, $\bv$ has \emph{two descendants in} $\bT$, but by Lemma~\ref{lem:hash}, this occurs with probability at most $\alpha / \sqrt{d}$. 
\item Finally, if neither of the above two holds, then $\bv$ has one descendant in $\bT$, since $p_1$ and $p_2$ are not separated, but they are also not both replicated.
\end{itemize}
Suppose that we let $d(s)$ denote the probability that a fixed node $u$ has some ancestor at depth $s$ in $\bT$. Then, we may give a recursive upper bound for $d(s)$, since either (i) a node has one child (with probability at most $1 - c_1 / \sqrt{d}$) and that child must have some ancestor at depth $s-1$ in $\bT$, or (ii) a node has two children (with probability at most $\alpha / \sqrt{d}$), and then at least one of them must have an ancestor at depth $s-1$. Thus,
\begin{align*}
d(s) \leq \left(1 - \frac{c_1}{\sqrt{d}} \right) \cdot d(s-1) + \frac{\alpha}{\sqrt{d}} \left(1 - \left(1 - d(s-1)\right)^2 \right) \leq \left(1 - \frac{c_1}{\sqrt{d}} + \frac{2\alpha}{\sqrt{d}}\right) \cdot d(s-1),
\end{align*}
implying
\begin{align*}
\Prx\left[ \bT \text{ has depth $s$}\right] \leq \left(1 - \frac{c_1}{\sqrt{d}} + \frac{2\alpha}{\sqrt{d}}\right)^{s-1}.
\end{align*}
Since $\alpha \leq c_1 / 4$ and we let $s = \Theta(\sqrt{d} \log(n \log \Phi))$, this probability is at most $(n \log \Phi)^{-10}$, so we can union bound over $n^2$ potential pairs $p_1, p_2$ at distance $\| p_1 - p_2\|_2 \geq u.\radius / 2$, and we get the desired bound.
\end{proof}

\begin{lem}
With probability at least $1 - o(1)$, an execution of $\emph{\Preprocess}(P)$ results in a rooted tree of $O(n \sqrt{d} \log(n \log \Phi) \log \Phi)$ nodes.
\end{lem}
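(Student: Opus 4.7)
The plan is to bound the total number of tree nodes by $\sum_v |P_v|$, which is valid because every node has $|P_v|\geq 1$. Writing $S_\ell$ for the sum of $|P_v|$ over nodes at depth $\ell$, I will argue that with probability $1-o(1)$, every $S_\ell$ is at most $2n$; this then gives that the total number of nodes is at most $\sum_{\ell} S_\ell \leq 2n\cdot D_{\max}$, where $D_{\max} = \Theta(\sqrt{d}\log(n\log\Phi)\log\Phi)$ is the depth bound from the preceding lemma.

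For each node $v$, let $|C_v|$ count the points $p\in P_v$ for which the hash $h_{v.\sfc,v.\sfr}(p-v.\cen)=*$; these are precisely the points that get replicated into both $v.\Child(0)$ and $v.\Child(1)$ in Algorithm~\ref{alg:preprocess}. Direct inspection of the algorithm shows $S_{\ell+1}\leq S_\ell+\sum_{v\in \text{level }\ell}|C_v|$, since each non-duplicated point contributes one copy to the next level and each duplicated point contributes two. Hence $S_\ell\leq n+|C|$ for every $\ell$, where $|C|:=\sum_v |C_v|$ is the total number of duplications. Therefore it suffices to show $|C|\leq n$ with probability $1-o(1)$.

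By the ``Avoid Boundary'' guarantee of Lemma~\ref{lem:hash}, $\E[|C_v|\mid P_v]\leq (\alpha/\sqrt d)|P_v|$, so $\E[S_{\ell+1}]\leq(1+\alpha/\sqrt d)\E[S_\ell]$ and inductively $\E[S_\ell]\leq(1+\alpha/\sqrt d)^\ell n$. Let $E_1$ denote the event that the tree depth is at most $D_{\max}$; by the preceding lemma $\Pr[\neg E_1]=o(1)$, and on $E_1$ the sum $|C|$ collapses to $\sum_{\ell=0}^{D_{\max}}|C_\ell|$, where $|C_\ell|=\sum_{v\in\text{level }\ell}|C_v|$. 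Taking expectations,
\[ \E\bigl[|C|\cdot\ind(E_1)\bigr] \leq \frac{\alpha}{\sqrt d}\sum_{\ell=0}^{D_{\max}}\E[S_\ell] \leq n\bigl((1+\alpha/\sqrt d)^{D_{\max}+1}-1\bigr). \]
The algorithm's choice $\alpha=c_1/(4\log n\log\Phi)^2$ is calibrated precisely so that $\alpha D_{\max}/\sqrt d=O(1/(\log n\log\Phi))$; using $e^x-1\leq 2x$ for small $x$, this upper bound simplifies to $O(n/(\log n\log\Phi))=o(n)$.

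Markov's inequality then yields $\Pr[|C|\cdot\ind(E_1)>n]=o(1)$, and a union bound with $\Pr[\neg E_1]=o(1)$ shows that with probability $1-o(1)$ both $E_1$ holds and $|C|\leq n$, in which case $\max_\ell S_\ell\leq 2n$ and the total number of nodes is $\sum_\ell S_\ell\leq 2n\cdot D_{\max}=O(n\sqrt d\log(n\log\Phi)\log\Phi)$. The main technical point will be to focus concentration on the small quantity $|C|$ (expected value $o(n)$) rather than on $\sum_v|P_v|=\Theta(nD_{\max})$; a naive Markov bound on $\sum_v|P_v|$ would give only constant success probability, and only by isolating the duplication excess does one recover the $1-o(1)$ guarantee.
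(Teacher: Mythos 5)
Your proof is correct, and while it rests on the same two ingredients as the paper's argument --- the depth bound $D_{\max}=\Theta(\sqrt d\,\log(n\log\Phi)\log\Phi)$ from the preceding lemma and the $\alpha/\sqrt d$ \emph{Avoid Boundary} probability from Lemma~\ref{lem:hash} --- your bookkeeping is genuinely different. The paper fixes a point $p$ and counts the leaves containing $p$ via a per-point branching process with growth factor $(1+\alpha/\sqrt d)$ per level, bounds the expected number of leaves by $n\cdot 2^{\alpha t/\sqrt d}$, and multiplies by the depth; you instead track the aggregate level mass $S_\ell=\sum_v |P_v|$ together with the total duplication count $|C|$, and apply Markov to $|C|$ alone. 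The underlying computation is the same geometric growth $(1+\alpha/\sqrt d)^{D_{\max}}=1+o(1)$, but your route buys a cleaner final step: applying Markov to a quantity whose expectation is merely $O(n)$ (which is how the paper phrases it) yields only constant failure probability, whereas isolating the duplication excess, whose expectation restricted to the bounded-depth event is $o(n)$, is exactly what is needed to reach $1-o(1)$; the paper's terse proof implicitly relies on the same excess-over-$n$ trick (its per-point leaf counts $X_p$ satisfy $X_p\geq 1$ and $\E[X_p]\leq 1+o(1)$, so Markov must really be applied to $\sum_p (X_p-1)$). One tiny imprecision on your side: with $\alpha=c_1/(4\log n\log\Phi)^2$ one gets $\alpha D_{\max}/\sqrt d=O\bigl((\log n+\log\log\Phi)/(\log^2 n\,\log\Phi)\bigr)$, which equals $O(1/(\log n\log\Phi))$ only when $\log\log\Phi=O(\log n)$; in every regime it is still $o(1)$, which is all your argument actually uses, so the conclusion stands.
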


\begin{proof}
Suppose that we consider executing $\Preprocess(P)$ while allowing at most $t$ depths of recursion. Then, we bound the expected number of leaf nodes resulting from such a tree, and the total number of nodes will be at most $t$ times that. Since each leaf node contains at least one point, we bound the total number of leaf nodes containing any one point $p \in P$. Any node $u$ generated from call $\Preprocess(P_u)$ where $p \in P_u$ has that either (i) $u$ is a leaf, (ii) the point $p$ satisfies $h_{\bc, \boldr}(p - u.\cen) \in \{0, 1\}$ in Line~\ref{ln:sample} and $p$ lies in a single child of $u$, or (iii) $p$ satisfies $h_{\bc, \boldr}(p - u.\cen) = *$ in Line~\ref{ln:sample}, and then $p$ lies in both children of $u$. Again, this length-$t$ branching process is a simple process, and it is not hard to see that the expected number of leaves containing $p$ within depth $t$ of $u$ is at most $2^{\alpha t / \sqrt{d}}$, and thus the total expected number of leaves at most $n 2^{\alpha t / \sqrt{d}}$. 

By the Markov inequality at a union bound over the depth of $\Preprocess(P, \eps, \sigma)$, it suffices to consider $t = \Theta(\sqrt{d} \log(n\log \Phi) \log \Phi)$, and since $\alpha \leq c_1 / (4 \log n \log \Phi)^2$, $2^{\alpha t / \sqrt{d}} = O(1)$ and $t \cdot O(n)$ gives us the desired bound.
\end{proof}


\begin{lem}
For any fixed query $q \in \R^d$, the following occurs with high probability. Suppose $\bu$ is the output of an execution of $\emph{\Preprocess}(P)$, and $\boldeta \in \R$ is the output of $\emph{\Query}(q, \bu)$. Then, we have:
\begin{align*}
(1-\eps) \sum_{p \in P} \sfK(p, q) - \sigma |P| \leq \boldeta \leq (1+\eps) \sum_{p \in P} \sfK(p, q) + \sigma |P|.
\end{align*}
\end{lem}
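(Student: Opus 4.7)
The plan is to argue correctness by tracing the unique path that $\Query(q, \bu)$ follows down the tree built by $\Preprocess(P)$, and to show that along this path each point of $P$ is accounted for exactly once. Consequently the per-node estimation errors combine additively into a clean $(1\pm\eps)$-multiplicative and $2\xi|P|$-additive bound, rather than accumulating multiplicatively with depth.

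First I would verify the high-probability event that the hash function never returns $*$ on $q$ at any node the query visits. By Lemma~\ref{lem:hash}, each hash evaluation on $q$ returns $*$ with probability at most $\alpha/\sqrt{d}$, and by the depth bound $O(\sqrt{d}\log(n\log\Phi)\log\Phi)$ established above, together with the setting $\alpha = c_1/(4\log n\log\Phi)^2$, a union bound over all levels of the query path gives failure probability $o(1)$. In the same union bound I would include the $2^{-\Omega(d)}$ events from Lemma~\ref{lem:hash} needed to ensure the well-separation preconditions of Lemma~\ref{lem:coreset} for each invocation of $\ProcessCaptured$ on the path, and the $1/\poly{nd\Phi/\eps}$ failure events of $\ProcessCaptured$, $\QueryCaptured$, and $\QueryFar$ across all $O(n\sqrt{d}\log(n\log\Phi)\log\Phi)$ nodes of the tree.

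Conditioned on this event, I would induct on the query path. At an internal node $u$ with $\|q-u.\cen\|_2 \le u.\sfR$, let $b = h_{u.\sfc,u.\sfr}(q-u.\cen)\in\{0,1\}$; by construction $q$ lies in the shell on the side opposite to $\Capt_u(b) = \{p \in P_u : h_{u.\sfc,u.\sfr}(p-u.\cen)=1-b\}$, so the preconditions of Lemma~\ref{lem:coreset} hold with parameters $u.\rrmin, u.\rrin, u.\rrout, u.\rrmax$, and the $\QueryCaptured$ call approximates $\sum_{p\in\Capt_u(b)}\sfK(p,q)$ within $(1\pm\eps)$-multiplicative and $2\xi|\Capt_u(b)|$-additive error. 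The recursion then descends into $u.\Child(b) = \Preprocess(P_u\setminus\Capt_u(b))$, which by the inductive hypothesis estimates $\sum_{p\in P_u\setminus\Capt_u(b)}\sfK(p,q)$ with the analogous guarantee. If instead $\|q-u.\cen\|_2 > u.\sfR$, the call $\QueryFar(q, u.\FarDS, u.\cen, u.\sfR)$ handles the entire $P_u$ with the same type of error guarantee by the lemma of Section~\ref{sec:process-and-query-far}.

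Finally, the bookkeeping: if $u_0 = \bu, u_1,\dots, u_k$ is the query path and at $u_k$ the far-case branch fires on $P_{u_k}$, then the sets $\Capt_{u_0}(b_0), \dots, \Capt_{u_{k-1}}(b_{k-1}), P_{u_k}$ form a disjoint partition of $P$. Summing the per-node estimates yields a total multiplicative error of $\eps \sum_{p\in P} \sfK(p,q)$ and a total additive error of $2\xi|P|$, as claimed. The main obstacle is verifying the disjoint-and-exhaustive partition structure and checking that every invocation of $\ProcessCaptured$/$\QueryCaptured$ along the query path indeed meets the well-separation hypothesis of Lemma~\ref{lem:coreset} (in particular, that $\|p - u.\newcen\|_2 \ge u.\rrmin$ for all $p \in \Capt_u(1)$ and that $u.\newcen$ is within $u.\rrmax$ of every point in $\Capt_u(0)$), which follows from the third item of Lemma~\ref{lem:hash} and the concentration bound on $u.\sfc$ after union-bounding over the $n$ points and the polynomially many nodes.
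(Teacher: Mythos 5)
Your proposal is correct and follows essentially the same route as the paper's proof: union-bound over the query path (using the depth bound and the choice of $\alpha$) the events that the hash returns $*$ on $q$ and that the separation preconditions of Lemma~\ref{lem:coreset} fail, then observe that the captured sets handled by the $\QueryCaptured$ calls along the path are disjoint, so the multiplicative errors stay at $(1\pm\eps)$ and the additive errors sum to at most the claimed bound. Your write-up is in fact slightly more explicit than the paper about the disjoint-partition bookkeeping, but the argument is the same.
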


\begin{proof}
First, the query $q$ generates a root-to-leaf path given by the executions $\Query(q, \bv)$. In order for the algorithm to avoid outputting ``fail,'' it cannot be the case $h_{u.\sfc, u.\sfr}(q - u.\cen) = *$.  For any fixed node $v$, the probability that this occurs is always at most $\alpha / \sqrt{d}$. Therefore, we note that the probability over the execution of $\Preprocess(P)$ that $\Query(q, \bu)$ outputs ``fail'' sometime within the first $o(\sqrt{d} / \alpha)$ recursive calls is at most $o(1)$. At the same time, the depth of the tree is at most $O(\sqrt{d} \log(n \log \Phi) \log\Phi) = o(\sqrt{d} / \alpha)$ with high probability. Taking a union bound, we have that we reach the end of the tree and we do not output ``fail'' with high probability.

The other necessary event to check is that, if $v$ is a node on the root-to-leaf path in an execution of $\Query(q, u)$, and the call which initialized $v$ was $\Preprocess(P_v)$, then we always have every $p \in P_v$ satisfies $\| p - v.\newcen\|_2 \geq v.\sfR / 100$ and $\| q - v.\newcen\|_2 \geq v.\sfR / 100$. For any fixed node, the probability that this occurs is at least $1 - (n+1) 2^{-\Omega(d)}$, so we can again union bound over the first $O(\sqrt{d} \log(n\Phi) \log \Phi)$ levels of the tree when $d = \omega(\log n \log\log \Phi)$.

Finally, we note that in any execution of $\Query(q, \bu)$, the output is given by a sum of at most $|P|$ executions of $\QueryCaptured$, so that the additive errors $\sigma$ accumulate to at most $\sigma |P|$.
\end{proof}

\bibliographystyle{alpha} 
\bibliography{waingarten}

\end{document}